\newcommand\be{\begin{equation}}
\newcommand\en{\end{equation}}
\newcommand\bes{\begin{equation*}}
\newcommand\ees{\end{equation*}}
\newcommand\bea{\begin{eqnarray}}
\newcommand\eea{\end{eqnarray}}
\newcommand\bi{\begin{itemize}}
\newcommand\ei{\end{itemize}}
\newcommand\ben{\begin{enumerate}}
\newcommand\een{\end{enumerate}}
\theoremstyle{definition}
\newtheorem{thm}{Theorem}
\newtheorem{lem}{Lemma}
\newtheorem{cor}[lem]{Corollary}
\newtheorem{prop}[lem]{Proposition}
\newtheorem{defi}[lem]{Definition}
\newtheorem{rek}[lem]{Remark}
\newtheorem*{clm}{Claim}
\newcommand{\R}{\ensuremath{\mathbb{R}}}
\newcommand{\Z}{\ensuremath{\mathbb{Z}}}
\newcommand{\N}{\mathbb{N}}
\newcommand{\matthree}[9]
{\left(\begin{array}{ccc}
                        #1  & #2 & #3  \\
                        #4 &  #5 & #6 \\
                        #7 &  #8 & #9
                          \end{array}\right) }
\newcommand{\la}{\lambda}
\newcommand{\La}{\Lambda}
\newcommand{\Lst}{\; \Big| \;}          
\newcommand{\ve}{\mathbf}
\newcommand{\p}{\partial}
\newcommand{\phiz}{\phi^0}
\newcommand{\psiz}{\psi^0}
\newcommand{\Proj}{\mathcal{P}}
\newcommand{\Aonel}{\mathcal{A}_1^\lambda}
\newcommand{\Aonelz}{\mathcal{A}_1^{\lambda_0}}
\newcommand{\Aonelnm}{\mathcal{A}_1^{\lambda_{n}}}
\newcommand{\Aonelk}{\mathcal{A}_1^{\lanmk}}
\newcommand{\Atwol}{\mathcal{A}_2^\lambda}
\newcommand{\Atwolnm}{\mathcal{A}_2^{\lambda_{n}}}
\newcommand{\Atwolk}{\mathcal{A}_2^{\lanmk}}
\newcommand{\Atwolz}{\mathcal{A}_2^{\lambda_{0}}}
\newcommand{\Bl}{\mathcal{B}^\lambda}
\newcommand{\Blz}{\mathcal{B}^{\lambda_0}}
\newcommand{\Bs}{\mathcal{B}^\sigma}
\newcommand{\Blnm}{\mathcal{B}^{\lambda_{n}}}
\newcommand{\Blk}{\mathcal{B}^{\lanmk}}
\newcommand{\Cl}{\mathcal{C}^\lambda}
\newcommand{\Clnm}{\mathcal{C}^{\lambda_{n}}}
\newcommand{\Clk}{\mathcal{C}^{\lanmk}}
\newcommand{\Clz}{\mathcal{C}^{\lambda_{0}}}
\newcommand{\Dl}{\mathcal{D}^\lambda}
\newcommand{\Dlnm}{\mathcal{D}^{\lambda_{n}}}
\newcommand{\Dlk}{\mathcal{D}^{\lanmk}}
\newcommand{\Dlz}{\mathcal{D}^{\lambda_{0}}}
\newcommand{\Aonez}{\mathcal{A}_1^0}
\newcommand{\Ail}{\mathcal{A}_i^{\lambda}}
\newcommand{\Ais}{\mathcal{A}_i^{\sigma}}
\newcommand{\Aoneze}{\mathcal{A}_1^{0,\epsilon}}
\newcommand{\Aones}{\mathcal{A}_1^\sigma}
\newcommand{\Atwoz}{\mathcal{A}_2^0}
\newcommand{\Atwoze}{\mathcal{A}_2^{0,\epsilon}}
\newcommand{\Atwozz}{\mathcal{A}_2^{0,0}}
\newcommand{\Lz}{\mathcal{L}^0}
\newcommand{\Ql}{\mathcal{Q}^\lambda}
\newcommand{\Qlnm}{\mathcal{Q}^{\lambda_{n}}}
\newcommand{\Qs}{\mathcal{Q}^\sigma}
\newcommand{\Mnl}{\mathcal{M}^\lambda_{n}}
\newcommand{\Mnz}{\mathcal{M}^0_{n}}
\newcommand{\Mns}{\mathcal{M}^\sigma_{n}}
\newcommand{\Mnmlnm}{\mathcal{M}^{\lambda_{n}}_{n}}
\newcommand{\Mklk}{\mathcal{M}^{\lanmk}_{n}}
\newcommand{\Ml}{\mathcal{M}^\lambda}
\newcommand{\Ms}{\mathcal{M}^\sigma}
\newcommand{\Mz}{\mathcal{M}^0}
\newcommand{\Mlz}{\mathcal{M}^{\lambda_0}}
\newcommand{\domain}{[0,P]\times\R^2}
\newcommand{\fnm}{\phi_{n}}
\newcommand{\fk}{\phi_{n}}
\newcommand{\pnm}{\psi_{n}}
\newcommand{\bnm}{b_{n}}
\newcommand{\hpt}{H_P^2}
\newcommand{\hptz}{H_{P,0}^2}
\newcommand{\lpt}{L_P^2}
\newcommand{\lptz}{L_{P,0}^2}
\newcommand{\lwt}{L_w^2}
\newcommand{\psiez}{\psi_\epsilon^0}
\newcommand{\Tpsiez}{T_{\psi_\epsilon^0}}
\newcommand{\lanmk}{\la_{n}}
\newcommand{\ula}{\la^*}
\newcommand{\sigs}{\sigma^*}
\numberwithin{lem}{section}
\numberwithin{equation}{section}
\begin{document}

\title{Instability of Nonmonotone Magnetic Equilibria of the Relativistic Vlasov-Maxwell System}
\author{Jonathan Ben-Artzi}
\maketitle

\begin{abstract}
We consider the question of linear instability of an equilibrium of the Relativistic Vlasov-Maxwell (RVM) System that has a strong magnetic field. Standard instability results deal with systems where there are fewer particles with higher energies. In this paper we extend those results to the class of equilibria for which the number of particles does not depend monotonically on the energy. Without the standard sign assumptions, the analysis becomes significantly more involved.
\end{abstract}

\section{Introduction}

\subsection{The Relativistic Vlasov-Maxwell System}
In this paper we consider the linear stability of a super-heated plasma. The plasma is assumed to have low density, and thus collisions between particles may be ignored. The behavior of such a system is governed by the Relativistic Vlasov-Maxwell (RVM) System of Equations
	\begin{subequations}\label{mainequations}
	\begin{equation}\label{mainvlasov}
	\p_tf^{\pm}+\hat{v}\cdot\nabla_xf^{\pm}\pm(\ve{E}+\ve{E}^{ext}+\hat{v}\times(\ve{B}+\ve{B}^{ext}))\cdot\nabla_vf^{\pm}=0,
	\end{equation}
	\begin{equation}
	\p_t\ve{E}=\nabla\times\ve{B}-\ve{j},
	\hspace{10pt}
	\nabla\cdot\ve{E}=\rho,
	\hspace{10pt}
	\rho=\int(f^+-f^-)\;dv,
	\end{equation}
	\begin{equation}
	\p_t\ve{B}=-\nabla\times\ve{E},
	\hspace{10pt}
	\nabla\cdot\ve{B}=0,
	\hspace{10pt}
	\ve{j}=\int\hat{v}\;(f^+-f^-)\;dv,
	\end{equation}
	\end{subequations}
where $v$ is the momentum, $\hat{v}=v/\sqrt{1+|v|^2}$ is the velocity and all physical constants (like the speed of light $c$ and the mass of the particles $m$) are set to be $1$. The superscripts $\pm$ indicate the two species -- electrons and ions. The transport equation \eqref{mainvlasov} is called the Vlasov Equation, and it is coupled with Maxwell's Equations. $\ve{E}(t,x)$ and $\ve{B}(t,x)$ are the electric and magnetic fields, $f^{\pm}(t,x,v)\geq0$ are the electron and ion distribution functions, and $\ve{E}^{ext}$ and $\ve{B}^{ext}$ are external electric and magnetic fields. In addition, $\rho(x)$ is the net charge density at the point $x$, and $\ve{j}(x)$ is the current density at the point $x$. Particle charges are always $+1$ or $-1$. The assumption that all masses are equal is physically relevant: It often occurs that interstellar plasma is made up of dust particles of uniform mass charged positively and negatively.

The subject of this paper is the ``purely magnetic'' case, where the electric field is assumed to be negligible compared to the magnetic field, at least on the typical scale of the system. Together with a certain symmetry assumption on the two species, this assumption makes certain aspects of the analysis simpler, while keeping most of the interesting physical aspects. This result is outlined in Section 9 of \cite{rvm2}, though in \cite{rvm2} the authors fail to consider all possible cases, and, therefore, their proof outline is incomplete.

Our equilibrium is represented in terms of two invariants of the particles -- the energy $e$ and the momentum $p$. We explore the question of linear instability, and give concrete criteria for instability in terms of the spectral properties of certain Schr\"{o}dinger operators acting on periodic functions of $x$ alone, not $v$.

\subsection{The $1\frac{1}{2}$ Dimensional Case}
For simplicity, we take the lowest dimensional system which has a nontrivial magnetic field and which has many physical applications: The so-called $1\frac{1}{2}$ dimensional case. In this setting, we have one spatial dimension and a two-dimensional momentum space. The single spatial variable $x$ corresponds to $v_1$, and the additional velocity dimension is denoted by $v_2$. We write $v=(v_1,v_2)$. It turns out that this dimensional setting is actually quite significant even for physicists: When one considers a tokamak (which typically has cylindrical symmetry), then locally, in a small region, the symmetries that one observes are precisely those described here.

The notation in the $1\frac{1}{2}$ dimensional case is as follows: We let $f^\pm(t,x,v)$ be the ion and electron distribution functions, $\ve{E}(t,x)=\left(E_1(t,x),E_2(t,x),0\right)$ and $\ve{B}(t,x)=(0,0,B(t,x))$ be the electric and magnetic fields. In addition, we define the electric and magnetic potentials $\phi$ and $\psi$, which satisfy
	\begin{equation}\label{potentials}
	\p_x\phi
	=
	-E_1
	\hspace{2 cm}
	\p_x\psi
	=
	B.
	\end{equation}

We assume the existence of an equilibrium $f^{0,\pm}(x,v)$ which is a solution of RVM. By Jeans' Theorem (cf \cite{dendy}) this equilibrium can be represented in the coordinates
	$$
	e^\pm
	=
	\left<v\right>\pm\phi^0(x)\hspace{1cm}
	p^\pm
	=
	v_2\pm\psi^0(x)
	$$
as $f^{0,\pm}(x,v)=\mu^\pm(e^\pm,p^\pm)$, where $\phi^0$ and $\psi^0$ are the electric and magnetic potentials, satisfying $E_1^0=-\p_x\phi^0$ and $B^0=\p_x\psi^0$, with $E^0_1$ and $B^0$ being the equilibrium electric and magnetic fields. In addition $E_2^0\equiv0$. We note that the purely magnetic assumption implies that $\phi^0\equiv0$, and, hence, $e^+=e^-$ can simply be denoted by $e$. Henceforth it will be understood that $\mu^\pm$ are evaluated at $e, p^\pm$, so we will simply write $\mu^\pm(e,p)$. In this paper, we consider the ``\emph{nonmonotone}'' case; by that we simply mean that
	\begin{equation}
	\mu^\pm_e:=\frac{\p\mu^\pm}{\p e}\nless0
	\end{equation}
on some subset of the set $\{\mu^\pm(e,p)>0\}$. Here, ``$\mu_e^\pm$" means ``the derivative of $\mu^\pm$ with respect to the first component evaluated at $(e,p^\pm)$." Similarly, ``$\mu_p^\pm$" is the derivative with respect to the second component. By ``monotone", we mean that $\mu^\pm_e<0$ on the set $\{\mu^\pm>0\}$. Roughly speaking, the coordinates $e$ and $p$ should be understood to be energy and momentum respectively.

\subsection{Main Results}\label{mainres}
We assume for simplicity that the equilibrium has some given period $P$. We define two operators acting on functions of $x$, whose properties will be rigorously treated later, as
	\begin{eqnarray*}
	\Aonez h&=&-\p_x^2h-\left(\sum_\pm\int\mu^\pm_e\;dv\right)h+\sum_\pm\int\mu^\pm_e\Proj^\pm h\;dv\\
	\Atwoz h&=&-\p_x^2h-\left(\sum_\pm\int\hat{v}_2\mu^\pm_p\;dv\right)h-\sum_\pm\int\mu^\pm_e\hat{v}_2\Proj^\pm (\hat{v}_2h)\;dv,
	\end{eqnarray*}
where $\Proj^\pm$ are projection operators onto some subspaces of a certain Hilbert space, also rigorously defined later. The operators $\Aonez$ and $\Atwoz$ have as their domain the space
	$$
	\hpt
	=
	\left\{h\text{ is }P\text{ periodic on }\R\text{ and }h\in H^2(0,P)\right\}
	$$
where $H^2(0,P)$ is the usual Sobolev space of functions on $(0,P)$ whose first two derivatives are square integrable. We also define the number $l^0$ as:
	$$
	l^0=\frac{1}{P}\sum_\pm\int_0^P\int\hat{v}_1\mu^\pm_e\Proj^\pm\left(\hat{v}_1\right)\;dv\;dx.
	$$
In order to properly define function spaces that include functions that do not necessarily decay at infinity, we define a weight $w$ that has the form
	\begin{equation}\label{weight}
	w
	=
	c(1+|e|)^{-\alpha}
	\end{equation}
for some $\alpha>2$ and $c>0$, where $e$ is the particle energy as defined above. We require $\mu^\pm\in C^1$ to satisfy
	\begin{equation}
	(|\mu^\pm_e|+|\mu^\pm_p|)(e,p)
	\leq
	w(e)\label{weight}
	\end{equation}
so that $\int(|\mu^\pm_e|+|\mu^\pm_p|)\;dv$ is finite. Obviously, we require $\mu^\pm(e,p)\geq0$.

As previously mentioned, we assume that the equilibrium is \emph{purely magnetic} -- that is, we assume that it is arranged in such a way so that there is no electric field, only a magnetic field -- and satisfies a symmetry property that ensures that $\rho^0=0$:
	\begin{equation}\label{pur-mag}
	\phi^0\equiv0
	\hspace{.5cm}\text{and}\hspace{.5cm}
	\mu^+(e,p)=\mu^-(e,-p)
	\end{equation}
Lin and Strauss prove the existence of such equilibria in the appendix of \cite{rvm1}. The assumption $\phi^0\equiv0$ is physical: In many physical systems the equilibrium state has negligible electric fields which are only significant on small scales, while the magnetic field is significant on the typical scale of the system.

In this paper, we denote
	$$
	neg(\mathcal{F})
	=
	\left\{
	\text{the number of negative eigenvalues (counting multiplicity) of the operator }\mathcal{F}
	\right\}.
	$$
Similarly, $pos(\mathcal{F})$ denotes the number of positive eigenvalues. Later we will show that $\Aonez$ and $\Atwoz$ have discrete spectra. For a real number $a$, $neg(a)$ is $1$ if $a<0$ and $0$ otherwise.

\begin{thm}\label{mainthm}
Let $f^{0,\pm}(x,v)=\mu^\pm(e,p)$ be a periodic equilibrium satisfying \eqref{weight} and \eqref{pur-mag}. Assume that the null space of $\Aonez$ consists of the constant functions and that $l^0\neq 0$. Then the equilibrium is spectrally unstable if
	$$neg\left(\Atwoz\right)>neg\left(\Aonez\right)+neg(-l^0).$$
\end{thm}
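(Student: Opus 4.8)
The plan is a continuity argument in the growth rate $\la>0$, extending the Lin--Strauss treatment of the monotone case (sketched, incompletely, in \cite{rvm2}); the essential new difficulty is the absence of the coercivity that $\mu_e^\pm<0$ would supply. First I would linearize \eqref{mainequations} about $f^{0,\pm}=\mu^\pm(e,p)$ and seek a purely growing mode of the linearized system, i.e.\ a solution of the form $e^{\la t}(f^{1,\pm},\ve E^1,\ve B^1)$ with $\la>0$ real. Since $\phi^0\equiv0$ the characteristics of the equilibrium transport operator $\mathcal D^\pm=\hat v\cdot\nabla_x\pm(\hat v\times\ve B^0)\cdot\nabla_v$ are \emph{periodic} orbits (confined by the invariants $e$, $p^\pm$), so $\mathcal D^\pm$ is skew-adjoint on the $w$-weighted phase-space $L^2$ and $(\la+\mathcal D^\pm)^{-1}$ is bounded, by $\la^{-1}$, for every $\la>0$. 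Solving the linearized Vlasov equation as $f^{1,\pm}=\mp(\la+\mathcal D^\pm)^{-1}\big[(\ve E^1+\hat v\times\ve B^1)\cdot\nabla_v\mu^\pm\big]$ and inserting this into Maxwell's equations, I would reduce the existence of a growing mode to the singularity, for some $\la>0$, of a self-adjoint operator $\Ml$ with discrete spectrum on (mean-zero) $\hpt\times\hpt\times\C$ whose unknowns are the perturbed potentials $\phi^1(x)$, $\psi^1(x)$ and the $x$-independent mode $\bar E_1^1$ of $E_1^1$: the $\phi^1$- and $\psi^1$-diagonal blocks are the $\la$-analogues $\Aonel$, $\Atwol$ of $\Aonez$, $\Atwoz$ obtained by replacing $\Proj^\pm$ with $\mathrm{Re}\,[\la(\la+\mathcal D^\pm)^{-1}]=\la^2(\la^2+|\mathcal D^\pm|^2)^{-1}$ and inserting the appropriate resolvent in the coupling integrals, and the scalar block is $\la-l^\la$ with $l^\la=\tfrac1P\sum_\pm\int_0^P\!\int\hat v_1\mu_e^\pm(\la+\mathcal D^\pm)^{-1}\hat v_1\,dv\,dx$. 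The symmetry $\mu^+(e,p)=\mu^-(e,-p)$ — under which $\mathcal D^-$ is conjugate to $\mathcal D^+$ via $v_2\mapsto -v_2$ — is what forces the sum over species to cancel the non-self-adjoint parts of the resolvents, so that $\Ml$ is genuinely self-adjoint; and $\la\mapsto\Ml$ is norm-resolvent continuous on $(0,\infty)$, so its eigenvalues vary continuously there.

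Next I would analyse the two endpoints. As $\la\to\infty$: $\mathrm{Re}\,[\la(\la+\mathcal D^\pm)^{-1}]\to I$ strongly, the scalar block $\la-l^\la\sim\la\to+\infty$, and, writing $\langle\Atwol\psi,\psi\rangle=\|\p_x\psi\|^2+\int V|\psi|^2+\text{(correction)}$ with $V=\sum_\pm\int\tfrac{1+v_1^2}{\langle v\rangle^3}\mu^\pm\,dv\ge0$ obtained from the identity $\hat v_2\mu_p^\pm=\hat v_2\p_{v_2}\mu^\pm-\hat v_2^2\mu_e^\pm$ and an integration by parts in $v_2$ (and similarly for the $\phi^1$-block), one checks — using that any near-zero eigenfunction is $H^1$-bounded, hence $L^2$-precompact by Rellich, a set on which the strongly convergent resolvent terms converge uniformly — that $\Ml$ has no negative eigenvalues for $\la$ large. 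The delicate endpoint is $\la\to0^+$: there $\mathrm{Re}\,[\la(\la+\mathcal D^\pm)^{-1}]\to\Proj^\pm$ strongly, and the orbit average of $\hat v_1$ vanishes on trapped orbits and equals $\pm P/T^\pm$ on the two passing orbits of a given admissible $(e,p)$; hence $l^\la=\la^{-1}l^0+O(1)$, whereas the \emph{unweighted} average $\sum_\pm\int\mu_e^\pm\Proj^\pm(\hat v_1)\,dv\equiv0$ by the sign cancellation between those two orbits, so the coupling of $\bar E_1^1$ to $\phi^1$ stays bounded. Conjugating $\Ml$ by $\mathrm{diag}(I,I,\la^{1/2})$ (inertia-preserving for $\la>0$) turns the scalar block into $\la^2-l^0-O(\la)$, nonvanishing for small $\la$ precisely because $l^0\ne0$, and the rescaled quadratic forms converge pointwise to those of a block operator $\Mz$ in which $\bar E_1^1$ decouples from $\phi^1$, the $\psi^1$-block is $\Atwoz$, the $\phi^1$-block is $\Aonez$ (invertible on mean-zero functions because $\ker\Aonez=\{\text{constants}\}$), and the scalar block is the number $-l^0$.

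The eigenvalue-count hypothesis enters through an inertia computation for $\Mz$. Eliminating by Schur complements the invertible block $\mathrm{diag}(\Aonez,-l^0)$ — of negative index $neg(\Aonez)+neg(-l^0)$ — produces a reduced self-adjoint operator $\tilde S^0=\Atwoz+R$ on $\psi^1$; the Schur correction $R$ is indefinite (because $\Aonez$ is, in the nonmonotone case), but it is $\le0$ on the orthogonal complement $W$ of a subspace of dimension at most $neg(\Aonez)+neg(-l^0)$, namely the span of the images, under the coupling adjoint, of the negative eigenvectors of the eliminated block, so that
	$$neg(\Mz)\ \ge\ neg(\tilde S^0)\ \ge\ neg\big(\tilde S^0|_W\big)\ \ge\ neg\big(\Atwoz|_W\big)\ \ge\ neg(\Atwoz)-neg(\Aonez)-neg(-l^0)\ \ge\ 1$$
by hypothesis. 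Fixing a unit vector $u_0$ with $\langle\Mz u_0,u_0\rangle<0$ and using the pointwise convergence of the rescaled forms, $\langle\Ml u_0,u_0\rangle<0$ for all small $\la>0$, so the lowest eigenvalue of $\Ml$ is negative near $0$, while by the previous step it is nonnegative for $\la$ large; by continuity it vanishes at some $\ula>0$. At $\la=\ula$ the operator $\Ml$ is singular, which by the reduction of the first step yields a nontrivial purely growing mode $e^{\ula t}(f^{1,\pm},\ve E^1,\ve B^1)$ — i.e.\ spectral instability.

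I expect the principal obstacle to be the $\la\to0^+$ analysis: one must understand the trapped/passing structure of the equilibrium orbits well enough to pin down the expansion $l^\la=\la^{-1}l^0+O(1)$ and the exact vanishing of the unweighted average, and to make the rescaled operators and their quadratic forms converge; it is precisely here that the nonmonotone case forfeits the coercivity that rendered this step essentially automatic in \cite{rvm2}. A secondary difficulty is setting up the reduction of the first step rigorously (self-adjointness of $\Ml$, the function-space bookkeeping, and the fact that singularity of $\Ml$ does reconstruct a genuine growing mode in the correct spaces), together with the Schur/inertia computation for the unbounded — though discrete-spectrum — operator $\Mz$.
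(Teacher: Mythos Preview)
Your continuity strategy is the right idea, but there is a structural error in your setup of $\Ml$ that makes the large-$\la$ step collapse. When the linearized Maxwell system is written as a self-adjoint matrix operator in $(\phi,\psi,b)$, the Gauss and Amp\`ere equations force the diagonal to be $(-\Aonel,\,\Atwol,\,-P(\la^2-l^\la))$, \emph{with a minus sign on} $\Aonel$: if you flip that sign, the off-diagonal entries $\Bl$ and $(\Bl)^*$ no longer sit as mutual adjoints and self-adjointness is lost. Since $-\p_x^2$ therefore appears with opposite signs in the two Schr\"odinger blocks, $\Ml$ has spectrum unbounded both above and below for \emph{every} $\la>0$; in particular it has infinitely many negative eigenvalues for large $\la$, so ``$\Ml$ has no negative eigenvalues for $\la$ large'' is false and there is no lowest eigenvalue to track. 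Your inertia bound $neg(\Mz)\ge1$ is likewise true but vacuous, since $neg(\Mz)=\infty$. (Two smaller points: with $\Ql_\pm=\la(\la+D^\pm)^{-1}\to\Proj^\pm$ strongly, $l^\la\to l^0$ is \emph{bounded}, so no rescaling is needed; and the symmetry $\mu^+(e,p)=\mu^-(e,-p)$ makes $\Bl,\Cl,\Dl\to0$, so $\Mz$ is genuinely block-diagonal and the Schur correction you call $R$ is zero.)

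The paper's remedy is precisely to make the eigenvalue count finite: project onto the span of the first $n$ eigenfunctions of $\Aonez$ and of $\Atwoz$ to obtain a $(2n{+}1)\times(2n{+}1)$ symmetric matrix $\Mnl$. For this truncation one can count: $\Mnz=\mathrm{diag}(-P_n\Aonez P_n^*,\,Q_n\Atwoz Q_n^*,\,Pl^0)$ has exactly $K_n=n-neg(\Aonez)+neg(\Atwoz)+neg(l^0)$ negative eigenvalues, and a contradiction/compactness argument shows $neg(\Mnl)\ge K_n$ persists for $0<\la\le\la^*$ \emph{uniformly in $n$}; on the other side, for $\la\ge\La^*$ one gets exactly $n+1$ negative eigenvalues ($n$ from $-\Aonel<0$ and one from the scalar block). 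The hypothesis $neg(\Atwoz)>neg(\Aonez)+neg(-l^0)$ is exactly $K_n>n+1$, so some eigenvalue of $\Mnl$ crosses $0$ at $\la_n\in[\la^*,\La^*]$. One then extracts $\la_n\to\la_0\in[\la^*,\La^*]$ and shows the corresponding unit kernel elements converge (after $H^1$ bounds and bootstrapping to $\hpt$) to a nontrivial element of $\ker\Mlz$. This passage from the truncated problem back to the full operator --- in particular securing $\la^*>0$ independent of $n$ --- is where the real work lies, and it is the piece entirely missing from your outline.
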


\begin{thm}\label{mainthm2}
Under the additional assumption that the null space of $\Atwoz$ is trivial, the equilibrium is spectrally unstable if
	$$neg\left(\Atwoz\right)\neq neg\left(\Aonez\right)+neg(-l^0).$$
\end{thm}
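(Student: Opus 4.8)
The plan is to run the continuation-in-$\la$ scheme already used for Theorem~\ref{mainthm} and to sharpen its bookkeeping at the degenerate endpoint $\la\to0^+$, which is precisely where the extra hypothesis $\ker\Atwoz=\{0\}$ does its work. Recall that that scheme produces, for every $\la>0$, a bounded self-adjoint operator $\Ml$ — assembled from the $\la$-resolved deformations $\Aonel$, $\Atwol$ of $\Aonez$, $\Atwoz$ and from the scalar $l^\la$, and obtained through finite-dimensional truncations $\Mnml$ — with the property that the linearized RVM system has a solution growing at rate $\la$ if and only if $\ker\Ml\neq\{0\}$. Since $\Aonel$ and $\Atwol$ have discrete spectra accumulating only at $+\infty$, the integer
$$
n(\la)\ :=\ neg(\Atwol)-neg(\Aonel)-neg(-l^\la)
$$
is well defined; the facts that $n$ is constant on each connected component of $\{\la>0:\ker\Ml=\{0\}\}$ and that $n(\la)=0$ for $\la$ large are established in the proof of Theorem~\ref{mainthm}. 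Hence it suffices to prove $n(\la)\neq0$ for $\la$ small: then $n$ must jump somewhere in $(0,\infty)$, forcing some $\ula>0$ with $\ker\mathcal{M}^{\ula}\neq\{0\}$, i.e. a genuine growth rate, and the equilibrium is spectrally unstable. As in Theorem~\ref{mainthm}, the assumptions $l^0\neq0$ and $\ker\Aonez=\{\text{constants}\}$ ensure that no kernel other than the trivial constant ($\phi\mapsto\phi+\text{const}$) mode appears as $\la\to0^+$, so the endpoint value of $n$ is unambiguous — unless $\Ml$ already has a nontrivial kernel at arbitrarily small $\la>0$, in which case we are already done.

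So suppose $\ker\Ml=\{0\}$ throughout some interval $(0,\epsilon)$; then $n(\la)$ equals a constant $n_0$ there, and I would compute it. For $\la\in(0,\epsilon)$ one has $neg(-l^\la)=neg(-l^0)$ because $l^0\neq0$; $neg(\Aonel)=neg(\Aonez)$, once the trivial constant direction is discarded on both sides, because $\ker\Aonez$ is exactly the constants while the remaining low-lying eigenvalues vary continuously; and — the new ingredient — $neg(\Atwol)=neg(\Atwoz)$, because $\ker\Atwoz=\{0\}$ forces $\Atwol$ to have no eigenvalue near $0$ for small $\la$, so its negative count cannot change there. Therefore $n_0=neg(\Atwoz)-neg(\Aonez)-neg(-l^0)$, which is nonzero precisely under the hypothesis $neg(\Atwoz)\neq neg(\Aonez)+neg(-l^0)$; this produces a growth rate $\ula>0$ and completes the proof. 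By contrast, in Theorem~\ref{mainthm}, where $\ker\Atwoz$ may be nontrivial, the $\dim\ker\Atwoz$ eigenvalues of $\Atwol$ clustering near $0$ for small $\la$ can split either way, so one only obtains the one-sided bound $n_0\ge neg(\Atwoz)-neg(\Aonez)-neg(-l^0)$, which is why the strict inequality is needed there.

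The hard part will be the endpoint spectral analysis underlying the three equalities above — that the low-lying spectra of $\Aonel$, $\Atwol$ and the scalar $l^\la$ depend continuously on $\la$ all the way down to $\la=0$. This requires showing that the $\la$-resolved averaging operators entering $\Aonel$, $\Atwol$, $l^\la$ (the $\la>0$ analogues of $\Proj^{\pm}$) converge as $\la\to0^+$ to $\Proj^{\pm}$ — strongly, and in operator norm on the relevant finite-dimensional low-eigenvalue subspaces — so that the truncations $\Mnml$ can be passed to the limit uniformly for $\la$ near $0$; and it requires checking that the internal couplings of $\Ml$ degenerate in a controlled way, the scalar block staying invertible by $l^0\neq0$ and the ``$\psi$'' block staying invertible by $\ker\Atwoz=\{0\}$, so that a block-inertia (Schur-complement) identity delivers the exact value of $n_0$ and not merely an inequality. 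A secondary, and by now routine, step — handled as in the proof of Theorem~\ref{mainthm} — is to verify that a nonzero element of $\ker\mathcal{M}^{\ula}$ for $\ula\in(0,\infty)$ really reconstructs a nontrivial growing mode of the linearized RVM system rather than a trivial one.
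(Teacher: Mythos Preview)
Your high-level intuition is right --- the hypothesis $\ker\Atwoz=\{0\}$ upgrades an inequality to an equality in the small-$\la$ eigenvalue count, so one can detect a crossing whenever the two endpoint counts \emph{differ}, not merely when one exceeds the other. But the invariant you propose to track, $n(\la)=neg(\Atwol)-neg(\Aonel)-neg(-l^\la)$, does not do the job, and the two facts you attribute to the proof of Theorem~\ref{mainthm} are not proved there and are in fact false as stated. First, $n(\la)$ need not be constant on $\{\la>0:\ker\Ml=\{0\}\}$: the diagonal blocks $\Aonel,\Atwol$ can individually acquire a kernel (forcing a jump in $n(\la)$) at values of $\la$ where the off-diagonal couplings $\Bl,\Cl,\Dl$ keep $\Ml$ itself invertible. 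Second, $n(\la)=0$ for large $\la$ would require $l^\la\le0$ eventually, and nothing in the paper controls the sign of $l^\la$ as $\la\to\infty$; only its boundedness is used. You also rely on continuity of $neg(\Aonel),neg(\Atwol)$ down to $\la=0$, but the convergence $\Aonel\to\Aonez$, $\Atwol\to\Atwoz$ is only strong (Lemma~\ref{properties1}), not in norm, so eigenvalue continuity at the endpoint is exactly the delicate point.

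The paper's route avoids all of this by never tracking the individual blocks. It works with the \emph{truncated} matrix $\Mnl$ (a genuine $(2n{+}1)\times(2n{+}1)$ symmetric matrix) and counts $neg(\Mnl)$ directly. For large $\la$ one has $neg(\Mnl)=n+1$ exactly (Lemma~\ref{large}). For small $\la$, Proposition~\ref{lemma2} gives $neg(\Mnl)\ge K_n$; the additional hypothesis $\ker\Atwoz=\{0\}$ enters only to remove the possible slack, yielding $neg(\Mnl)=K_n$ exactly (Proposition~\ref{trivial-atwoz}), because then $\Mnz$ has no zero eigenvalues that could drift either way. If $K_n\neq n+1$ --- equivalently $neg(\Atwoz)\neq neg(\Aonez)+neg(-l^0)$ --- the intermediate value theorem forces $\ker\Mnmlnm\neq\{0\}$ for some $\la_n\in[\la^*,\La^*]$, and one passes to the limit exactly as in Theorem~\ref{mainthm}. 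So the proof of Theorem~\ref{mainthm2} is literally the proof of Theorem~\ref{mainthm} with Proposition~\ref{trivial-atwoz} substituted for Proposition~\ref{lemma2}; no new Schur-complement or block-inertia machinery is needed.
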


By ``spectrally unstable" we mean that the system linearized around the equilibrium solution $\mu^\pm(e,p)$ has a purely growing mode solution of the form
	\begin{equation}\label{purely}
	\left(e^{\la t}f^\pm(x,v),e^{\la t}E(x),e^{\la t}B(x)\right),
	\hspace{.25cm}
	\la>0.
	\end{equation}

The first step in proving these results is linearizing the problem (see \S\ref{equations1}). We then proceed by integrating Maxwell's Equations along the particle trajectories. The immediate benefit of this, is that we lose all dependence on the variable $v$, and can work, for the remainder of the paper, with functions of $x$ alone. These calculations appear in \S\ref{ml}, where operators that are closely related to $\Aonez$ and $\Atwoz$ naturally arise.

The integration along the particle paths leads us to three equations in three unknowns. Moreover, these equations depend upon the exponential growth parameter $\la$: finding a growing mode is equivalent to finding a nontrivial solution to this system for some $0<\la<\infty$. For convenience, we write the system as a matrix of operators, denoted by $\Ml$. We show that this matrix of operators is selfadjoint. Our general method for showing that the system has a solution for some $0<\la<\infty$ is to keep track of the spectrum of $\Ml$ as $\la$ varies, and show that it must cross $0$.

However, the matrix operator $\Ml$ has point spectrum extending both to $-\infty$ and to $+\infty$, due to the appearance of both $\p_x^2$ and $-\p_x^2$. Therefore, to keep track of the spectrum, in \S\ref{truncation} we truncate the matrix operator into a finite-dimensional mapping that we denote by $\Mnl$, where $n$ is the truncation parameter. A first obstacle is to show that for fixed $n$ the spectrum of $\Mnl$ varies in such a way that a nontrivial kernel is guaranteed for some $0<\la_n<\infty$. We do this by showing that $\Mnl$ has a different number of negative eigenvalues for small values of $\la$ and for large values of $\la$ and then applying the intermediate value theorem. The next difficulty lies in showing that the $\la_n$'s are uniformly bounded away from $0$ and from $+\infty$ for all $n$. We show this in Proposition \ref{lemma2} and in Lemma \ref{large}.

In \S\ref{sec5} we recover our original problem, by ``taking the limit $n\to\infty$.'' This process is tricky in itself, since we must define a proper sense in which finite-dimensional mappings ``tend'' to an infinite-dimensional mapping on Hilbert spaces. This is discussed, in part, in Lemma \ref{eq-converge} appearing in the preceding section.

Once we have shown that $\Ml$ has a nontrivial kernel for some $0<\la<\infty$, in \S\ref{construction} we demonstrate that the resulting solution is indeed a solution to the linearized RVM system. \S\ref{the-operators} contains many of the technical details used throughout the paper, whereas in \S\ref{sec-examples} we demonstrate our results in two examples.

The main difficulty of this paper is in the truncation, and in recovering the original problem from the truncated one. Moreover, the truncation is performed by means of a spectral projection onto finitely many eigenvalues of $\Mz$. However, the spectrum of $\Ml$ may change quite dramatically at $\la=0$, since all operators only converge strongly as $\la\to0$, but not in operator norm. Therefore, an additional difficulty is the need to deal with two varying parameters simultaneously -- the growth parameter $\la$ and the dimension of the truncation $n$.

Our result extends the results of \cite{rvm1}, and, more specifically, \cite{rvm2}, where Lin and Strauss established a linear stability criterion for equilibrium solutions that are \emph{strictly monotone}, that is, the number of particles at a given energy strictly decreases with the energy. They define a selfadjoint operator $\Lz$, acting on functions of $x$, with a rather simple spectral structure, and derive stability conditions based on those spectral properties. Namely, they prove that $\Lz\geq0$ implies spectral stability (in \cite{rvm1}), while $\Lz\ngeq0$ implies that there exists a growing mode (in \cite{rvm2}). The monotonicity assumption, which we drop in this paper, proves to be important in \cite{rvm1,rvm2}, as it determines the signs of some of the operators. This implies that $\Lz$ is not relevant to our case. The root cause of most of the analytical difficulties of our paper is the inability to determine the signs of any of the operators.\\

This paper is organized as follows:

\textbf{1. Introduction}.

\textbf{2. Setup}. We present our function spaces and operators.

\textbf{3. Behavior For Small and Large $\la$}. We ``count" eigenvalues for small and large values of $\la$.

\textbf{4. Limit as $n\to\infty$}. We retrieve our original system from the finite-dimensional approximations.

\textbf{5. Construction of a Growing Mode}. We verify that indeed the system retrieved is the right one.

\textbf{6. Examples}. We give two examples that illustrate our results.

\textbf{7. The Operators}. We discuss the main properties of the operators.

\section{Setup}

\subsection{The Function Spaces}
The function spaces we use throughout this paper are as follows: We denote by $\lpt$ and $H_P^k$ the usual Sobolev spaces of $P$-periodic functions on $\R$, that are square integrable on the interval $[0,P]$, as well as their first $k$ derivatives. In addition, we denote:

	\begin{eqnarray*}
	\lptz&=&\left\{h(x)\in \lpt\Lst\int_0^Ph\;dx=0\right\}\\
	H_{P,0}^k&=&\left\{h(x)\in H^k_P\Lst\int_0^Ph\;dx=0\right\}\\
	\lwt&=&\left\{m(x,v)\Lst m \text{ is } P\text{-periodic in }x, \left\|m\right\|_w^2:=\int_0^P\int_{\R^2}|m|^2w\;dv\;dx<\infty\right\}\\
	\end{eqnarray*}
where $w$ is the weight defined in \eqref{weight}. We note that $\lwt$ is independent of the choice of sign $\pm$, since $w$ only depends upon $e$ which is independent of $\pm$.
In addition, we use the following notation:
	
	\begin{align*}
	\|\cdot\|_{\lpt}\text{ and }\left<\cdot,\cdot\right>_{\lpt}  \text{ denote the norm and inner product in }L^2_{P}\text{ respectively}\\
	\|\cdot\|_{w}\text{ and }\left<\cdot,\cdot\right>_{w}  \text{ denote the norm and inner product in }L^2_{w}\text{ respectively}\\
	\end{align*}

\subsection{The Basic Equations}\label{equations1}

%
%

In the $1\frac{1}{2}$ dimensional case the RVM system becomes a system of scalar equations:
		\begin{subequations}\label{main-eq}
		\begin{align}
		&\p_tf^\pm+\hat{v}_1\p_xf^\pm\pm(E_1+\hat{v}_2B)\p_{v_1}f^\pm\pm(E_2-\hat{v}_1B)\p_{v_2}f^\pm
		=
		0						\label{vlasov}\\
		&\p_tE_1
		=
		-j_1						\label{ampere1}\\
		&\p_tE_2+\p_xB
		=
		-j_2						\label{ampere2}\\
		&\p_tB
		=
		-\p_xE_2					\label{faraday}\\
		&\p_xE_1
		=
		n_0+\rho						\label{gauss}
		\end{align}
		\end{subequations}
where $\hat{v}=v/\left<v\right>$ and $\left<v\right>=\sqrt{1+|v|^2}$, and
	\begin{subequations}
	\begin{align}
	\rho
	&=
	\int\left(f^+-f^-\right)\;dv	\\
	j_i
	&=
	\int\hat{v}_i\left(f^+-f^-\right)\;dv,\hspace{.25cm}i=1,2
	\end{align}
	\end{subequations}
and the external fields are replaced by the constant external radiation $n_0$. From our definitions \eqref{potentials} of the electric and magnetic potentials, we get the second order ODEs for $\phi$ and $\psi$:

	\begin{equation}
	\p_x^2\phi
	=
	-\p_xE_1
	=
	-\rho
	=
	-\int\left(f^+-f^-\right)\;dv
	\end{equation}

	\begin{equation}\label{psi-ode}
	\p_x^2\psi
	=
	\p_xB
	=
	-j_2-\p_tE_2
	=
	-\int\hat{v}_2\left(f^+-f^-\right)\;dv-\p_tE_2.\\
	\end{equation}

The linearized Vlasov equation is
	\begin{equation}\label{linvlasov}
	\left(\p_t+D^\pm\right)f^\pm
	=
	\mp\mu^\pm_e\hat{v}_1E_1\pm\mu^\pm_p\hat{v}_1B\mp\left(\mu^\pm_e\hat{v}_2+\mu^\pm_p\right)E_2,
	\end{equation}
where
	$$
	D^\pm
	=
	\hat{v}_1\p_x\pm\left(E_1^0+\hat{v}_2B^0\right)\p_{v_1}\mp\hat{v}_1B^0\p_{v_2}.
	$$

It is important to note that the \emph{purely magnetic} assumption implies that $E_1^0\equiv0$, but the perturbation $E_1$ need not vanish.
Since we are looking for a (purely) growing mode with exponent $\la$ (see \eqref{purely}), we replace everywhere $\p_t$ by $\la$. Thus, our equations for the $P$-periodic electric and magnetic (perturbed) potentials $\phi$ and $\psi$ become
	$$
	B=\p_x\psi
	$$
along with	
	$$
	E_2=-\la\psi,
	$$
which is a result of the integration of \eqref{faraday} and setting the constant of integration to be $0$, and, finally,
	$$
	E_1
	=
	-\p_x\phi-\la b,
	$$
where $b\in\R$ is meant to allow $E_1$ to have a nonzero mean.
%
Replacing $\p_t$ by $\la$, the linearized Vlasov equation becomes
	\begin{equation}\label{vlas}
	\left(\la+D^\pm\right)f^\pm
	=
	\pm\mu^\pm_e\hat{v}_1(\p_x\phi+\la b)\pm\mu^\pm_p\hat{v}_1\p_x\psi\pm\la\left(\mu^\pm_e\hat{v}_2+\mu^\pm_p\right)\psi,
	\end{equation}
along with Maxwell's Equations
	\begin{subequations}\label{max-eq}
	\begin{align}
	&-\la\p_x\phi-\la^2b=\la E_1=-j_1\label{ampere1d1}\\
	&-\la^2\psi+\p_x^2\psi=\la E_2+\p_xB=-j_2\label{ampere1d2}\\
	&-\p_x^2\phi=\p_xE_1=\rho.\label{gauss1d}
	\end{align}
	\end{subequations}

We see that in these equations there is only dependence upon derivatives of the electric potential $\phi$, and never dependence upon $\phi$ itself. Therefore, throughout this paper we let $\phi\in\lptz$. Now we introduce the particle paths $\left(X^\pm(s;x,v),V^\pm(s;x,v)\right)$ of the equilibrium state, governed by the transport operators $D^\pm$, where $-\infty<s<\infty$. They satisfy the system of ordinary differential equations
	\begin{subequations}
	\begin{align}
	\dot{X}^\pm&=\hat{V}^\pm_1\\
	\dot{V}^\pm_1&=
	\pm\hat{V}^\pm_2B^0(X^\pm)\\
	\dot{V}^\pm_2&=\mp\hat{V}^\pm_1B^0(X^\pm),
	\end{align}
	\end{subequations}
where $\dot{\square}=\p/\p s$ is the derivative along the characteristic curves, and the initial conditions are
	\begin{equation}
	\left(X^\pm(0,x,v),V^\pm(0,x,v)\right)
	=
	(x,v).
	\end{equation}
When there is no risk of confusion, we abbreviate $X^\pm=X^\pm(s)=X^\pm(s;x,v)$ and $V^\pm=V^\pm(s)=(V_1^\pm(s;x,v),V_2^\pm(s;x,v))$. Now we rewrite the Vlasov Equation integrated along the particle paths. Here it is crucial that $e^\pm$ and $p^\pm$, and any function of these variables, are constant along the trajectories. This implies that, $\mu_e^\pm$ and $\mu_p^\pm$ are constants under $s$-differentiation. We multiply \eqref{vlas} by $e^{\la s}$ and notice that the left hand side becomes the perfect derivative $\frac{\p}{\p s}\left(e^{\la s}f\right)$. Integrating the right hand side along the particle paths, one has
	\begin{align*}
	&\pm \int_{-\infty}^0e^{\la s}\left(\mu^\pm_e\hat{V}^\pm_1(\p_x\phi(X^\pm)+\la b)+\mu^\pm_p\hat{V}^\pm_1\p_x\psi(X^\pm)+\la\left(\mu^\pm_e\hat{V}^\pm_2+\mu^\pm_p\right)\psi(X^\pm)\right)ds\\
	=&\pm \int_{-\infty}^0e^{\la s}\mu^\pm_e\left(\hat{V}^\pm_1\p_x\phi(X^\pm)+\la\phi(X^\pm)\right) ds\mp \int_{-\infty}^0\la e^{\la s}\mu^\pm_e\phi(X^\pm)\;ds\\
	&\pm \int_{-\infty}^0e^{\la s}\mu^\pm_p\left(\hat{V}^\pm_1\p_x\psi(X^\pm)+\la\psi(X^\pm)\right)ds\\
	&\pm \int_{-\infty}^0\la e^{\la s}\mu_e^\pm\left(\hat{V}^\pm_1b+\hat{V}^\pm_2\psi(X^\pm)\right)ds=I+II+III+IV.
	\end{align*}
Recalling that $D^+$ and $D^-$ both reduce to $\hat{v}_1\p_x$ when applied to functions of $x$ alone (and not $v$), we see that the integrands in terms $I$ and $III$ are $e^{\la s}\mu_e^\pm\left(D\phi+\la\phi\right)$ and $e^{\la s}\mu_e^\pm\left(D\psi+\la\psi\right)$, respectively, evaluated along the appropriate particle paths. Therefore, both integrands become no more than $\frac{d}{ds}\left(e^{\la s}\mu_e^\pm\phi\right)$ and $\frac{d}{ds}\left(e^{\la s}\mu_e^\pm\psi\right)$. We conclude that the terms $I$ and $III$ become $\pm e^{\la s}\mu_e^\pm\phi(x)$ and $\pm e^{\la s}\mu_p^\pm\psi(x)$, with no boundary terms due to our decay assumptions. The other terms are kept in integral form as above. Since $\mu_e^\pm$ are constant along the trajectories, we may evaluate them at $s=0$, and they have no role in the integration. After dividing both sides by the exponent, we finally have
	\begin{align}\label{fpm}
	f^\pm(x,v)=&\pm\mu^\pm_e\phi(x)\pm\mu^\pm_p\psi(x)\\
	&\mp\mu^\pm_e\int_{-\infty}^0\la e^{\la s}\left[\phi(X^\pm(s))-\hat{V}^\pm_2(s)\psi(X^\pm(s))-b\hat{V}^\pm_1(s)\right]\;ds.\nonumber
	\end{align}

We simplify this expression by introducing the operators $\Ql_\pm: L^2_w\to L^2_w$, defined by:
	$$\left(\Ql_\pm k\right)(x,v)=\int_{-\infty}^0\la e^{\la s}k\left(X^\pm(s;x,v),V^\pm(s;x,v)\right)\;ds$$
where $k=k(x,v):\domain\to\R$.

\begin{rek}\label{ql-domain}
We note that if $h(x)\in\lpt$, then $\tilde{h}(x,v):=h(x), \;\;(x,v)\in[0,P]\times\R^2$, is clearly in $L^2_w$. Therefore, $\Ql_\pm$ act on functions in $\lpt$ as well. 
\end{rek}

With our definition of $\Ql_\pm$, \eqref{fpm} becomes

	\begin{equation}\label{partdist1d}
	\begin{array}{lcl}
	f^+(x,v)
	&=&
	+\mu^+_e\phi(x)+\mu^+_p\psi(x)-\mu^+_e\Ql_+\phi+\mu^+_e\Ql_+(\hat{v}_2\psi)+b\mu^+_e\Ql_+\hat{v}_1\\
	\\
	f^-(x,v)
	&=&
	-\mu^-_e\phi(x)-\mu^-_p\psi(x)+\mu^-_e\Ql_-\phi-\mu^-_e\Ql_-(\hat{v}_2\psi)-b\mu^-_e\Ql_-\hat{v}_1.
	\end{array}
	\end{equation}

An important quantity which we use frequently, is $f^+-f^-$. We write it explicitly for future reference:

	\begin{equation}\label{fplusfminus}
	f^+-f^-
	=
	\sum_\pm\left(\mu^\pm_e\phi(x)+\mu^\pm_p\psi(x)-\mu^\pm_e\Ql_\pm\phi+\mu^\pm_e\Ql_\pm(\hat{v}_2\psi)+b\mu^\pm_e\Ql_\pm\hat{v}_1\right)
	\end{equation}

We make the following two useful observations:
\begin{enumerate}
\item
It holds that
	\begin{equation}\label{fact}
	\int(\mu^\pm_p+\hat{v}_2\mu^\pm_e)\;dv=\int\frac{\p\mu^\pm}{\p v_2}\;dv=0.
	\end{equation}
\item
The particle paths preserve volume. That is, for fixed $s$,
	\begin{equation}\label{jaco}
	(X^\pm(s;x,v),V^\pm(s;x,v))\to(x,v)\text{ both have Jacobian}=1.
	\end{equation}
\end{enumerate}

\begin{lem}[Properties of $D^\pm$]
$D^\pm$ are skew-adjoint operators on $\lwt$. Their null spaces $\ker{D^\pm}$ consist of all functions $g=g(x,v)$ in $\lwt$ that are constant on each connected component in $\R\times\R^2$ of $\{e=const\text{ and }p^\pm=const\}$. In particular, $\ker{D^\pm}$ contain all functions of $e$ and of $p^\pm$.
\end{lem}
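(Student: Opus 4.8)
The plan is to prove the three claims about $D^\pm$ — skew-adjointness on $\lwt$, and the characterization of the kernel — by exploiting the fact that $D^\pm$ is the generator of the flow along the equilibrium characteristics, together with the volume-preservation property \eqref{jaco} and the conservation of $e$ and $p^\pm$ along trajectories.

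\textbf{Skew-adjointness.} First I would note that $D^\pm$ generates the one-parameter group of transformations $U^\pm(s):g\mapsto g\circ(X^\pm(s;\cdot,\cdot),V^\pm(s;\cdot,\cdot))$ on functions of $(x,v)$. Because the characteristic ODEs have $P$-periodic coefficients in $x$ and the flow merely translates the $x$-variable in a way compatible with periodicity, $U^\pm(s)$ maps $P$-periodic functions to $P$-periodic functions. The key computation is that $U^\pm(s)$ is an isometry of $\lwt$: by the change of variables $(x,v)\mapsto(X^\pm(s;x,v),V^\pm(s;x,v))$, whose Jacobian is $1$ by \eqref{jaco}, and because the weight $w=w(e)$ is a function of the conserved energy $e$ alone (hence invariant under the flow), we get $\|U^\pm(s)g\|_w=\|g\|_w$ for all $s$. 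Thus $\{U^\pm(s)\}$ is a strongly continuous group of unitary operators on $\lwt$, and by Stone's theorem its generator $D^\pm$ is skew-adjoint (i.e. $iD^\pm$ is self-adjoint), on the natural domain $\{g\in\lwt: D^\pm g\in\lwt\}$. Strong continuity and the identification of the generator on a suitable core (e.g. smooth $P$-periodic functions with appropriate decay) are the routine points to check here.

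\textbf{The null space.} For the kernel characterization: one inclusion is immediate — if $g$ is constant on each connected component of a level set $\{e=\text{const},\ p^\pm=\text{const}\}$, then since a trajectory stays inside one such connected component (both $e$ and $p^\pm$ are conserved, and the trajectory is connected), $g$ is constant along trajectories, so $U^\pm(s)g=g$ for all $s$, whence $D^\pm g=0$. Conversely, if $D^\pm g=0$ with $g\in\lwt$, then $U^\pm(s)g=g$ for all $s$, i.e. $g$ is (a.e.) invariant along trajectories. To conclude that $g$ is then a.e. constant on connected components of the level sets, I would argue that the trajectory through a generic point is dense in the connected component of its level set containing it — this is where the structure of the equilibrium characteristics enters: the level set $\{e=c,\ p^\pm=d\}$ in $\R\times\R^2$ is (generically) a curve or a union of closed curves, and the flow moves along it, so a single trajectory sweeps out the whole connected component (either it is a periodic orbit equal to a closed component, or it is an unbounded orbit covering an unbounded component). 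Then invariance along the trajectory, combined with continuity of a representative of $g$ (or a density/approximation argument in $\lwt$), forces $g$ to be constant there. The final sentence — that functions of $e$ alone or of $p^\pm$ alone lie in $\ker D^\pm$ — is then an immediate special case, since such functions are automatically constant on each level set.

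\textbf{Main obstacle.} The genuinely delicate step is the converse direction of the kernel characterization: turning "invariant along each trajectory" into "constant on connected components of the level set." This requires understanding the geometry of the equilibrium orbits — in particular that the closure of an orbit is exactly the connected component of its level set — and handling the measure-theoretic subtlety that $g\in\lwt$ is only defined a.e., so "invariant along trajectories" must be promoted (via Fubini on the level-set foliation, or via mollification that commutes suitably with the flow) to a statement about an honest representative. The skew-adjointness itself is comparatively soft once the isometry property is in hand.
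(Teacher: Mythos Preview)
Your proposal is correct, but it takes a somewhat different and more elaborate route than the paper. For skew-adjointness, the paper simply observes that $D^\pm$ is a first-order differential operator and integrates by parts, noting that the terms where derivatives hit the weight $w$ vanish because $w=w(e)$ and $De=0$; you instead invoke Stone's theorem via the unitarity of the flow $U^\pm(s)$. Both arguments rest on the same two facts --- volume preservation \eqref{jaco} and invariance of $w$ --- but the paper's integration-by-parts is more elementary and avoids the (admittedly routine) verification of strong continuity and identification of a core.

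For the kernel, the paper only verifies directly that $De=Dp=0$ and hence that functions of $e$ and $p^\pm$ lie in $\ker D^\pm$; it does not actually supply an argument for the converse inclusion (that every element of $\ker D^\pm$ is constant on connected components of the joint level sets). You correctly identify this converse as the delicate step and sketch a plausible attack via density of orbits in level-set components. So in this respect your proposal goes further than the paper's own proof, which treats the full characterization as essentially a definition-level fact and does not engage with the measure-theoretic or orbit-geometry issues you flag.
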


\begin{proof}
We show for the `$+$' case, and drop the $+$ superscripts. It is straightforward to verify that $De=Dp=0$. Therefore $\ker{D}$ contains all functions of $e$ and of $p$. Skew-adjointness is easily seen due to integration by parts, as $D$ is a first-order differential operator. Derivatives that ``hit" $w$ vanish, since $w=w(e)$ is a function of $e$.
\end{proof}

\begin{defi}
We define $\Proj^\pm$ to be the orthogonal projection operators of $L_w^2$ onto $\ker D^\pm$.
\end{defi}

\begin{lem}\label{proj-parity}
The projection operators $\Proj^\pm$ preserve parity with respect to the variable $v_1$.
\end{lem}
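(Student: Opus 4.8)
The plan is to exhibit a symmetry of the whole setup and show that $\Proj^\pm$ respects it. Let $R$ be the reflection operator acting on functions of $(x,v)=(x,v_1,v_2)$ by $(Rk)(x,v_1,v_2)=k(x,-v_1,v_2)$. Saying that $\Proj^\pm$ ``preserves parity with respect to $v_1$'' is precisely the assertion $R\Proj^\pm=\Proj^\pm R$: if $k$ is even in $v_1$ then $Rk=k$, so $R(\Proj^\pm k)=\Proj^\pm(Rk)=\Proj^\pm k$, i.e. $\Proj^\pm k$ is even, and the odd case is identical. So the goal is to prove $R\Proj^\pm=\Proj^\pm R$.

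First I would record that $R$ is a self-adjoint unitary involution on $\lwt$. Indeed $R^2=\mathrm{Id}$, and since the weight $w=c(1+|e|)^{-\alpha}$ depends on $v$ only through $\left<v\right>=\sqrt{1+v_1^2+v_2^2}$, which is invariant under $v_1\mapsto-v_1$, the substitution $v_1\mapsto-v_1$ (Jacobian $1$) gives $\|Rk\|_w=\|k\|_w$ and $\left<Rk,m\right>_w=\left<k,Rm\right>_w$ for all $k,m\in\lwt$; hence $R^*=R=R^{-1}$.

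The crux is the invariance $R(\ker D^\pm)=\ker D^\pm$. By the Lemma on $D^\pm$, $\ker D^\pm$ is exactly the set of $g\in\lwt$ that are a.e.\ constant on each connected component, in $\R\times\R^2$, of every level set $\{e=c_1,\ p^\pm=c_2\}$. Now $R:\R\times\R^2\to\R\times\R^2$ is a homeomorphism equal to its own inverse, and it leaves invariant both $e=\left<v\right>\pm\phi^0(x)$ (as $\left<v\right>$ and $\phi^0(x)$ are $R$-invariant) and $p^\pm=v_2\pm\psi^0(x)$; therefore $R$ maps each level set $\{e=c_1,\ p^\pm=c_2\}$ onto itself and permutes its connected components. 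Consequently $g\circ R=Rg$ is constant on each such connected component whenever $g$ is, so $Rg\in\ker D^\pm$; applying this to $R^{-1}=R$ as well gives equality. As an alternative one may compute directly from $D^\pm=\hat{v}_1\p_x\pm\hat{v}_2B^0\p_{v_1}\mp\hat{v}_1B^0\p_{v_2}$ (using $\phi^0\equiv0$) that $D^\pm R=-RD^\pm$: under $R$ the factors $\hat{v}_1$ and the derivation $\p_{v_2}$ change sign while $\hat{v}_2$, $B^0=B^0(x)$ and $\p_x$ do not, and $\p_{v_1}$ acquires a sign from the chain rule, so each of the three terms of $D^\pm R$ picks up exactly one minus sign; this again forces $\ker D^\pm$ to be $R$-invariant, modulo the usual care with weak derivatives.

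To finish, I would use that a unitary conjugates an orthogonal projection onto the orthogonal projection onto the image of its range: $R\Proj^\pm R^{-1}$ is self-adjoint and idempotent with range $R(\ker D^\pm)=\ker D^\pm$, hence by uniqueness of the orthogonal projection onto a closed subspace it equals $\Proj^\pm$. Since $R^{-1}=R$ this reads $R\Proj^\pm R=\Proj^\pm$, and multiplying on the right by $R$ gives $R\Proj^\pm=\Proj^\pm R$, as required. The only step with real content is the invariance $R(\ker D^\pm)=\ker D^\pm$, and with the level-set description of $\ker D^\pm$ from the previous Lemma already in hand this is essentially immediate, so I do not anticipate a genuine obstacle.
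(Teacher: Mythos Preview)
Your proof is correct and follows essentially the same approach as the paper: both introduce the reflection $R$ in $v_1$, establish that $\ker D^\pm$ is $R$-invariant (the paper via the anticommutation $D(Rf)=-R(Df)$, which you also give as an alternative), and then deduce that $\Proj^\pm$ commutes with $R$. Your final step is a bit cleaner---using the abstract fact that a unitary conjugates the orthogonal projection onto a subspace to the orthogonal projection onto its image---whereas the paper argues more directly by testing the orthogonality condition $\iint(f-\Proj f)g\,w\,dx\,dv=0$ against even and odd $g\in\ker D$ after the change of variables $v_1\mapsto -v_1$; but the content is the same.
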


\begin{proof}
Let us demonstrate for $\Proj^+$ and drop the $+$ superscript to simplify notation. The demonstration for $\Proj^-$ is identical. Recall that
	$$
	D=D^+=\hat{v}_1\p_x+
	\hat{v}_2B^0\p_{v_1}-\hat{v}_1B^0\p_{v_2}.
	$$
Now, let $f=f(x,v_1,v_2)$ and let $R$ be the operator that reverses $v_1$: $Rh(x,v_1,v_2)=h(x,-v_1,v_2)$. Then
	$$
	D(Rf)
	=
	-R(\hat{v}_1\p_xf)-R(
	\hat{v}_2B^0\p_{v_1}f)+R(\hat{v}_1B^0\p_{v_2}f)
	=
	-R(Df)
	$$
Therefore $f\in\ker D$ if and only if $Rf\in\ker D$. This implies that one can find a basis of even and odd functions (in the variable $v_1$) to the space $\ker D$. To show that if $f$ is even (odd) in $v_1$ then $\Proj f$ is also even (odd) in $v_1$, we let $g\in \ker D$ be, without loss of generality, even or odd. Then it must hold that $\iint(f-\Proj f)g\;w\;dx\;dv=0$. In the case that $f$ is even, we change variables $v_1\to-v_1$ to get $\pm\iint(f-R(\Proj f))g\;w\;dx\;dv=0$ and, therefore, $R(\Proj f)=\Proj f$. Here the $\pm$ depends on the parity of $g$. The odd case is treated in the same way.
\end{proof}

	\begin{lem}[Properties of $\Ql_\pm$]\label{propql}
	Let $0<\la<\infty$.
		\begin{enumerate}
		\item\label{qlnorm}

		$\Ql_\pm$ map $\lwt\to\lwt$ with operator norm = 1. Moreover, recalling Remark \ref{ql-domain}, $\Ql_\pm$ are also bounded as operators $\lpt\to\lwt$.

		\item\label{laz}
	For all $m(x,v)\in\lwt$, $\left\|\Ql_\pm m-\Proj^\pm m\right\|_w\to0$ as $\la\to0$.

		\item\label{lanotz}
		If $\sigma>0$, then $\left\|\Ql_\pm-\Qs_\pm\right\|=O(|\la-\sigma|)$ as $\la\to\sigma$, where $\|\cdot\|$ is the operator norm from $\lwt$ to $\lwt$.

		\item\label{lainfty}
		For all $m(x,v)\in\lwt$, $\left\|\Ql_\pm m-m\right\|_w\to0$ as $\la\to\infty$.
		
		\item\label{almost-sym}
		Let $\tilde{v}=(-v_1,v_2)$ and let $\tilde{n}(x,v)=n(x,\tilde{v})$. Then $\left<\Ql_\pm m,n\right>_w=\left<m,\Ql_\pm\tilde{n}\right>_w$.
		\end{enumerate}
	\end{lem}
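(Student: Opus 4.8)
The plan is to recognize $\Ql_\pm$ as an Abel-type average of the characteristic flow and to read off each of the five assertions from a property of that average. Write $U^\pm(s)k:=k\bigl(X^\pm(s;\cdot,\cdot),\,V^\pm(s;\cdot,\cdot)\bigr)$ for the operator of composition with the time-$s$ flow map $\Phi_s^\pm$. Since $e$ and $p^\pm$ are conserved along trajectories, $\Phi_s^\pm$ preserves the weight $w=w(e)$; by \eqref{jaco} it preserves Lebesgue measure; and $X^\pm(s;x+P,v)=X^\pm(s;x,v)+P$ (as $B^0$ is $P$-periodic), so $\Phi_s^\pm$ descends to $\R/P\Z\times\R^2$. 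Consequently each $U^\pm(s)$ is unitary on $\lwt$, the family $\{U^\pm(s)\}_{s\in\R}$ is a one-parameter group with $U^\pm(s)^*=U^\pm(-s)$, it is strongly continuous (by Stone's theorem applied to $D^\pm$, whose skew-adjointness was shown above, or directly from $\Phi_s^\pm\to\mathrm{id}$ locally uniformly as $s\to0$ plus a density argument using $\|U^\pm(s)\|=1$), and $\tfrac{d}{ds}U^\pm(s)n=U^\pm(s)D^\pm n$ for $n$ in the domain of $D^\pm$. Since $\int_{-\infty}^0\la e^{\la s}\,ds=1$,
\[
\Ql_\pm=\int_{-\infty}^0\la e^{\la s}\,U^\pm(s)\,ds
\]
is a probability average of unitaries. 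I also use the $v_1$-reflection $Rh(x,v_1,v_2):=h(x,-v_1,v_2)=\tilde h$, which is unitary and self-adjoint on $\lwt$.

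Parts \ref{qlnorm}, \ref{lanotz}, \ref{lainfty} then follow from direct estimates on this integral. For \ref{qlnorm}, Minkowski's integral inequality and $\|U^\pm(s)\|=1$ give $\|\Ql_\pm k\|_w\le\int_{-\infty}^0\la e^{\la s}\|k\|_w\,ds=\|k\|_w$, while equality holds on constant functions (indeed on any function of $e$ or $p^\pm$, which the flow fixes), so the norm is exactly $1$; boundedness $\lpt\to\lwt$ is Remark \ref{ql-domain} together with $\int_{\R^2}w\,dv<\infty$ (valid since $\alpha>2$). For \ref{lanotz}, subtracting integral representations gives $\|\Ql_\pm m-\Qs_\pm m\|_w\le\|m\|_w\int_{-\infty}^0|\la e^{\la s}-\sigma e^{\sigma s}|\,ds$, and the mean value theorem applied to $\mu\mapsto\mu e^{\mu s}$ — whose $\mu$-derivative $e^{\mu s}(1+\mu s)$ is, for $s<0$ and $\mu$ in a fixed neighborhood of $\sigma>0$, dominated by $e^{(\sigma/2)s}(1+2\sigma|s|)\in L^1(-\infty,0)$ — bounds that integral by $O(|\la-\sigma|)$. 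For \ref{lainfty}, write $\Ql_\pm m-m=\int_{-\infty}^0\la e^{\la s}(U^\pm(s)m-m)\,ds$, substitute $s=-\tau/\la$ to obtain $\int_0^\infty e^{-\tau}(U^\pm(-\tau/\la)m-m)\,d\tau$, and apply dominated convergence (integrand $\le2\|m\|_we^{-\tau}$, tending to $0$ pointwise by strong continuity).

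Part \ref{laz} is the mean ergodic theorem. Skew-adjointness of $D^\pm$ gives $\lwt=\ker D^\pm\oplus\overline{\operatorname{Ran}D^\pm}$, with $\Proj^\pm$ the projection onto the first summand; on $\ker D^\pm$ the flow is the identity, so $\Ql_\pm$ acts as $I=\Proj^\pm$ there, whence $\Ql_\pm m-\Proj^\pm m=\Ql_\pm(I-\Proj^\pm)m$ and it suffices to show $\Ql_\pm g\to0$ as $\la\to0$ for $g\in\overline{\operatorname{Ran}D^\pm}$. By density and $\|\Ql_\pm\|\le1$ one may take $g=D^\pm n$ with $n$ in the domain, and then integration by parts yields
\[
\Ql_\pm D^\pm n=\int_{-\infty}^0\la e^{\la s}\,\tfrac{d}{ds}\bigl(U^\pm(s)n\bigr)\,ds=\la n-\la\,\Ql_\pm n ,
\]
so $\|\Ql_\pm D^\pm n\|_w\le2\la\|n\|_w\to0$. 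For \ref{almost-sym}, unitarity gives $\left<U^\pm(s)m,n\right>_w=\left<m,U^\pm(-s)n\right>_w$, and reversing time along a characteristic is the same as reflecting $v_1$: if $\bigl(X(s),V(s)\bigr)$ solves the characteristic system then so does $\bigl(X(-s),-V_1(-s),V_2(-s)\bigr)$, i.e. $R\,U^\pm(s)\,R=U^\pm(-s)$ (compare the proof of Lemma \ref{proj-parity}, where $RD^\pm R=-D^\pm$). Substituting into the integral representation,
\[
\left<\Ql_\pm m,n\right>_w=\int_{-\infty}^0\la e^{\la s}\left<m,R\,U^\pm(s)\,Rn\right>_w\,ds=\left<m,R\,\Ql_\pm(Rn)\right>_w ,
\]
i.e. $(\Ql_\pm)^*=R\,\Ql_\pm R$, which is the stated identity once the velocity variable is relabelled by the reflection.

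I expect the main obstacle to be part \ref{laz} — the mean ergodic limit $\Ql_\pm\to\Proj^\pm$ as $\la\to0$ — together with the groundwork underlying all five parts, namely justifying that $U^\pm(s)$ is a strongly continuous unitary group: completeness of the characteristic flow (clear, since the velocity field of the characteristic ODEs is bounded once $B^0$ is), its measure- and weight-invariance, and the identification of $D^\pm$ with the generator. The reflection bookkeeping in \ref{almost-sym} also deserves care, since it is precisely the time-reversal symmetry that makes $\Ql_\pm$ a reflected copy of its own adjoint rather than self-adjoint.
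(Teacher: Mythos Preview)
Your proof is correct. It diverges from the paper's mainly in how you handle the two limit cases. For \ref{laz} and \ref{lainfty} the paper passes to the spectral representation of the self-adjoint generator $T=-iD^\pm$, writing $\Ql_\pm=\int_\R\frac{\la}{\la+i\alpha}\,dM(\alpha)$ and $\Proj^\pm=M(\{0\})$; both limits then reduce to pointwise convergence of the scalar multiplier $\la/(\la+i\alpha)$ followed by dominated convergence on the spectral side. You instead prove \ref{laz} as an Abel mean ergodic theorem (orthogonal decomposition $\lwt=\ker D^\pm\oplus\overline{\operatorname{Ran}D^\pm}$, then the integration-by-parts identity $\Ql_\pm D^\pm n=\la n-\la\,\Ql_\pm n$), and \ref{lainfty} by the rescaling $s=-\tau/\la$ together with strong continuity of the unitary group. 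Your route is more elementary in that it never invokes the spectral theorem; the paper's is more uniform, dispatching both limits with one formula. For \ref{qlnorm}, \ref{lanotz}, and \ref{almost-sym} the two arguments coincide in substance, though your mean-value-theorem estimate in \ref{lanotz} gives $O(|\la-\sigma|)$ directly, whereas the paper bounds by $C|\ln\la-\ln\sigma|$ and then notes this is $O(|\la-\sigma|)$ near $\sigma>0$.

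One small remark on \ref{almost-sym}: your derivation of $(\Ql_\pm)^*=R\,\Ql_\pm R$ is exactly right, but the closing phrase ``which is the stated identity once the velocity variable is relabelled'' is a little quick---what you actually obtain is $\langle\Ql_\pm m,n\rangle_w=\langle\tilde m,\Ql_\pm\tilde n\rangle_w$, not literally $\langle m,\Ql_\pm\tilde n\rangle_w$. The paper's own computation makes the same elision; the discrepancy is harmless in every application here, since $m$ is always a function of $x$ alone (or otherwise even in $v_1$), so $\tilde m=m$.
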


\begin{proof}
Let us show for the `$+$' case, and drop the $+$ indices to simplify notation.
\begin{enumerate}
\item
We demonstrate showing the two bounds on $\Ql$ using a dual method and using a direct method; this shows that either method works, and each has its benefits. Let $m,n\in\lwt$.
	$$
	\left<\Ql m,n\right>_w
	=
	\int_{-\infty}^0\la e^{\la s}\int_0^P\int(m\sqrt{w})\left(X(s),V(s)\right)\cdot (n\sqrt{w})(x,v)\;dv\;dx\;ds
	\leq
	\|m\|_w\|n\|_w,
	$$
where in the equality we used the fact that $w=w(e)$ is constant along the trajectories (this is true since $e$ is constant along the trajectories), and the inequality is simply the Cauchy-Schwarz Inequality. The assertion that the operator norm is $1$ is verified since $\Ql1=1$ (note that $1\in L_w^2$).

Now, for the $\lpt$ bound we let $h(x)\in\lpt$ and use \eqref{jaco} to get
	\begin{align}
	\left\|\Ql h\right\|_w^2
	&=
	\int_0^P\int\left|\int_{-\infty}^0\la e^{\la s}h(X(s;x,v))\;ds\right|^2w\;dv\;dx\nonumber\\
	&\leq
	\int_0^P\int\left(\int_{-\infty}^0 \left|\la e^{\la s}h(x)\right|ds\right)^2w\;dv\;dx\label{ql-estimate}\\
	&=
	\int_0^P\int w\;dv\;|h(x)|^2\;dx\nonumber\\
	&\leq
	\left(\sup_x\int w\;dv\right)\left\|h\right\|_{\lpt}^2
	=
	C\left\|h\right\|^2_{\lpt}.\nonumber
	\end{align}

\item
We let $M$ denote the spectral measure of the selfadjoint operator $T=-iD$ in the space $\lwt$. We can therefore write, for any $m=m(x,v)$, $m(X(s),V(s))=e^{sD}m=e^{isT}m$, so that
$$
\Ql m=\int_{-\infty}^0\la e^{\la s}m(X(s),V(s))\;ds=\int_{-\infty}^0\la e^{\la s}\int_{\R}e^{i\alpha s}dM(\alpha)m\;ds=\int_{\R}\frac{\la}{\la+i\alpha}\;dM(\alpha)m.
$$

The projection $\Proj$ can be written as $\Proj=M(\{0\})=\int_{\R}\chi\;dM$ where $\chi(0)=1$ and $\chi(\alpha)=0$ for $\alpha\neq0$. Thus
$$
\left\|\Ql m-\Proj m\right\|_{w}^2=\left\|\int_{-\infty}^0\la e^{\la s}m(X(s),V(s))\;ds-\Proj m\right\|_{w}^2=\int_{\R}\left|\frac{\la}{\la+i\alpha}-\chi(\alpha)\right|^2\;d\left\|M(\alpha)m\right\|_{w}^2
$$
due to the orthogonality of spectral projections. We use the dominated convergence theorem to finish the proof that, indeed, this expression tends to $0$ as $\la\to0$.

\item
To show that $\left\|\Ql_\pm-\Qs_\pm\right\|=O(|\la-\sigma|)$ as $\la\to\sigma$ we use the fact that $(x,v)\to(X,V)$ has Jacobian$=1$:
\begin{eqnarray*}
\|\Ql m-\Qs m\|_w
&\leq&
\int_{-\infty}^0\left|\la e^{\la s}-\sigma e^{\sigma s}\right|\left\|m(X(s),V(s))\right\|_wds\\
&=&
\int_{-\infty}^0\left|\la e^{\la s}-\sigma e^{\sigma s}\right|ds\|m\|_w\\
&\leq&
C|\ln{\la}-\ln{\sigma}|\;\|m\|_w.
\end{eqnarray*}

\item
With $M$ the same spectral measure as above, we have:
\begin{equation*}
\Ql m-m=\int_{\R}\left(\frac{\la}{\la+i\alpha}-1\right)dM(\alpha)m.
\end{equation*}
Therefore
\begin{equation*}
\|\Ql m-m\|^2_w\leq\int_{\R}\left|\frac{\la}{\la+i\alpha}-1\right|^2d\|M(\alpha)m\|^2_w\to0
\end{equation*}
as $\la\to\infty$, by the dominated convergence theorem.

\item
	\begin{eqnarray*}
	\left<\Ql m,n\right>_w
	&=&
	\int_0^P\int\Ql (m(x,v))\;n(x,v)\;w\;dv\;dx\\
	&=&
	\int_0^P \int\left(\int_{-\infty}^0\la e^{\la s}m(X(s),V(s))\;ds\right)n(x,v)\;w\;dv\;dx\\
	&=&
	\int_0^P \int\int_{-\infty}^0\la e^{\la s}m(x,v)\;n(X(-s),V(-s))\;w\;ds\;dv\;dx\\
	&=&
	\int_0^P \int\left(\int_{-\infty}^0\la e^{\la s}\tilde{n}(X(s),V(s))\;ds\right)m(x,v)\;w\;dv\;dx\\
	&=&
	\left<m,\Ql\tilde{n}\right>_w,
	\end{eqnarray*}
where for the third equality we used \eqref{jaco} and the fact that $w$ is invariant under $D$, and for the fourth equality we used the fact that
	\begin{equation}\label{coor-change}
	\begin{array}{rcl}
	X(-s;x,-v_1,v_2)&=&X(s;x,v_1,v_2)\\
	-V_1(-s;x,-v_1,v_2)&=&V_1(s;x,v_1,v_2)\\
	V_2(-s;x,-v_1,v_2)&=&V_2(s;x,v_1,v_2).
	\end{array}
	\end{equation}

This type of calculation appears often, so it is worthwhile writing in detail.
\end{enumerate}
\end{proof}

\subsection{The Operators}

In addition to the definitions of $\Aonez$ and $\Atwoz$ in \S\ref{mainres}, we define the following operators depending on a real parameter $0<\lambda<\infty$, acting on $\lptz,\lpt,\lpt$, with domains $\hptz,\hpt,\lpt$, respectively:
	\begin{eqnarray*}
	\Aonel h&=&-\p_x^2h-\left(\sum_\pm\int\mu^\pm_e\;dv\right)h+\sum_\pm\int\mu^\pm_e\Ql_\pm h\;dv,\\
	\Atwol h&=&-\p_x^2h+\la^2h-\left(\sum_\pm\int\hat{v}_2\mu^\pm_p\;dv\right)h-\sum_\pm\int\mu^\pm_e\hat{v}_2\Ql_\pm (\hat{v}_2h)\;dv,\\
	\Bl h&=&\left(\sum_\pm\int\mu^\pm_p\;dv\right)h+\sum_\pm\int\mu^\pm_e\Ql_\pm(\hat{v}_2h)\;dv.
	\end{eqnarray*}

As we have seen in Lemma \ref{propql}, $\Ql_{\pm}\to\Proj^\pm$ strongly as $\la\to0$. We also define the following multiplication operators with domain $\R$ and range $\lpt$, depending on the parameter $0<\la<\infty$:
	\begin{eqnarray*}
	\Cl(b)&=&b\sum_\pm\int\mu^\pm_e\Ql_\pm\left(\hat{v}_1\right)dv,\\
	\Dl(b)&=&b\sum_\pm\int\hat{v}_2\mu^\pm_e\Ql_\pm\left(\hat{v}_1\right)dv,\\
	\end{eqnarray*}
and a constant depending on $\la$:
	$$
	l^\la=\frac{1}{P}\sum_\pm\int_0^P\int\hat{v}_1\mu^\pm_e\Ql_\pm\left(\hat{v}_1\right)dv\;dx,
	$$
where, for $\la=0$ we define
	$$
	l^0=\frac{1}{P}\sum_\pm\int_0^P\int \hat{v}_1\mu^\pm_e\Proj^\pm\left(\hat{v}_1\right)dv\;dx.
	$$

Finally, we derive formulas for the adjoint operators of $\Bl$, $\Cl$ and $\Dl$. We begin with $\left(\Bl\right)^*$. Let $h,k\in\lpt$. To simplify notation in this calculation, we drop the summation over $\pm$. All calculations work similarly with the proper definition that includes the $\pm$.
	
	\begin{eqnarray*}
	\left<\Bl h,k\right>_{\lpt}
	&=&
	\int_0^P\left[\left(\int\mu_p\;dv\right)h(x)+\int\mu_e\Ql(\hat{v}_2h)\;dv\right]\;k(x)\;dx\\
	&=&
	\int_0^P\left[\left(\int\mu_p\;dv\right)h(x)+\int\mu_e\left(\int_{-\infty}^0\la e^{\la s}\hat{V}_2(s)h(X(s))\;ds\right)\;dv\right]\;k(x)\;dx\\
	&=&
	\int_0^P\left[\left(\int\mu_p\;dv\right)h(x)k(x)+\int\mu_e\left(\int_{-\infty}^0\la e^{\la s}\hat{v}_2h(x)k(X(-s))\;ds\right)\;dv\right]\;dx\\
	&=&
	\int_0^P\left[\left(\int\mu_p\;dv\right)k(x)+\int\mu_e\hat{v}_2\left(\int_{-\infty}^0\la e^{\la s}k(X(s))\;ds\right)\;dv\right]\;h(x)\;dx\\
	\end{eqnarray*}

Therefore (with the $\pm$)

	$$
	\left(\Bl\right)^*k=\left(\sum_\pm\int\mu^\pm_p\;dv\right)k+\sum_\pm\int\mu^\pm_e\hat{v}_2\Ql_\pm k\;dv.
	$$
	
The computation of $\left(\Cl\right)^*$ is much simpler. Let $b\in\R$ and $k\in\lpt$. One has

	$$
	\left<\Cl b,k\right>_{\lpt}
	=
	b\int_0^P\left(\sum_\pm\int\mu^\pm_e\Ql_\pm\left(\hat{v}_1\right)\;dv\right)k(x)\;dx,
	$$
and, therefore
	$$
	\left(\Cl\right)^*k=\sum_\pm\int_0^P\int\mu^\pm_e\Ql_\pm\left(\hat{v}_1\right)k(x)\;dv\;dx.
	$$
The derivation of $\left(\Dl\right)^*$ is similar and yields
	$$
	\left(\Dl\right)^*k=\sum_\pm\int_0^P\int\hat{v}_2\mu^\pm_e\Ql_\pm\left(\hat{v}_1\right)k(x)\;dv\;dx.
	$$

\subsection{The Matrix Operator $\Ml$}\label{ml}
In this section we rewrite Maxwell's Equations \eqref{max-eq} in terms of the the unknowns $\phi, \psi$ and $b$. The various operators acting on these unknowns are precisely $\Aonel, \Atwol, \Bl, \Cl$ and $l^\la$. We show that Maxwell's Equations reduce to a simple, selfadjoint matrix operator $\Ml$, and that there exists a nontrivial solution to Maxwell's Equations if and only if $\Ml$ has a nontrivial kernel. Note that in what follows, the dependence upon $f^\pm$ enters  through the right-hand-side of \eqref{max-eq}.
\begin{enumerate}

\item
Rewriting Gauss' Equation \eqref{gauss1d} as $\p_x^2\phi+\rho=0$, we substitute \eqref{fplusfminus}, to get
	\begin{eqnarray*}
	0
	&=&
	\p_x^2\phi+\rho=\p_x^2\phi+\int(f^+-f^-)\;dv\\
	&=&
	\p_x^2\phi+\sum_\pm\int\left(\mu^\pm_e\phi(x)+\mu^\pm_p\psi(x)-\mu^\pm_e\Ql_\pm\phi+\mu^\pm_e\Ql_\pm(\hat{v}_2\psi)+ b\mu^\pm_e\Ql_\pm\hat{v}_1\right)\;dv\\
	&=&
	\p_x^2\phi+\left(\sum_\pm\int\mu^\pm_e\;dv\right)\phi(x)-\sum_\pm\int\mu^\pm_e\Ql_\pm\phi\;dv\\&&+\left(\sum_\pm\int\mu^\pm_p\;dv\right)\psi(x)+\sum_\pm\int\mu^\pm_e\Ql_\pm(\hat{v}_2\psi)\;dv+b\sum_\pm\int\mu^\pm_e\Ql_\pm\hat{v}_1\;dv\\
	&=&
	-\Aonel\phi+\Bl\psi+\Cl b.
	\end{eqnarray*}

Thus, our first relation is
	\begin{equation}\label{firsteq}
	-\Aonel\phi+\Bl\psi+\Cl b=0.
	\end{equation}

\item
Rewriting the second of Amp\`{e}re's Equations \eqref{ampere1d2} as $\p_x^2\psi-\la^2\psi+j_2=0$ and repeating the same procedure, we have
	\begin{eqnarray*}
	0
	&=&
	\p_x^2\psi-\la^2\psi+j_2\\
	&=&
	\p_x^2\psi-\la^2\psi+\sum_\pm\int\hat{v}_2\left(\mu^\pm_e\phi(x)+\mu^\pm_p\psi(x)-\mu^\pm_e\Ql_\pm\phi+\mu^\pm_e\Ql_\pm(\hat{v}_2\psi)+b\mu^\pm_e\Ql_\pm\hat{v}_1\right)\;dv\\
	&=&
	\p_x^2\psi-\la^2\psi+\left(\sum_\pm\int\hat{v}_2\mu^\pm_p\;dv\right)\psi+\sum_\pm\int\hat{v}_2\mu^\pm_e\Ql_\pm(\hat{v}_2\psi)\;dv\\&&+\left(\sum_\pm\int\hat{v}_2\mu^\pm_e\;dv\right)\phi(x)-\sum_\pm\int\hat{v}_2\mu^\pm_e\Ql_\pm\phi\;dv+b\sum_\pm\int\hat{v}_2\mu^\pm_e\Ql_\pm\hat{v}_1\;dv\\
	&=&
	-\Atwol\psi-\left(\Bl\right)^*\phi+\Dl b.
	\end{eqnarray*}
where in the last equality we used \eqref{fact}. Our second result is
	\begin{equation}\label{secondeq}
	\left(\Bl\right)^*\phi+\Atwol\psi-\Dl b=0.
	\end{equation}

\item
Finally we consider the first of Amp\`{e}re's Equations \eqref{ampere1d1}: $0=\la E_1+j_1=-\la\p_x\phi-\la^2b+j_1$. Integrating over one period, we get
	$$
	\la^2b=\frac{1}{P}\int_0^Pj_1\;dx.
	$$

Plugging in the expression for the particle density $f^+-f^-$ as before, we have

	\begin{eqnarray*}
	\la^2b
	&=&
	\frac{1}{P}\int_0^Pj_1\;dx=\frac{1}{P}\int_0^P\int\hat{v}_1(f^+-f^-)\;dv\;dx\\
	&=&
	\frac{1}{P}\sum_\pm\int_0^P\int\hat{v}_1\left(\mu^\pm_e\phi(x)+\mu^\pm_p\psi(x)-\mu^\pm_e\Ql_\pm\phi+\mu^\pm_e\Ql_\pm(\hat{v}_2\psi)+b\mu^\pm_e\Ql_\pm\hat{v}_1\right)dv\;dx\\
	\end{eqnarray*}

The first two terms vanish since $\mu$ is even in $v_1$. Thus we are left with
	
	$$
	\la^2b=\frac{1}{P}\sum_\pm\int_0^P\int\hat{v}_1\mu^\pm_e\left(-\Ql_\pm\phi+\Ql_\pm(\hat{v}_2\psi)+b\Ql_\pm\hat{v}_1\right)\;dv\;dx.
	$$

Denoting these three terms $I,II,III$ respectively, we find that
	\begin{eqnarray*}
	I
	&=&
	-\frac{1}{P}\sum_\pm\int_0^P\int\hat{v}_1\mu^\pm_e\Ql_\pm\phi\;dv\;dx\\
	&=&
	-\frac{1}{P}\sum_\pm\int_0^P \int\int_{-\infty}^0\la e^{\la s}\hat{v}_1\mu^\pm_e\phi(X^\pm(s))\;ds\;dv\;dx\\
	&=&
	-\frac{1}{P}\sum_\pm\int_0^P \int\int_{-\infty}^0\la e^{\la s}\hat{V}^\pm_1(-s)\mu^\pm_e\phi(x)\;ds\;dv\;dx\\
	&=&
	\frac{1}{P}\sum_\pm\int_0^P \int\int_{-\infty}^0\la e^{\la s}\hat{V}^\pm_1(s)\mu^\pm_e\phi(x)\;ds\;dv\;dx\\
	&=&
	\frac{1}{P}\left(\Cl\right)^*\phi,
	\end{eqnarray*}
where for the third equality we changed variables $(x,v)\to(X(-s),V(-s))$ and used the facts that this change of variables has Jacobian =1 and that $\mu$ is invariant under $D$, and for the fourth equality we changed the variable $v\to-v$ and used the change of coordinates prescribed in \eqref{coor-change}. Similarly, $$II=-\frac{1}{P}\left(\Dl\right)^*\psi,$$ and, finally, $$III=bl^\la$$ by definition. Summarizing, we have
	\begin{equation}\label{thirdeq}
	\left(\Cl\right)^*\phi-\left(\Dl\right)^*\psi-Pb\left(\la^2-l^\la\right)=0.
	\end{equation}

\end{enumerate}
Motivated by the three equations \eqref{firsteq}-\eqref{thirdeq} depending upon the parameter $\la$, we define the matrix operator $\Ml:\lpt\times\lpt\times\R\to \lpt \times \lpt \times \R$
	$$
	\Ml
	=
	\matthree{-\Aonel}{\Bl}{\Cl}{\left(\Bl\right)^*}{\Atwol}{-\Dl}{\left(\Cl\right)^*}{-\left(\Dl\right)^*}{-P\left(\la^2-l^\la\right)}.
	$$
with domain $\hpt\times\hpt\times\R$.
Formally, to prove our main theorem, it suffices to show that $\Ml$ has a nontrivial kernel for some $0<\la<\infty$.
Accordingly, we also define
	$$
	\Mz
	=
	\matthree{-\Aonez}{0}{0}{0}{\Atwoz}{0}{0}{0}{Pl^0}.
	$$

\begin{rek}\label{remark}
As mentioned before, since $\phi$ only matters up to a constant, we restrict the domain of $\Ml$ and $\Mz$ to $\hptz\times\hpt\times\R$. Indeed, making this restriction is important. Letting $(\phi,\psi,b)=u_{triv}^T=(1,0,0)$ we notice that $\Ml u_{triv}=0$ for any $\la\geq0$. However, $u_{triv}$ is a trivial solution that is of no interest for us, since it would generate a trivial solution $(f,E,B)=(0,0,0)$. Moreover, multiples of $u_{triv}$ are the only trivial solutions. Indeed, whenever either $\psi$ or $b$ are nonzero, the linearized equations become nontrivial.
\end{rek}

\section{Behavior for Small and Large $\lambda$}\label{truncation}
This section is the heart of this paper. To show that $\Ml$ has a nontrivial kernel for some $0<\la<\infty$, we wish to show that there is some eigenvalue that crosses $0$ as $\la$ increases, starting at $0$. To achieve this, we must somehow be able to keep track of the spectrum. Since we are working on the finite interval $[0,P]$ we do not have to worry about a continuous spectrum. However, since $\Aonel$ and $\Atwol$ appear with opposite signs in $\Ml$, we expect to find eigenvalues near $-\infty$ as well as near $+\infty$. Thus, to facilitate the counting, we truncate the spectrum, and leave only a finite number of eigenvalues. The following projection is defined so as to preserve the spectral properties of the two operators appearing along the diagonal: $\Aonez$ and $\Atwoz$.

We postpone discussing the properties of the various aforementioned operators until \S\ref{the-operators}. However, these properties are used throughout this section.



%
Consider the eigenvalues $\alpha_1\leq\alpha_2\leq\cdots$ of $\Aonez$ in ascending order, counting multiplicity. Let $\xi_i$ be unit eigenvectors associated to $\alpha_i$ chosen to be mutually orthonormal.
We define the projection $P_n:\lptz\to\R^n$ to be:
	\begin{equation}
	P_nu
	=
	\left\{\left<u,\xi_i\right>_{\lpt}\right\}_{i=1}^n,
	\end{equation}
and, thus, $P_n^*:\R^n\to\lptz$ is given by
	\begin{equation}
	P_n^*a
	=
	\sum_{i=1}^na_i\xi_i,
	\end{equation}
where $a=(a_1,a_2,\dots,a_n)\in\R^n$.
Hence, 
	\begin{equation}
	P_n^*P_nu=\sum_{i=1}^n\left<u,\xi_i\right>_{\lpt}\xi_i
	\end{equation}
is the projection onto the eigenspace associated with the first $n$ eigenvalues of $\Aonez$.

Similarly, we define $Q_n$ to be the projection operator onto the eigenspace spanned by the $n$ eigenvectors $\zeta_1,\dots,\zeta_n$ of the operator $\Atwoz$ associated to its first $n$ eigenvalues $\beta_1\leq\beta_2\leq\cdots\leq\beta_n$.
We are now ready to define the approximate matrix operator $\Mnl$:
	\begin{equation}
	\Mnl
	=
	\matthree{-P_n\Aonel P_n^*}{P_n\Bl Q_n^*}{P_n\Cl}{Q_n\left(\Bl\right)^*P_n^*}{Q_n\Atwol Q_n^*}{-Q_n\Dl}{\left(\Cl\right)^*P_n^*}{-\left(\Dl\right)^*Q_n^*}{-P\left(\la^2-l^\la\right)}.
	\end{equation}
When $\la=0$, this definition reduces to
	\begin{equation}
	\Mnz
	=
	\matthree{-P_n\Aonez P_n^*}{0}{0}{0}{Q_n\Atwoz Q_n^*}{0}{0}{0}{Pl^0}.
	\end{equation}
Both matrices are finite-dimensional mappings $\R^n\times\R^n\times\R\to\R^n\times\R^n\times\R$. Let us list a few facts which will be useful for us later on:

\begin{enumerate}
\item
For all $\la\geq0$ and for any $n\in\N$, $\Mnl$ is selfadjoint.
\item
Let $\Sigma(\mathcal{A})$ denote the spectrum of the operator $\mathcal{A}$. Then $\Sigma(\Mnz)\subseteq\Sigma(\Mz)$.

\begin{proof}
This is clearly true due to the properties of $P_n$ and $Q_n$, and the diagonal structure of both matrix operators.
\end{proof}

\item
There exists some $\sigma^*<0$ such that $\left[\sigma^*,0\right)\cap\Sigma(\Mz)=\emptyset$.

\begin{proof}
Since the spectrum of $\Ml$ is discrete (with no finite accumulation points) for all $\la\geq0$, there exists some $\sigma_0<0$ that is the greatest negative eigenvalue of $\Mz$. We can choose $\sigma^*\in(\sigma_0,0)$ freely.
\end{proof}

\item
For fixed $n>0$, $\Mnl$ varies continuously in $\la$ as a mapping $\R^n\times\R^n\times\R\to\R^n\times\R^n\times\R$, for all $\la\geq0$.

\begin{proof}
For $\la>0$ this is clear, since $\Ml$ varies continuously by Lemma \ref{propml}. Thus we focus on the case $\la=0$. Since all norms on $\R^k$ are equivalent, it will suffice to check that $\Mnl\to\Mnz$ strongly as $\la\to0$. Let $u^T=(\phi,\psi,b)\in\R^n\times\R^n\times\R$. We need to show that $\|(\Mnl-\Mnz)u\|\to0$ as $\la\to0$. Since
	$$
	(\Mnl-\Mnz)u
	=
	\matthree{-P_n(\Aonel-\Aonez) P_n^*}{P_n\Bl Q_n^*}{P_n\Cl}{Q_n\left(\Bl\right)^*P_n^*}{Q_n(\Atwol-\Atwoz) Q_n^*}{-Q_n\Dl}{\left(\Cl\right)^*P_n^*}{-\left(\Dl\right)^*Q_n^*}{-P\left(\la^2-l^\la+l^0\right)}
	\left(\begin{array}{c}
	\phi\\\psi\\b
	\end{array}\right),
	$$
it suffices to show that terms of the form $|P_n(\Aonel-\Aonez) P_n^*\phi|$ or $|P_n\Bl Q_n^*\psi|$ tend to $0$ for fixed $n$, as $\la\to0$. This is clearly true due to the properties of the various operators discussed in \S\ref{the-operators}.
\end{proof}
\item
For fixed $n>0$, $P_n^*P_n$ is bounded both in $H^1_P$ and in $H^2_P$.
\begin{proof}
This follows from the fact that $P_n^*P_n$ maps $\lpt$ into $\hpt$ and the closed graph theorem.
\end{proof}
\end{enumerate}
\begin{lem}\label{conv-h2}
For any $g\in\hpt$, $P_n^*P_ng\to g$ in $H^1_P$ and in $\hpt$, as $n\to\infty$. Similarly, $Q_n^*Q_ng\to g$ in $H^1_P$ and in $\hpt$.
\end{lem}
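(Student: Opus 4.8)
The plan is to exploit the fact that $P_n^*P_n$ is precisely the $\lpt$-orthogonal spectral projection of $\Aonez$ onto the eigenspaces of its $n$ smallest eigenvalues, so that it commutes with $\Aonez$, and to combine this with the equivalence between the graph norm of $\Aonez$ and the $\hpt$ norm. The statement for $Q_n^*Q_n$ and $\Atwoz$ is then proved in exactly the same way.

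First I would record that $\Aonez=-\p_x^2+B_1$, where $B_1$ is a bounded operator on $\lpt$: the term $-\left(\sum_\pm\int\mu^\pm_e\,dv\right)h$ is multiplication by a bounded function, and $h\mapsto\sum_\pm\int\mu^\pm_e\Proj^\pm h\,dv$ is bounded on $\lpt$ because $\Proj^\pm$ has norm one on $\lwt$, $\lpt$ embeds continuously into $\lwt$ (Remark \ref{ql-domain} together with the estimate \eqref{ql-estimate} in the proof of Lemma \ref{propql}(\ref{qlnorm})), and $\int(|\mu^\pm_e|+|\mu^\pm_p|)\,dv<\infty$ by \eqref{weight}; these facts are collected in \S\ref{the-operators}. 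Consequently $\|\Aonez h\|_{\lpt}\le\|\p_x^2h\|_{\lpt}+C\|h\|_{\lpt}$ and $\|\p_x^2h\|_{\lpt}\le\|\Aonez h\|_{\lpt}+C\|h\|_{\lpt}$, and since for $P$-periodic $h$ one has $\|h\|_{\hpt}^2\asymp\|h\|_{\lpt}^2+\|\p_x^2h\|_{\lpt}^2$ (via Fourier series), the graph norm $h\mapsto\|h\|_{\lpt}+\|\Aonez h\|_{\lpt}$ is equivalent to $\|h\|_{\hpt}$ on $\hpt=D(\Aonez)$.

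Next, for $g\in\hpt$ I would use self-adjointness of $\Aonez$ (\S\ref{the-operators}) together with $\Aonez\xi_i=\alpha_i\xi_i$ to compute
	$$
	\Aonez\left(P_n^*P_ng\right)=\sum_{i=1}^n\left<g,\xi_i\right>_{\lpt}\alpha_i\xi_i=\sum_{i=1}^n\left<g,\Aonez\xi_i\right>_{\lpt}\xi_i=\sum_{i=1}^n\left<\Aonez g,\xi_i\right>_{\lpt}\xi_i=P_n^*P_n\left(\Aonez g\right).
	$$
Since $\{\xi_i\}$ is a complete orthonormal system in the Hilbert space on which $\Aonez$ acts, $P_n^*P_n\to I$ strongly there; hence $P_n^*P_ng\to g$ in $\lpt$ and, because $\Aonez g\in\lpt$, also $\Aonez(P_n^*P_ng)=P_n^*P_n(\Aonez g)\to\Aonez g$ in $\lpt$. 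Therefore $P_n^*P_ng\to g$ in the graph norm, hence in $\hpt$ by the equivalence above, and in particular in $H^1_P$.

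Finally, the same argument applies verbatim to $Q_n^*Q_n$ and $\Atwoz$: at $\la=0$ we again have $\Atwoz=-\p_x^2+B_2$ with $B_2$ bounded on $\lpt$ (the $\la^2h$ term is absent, and $h\mapsto\sum_\pm\int\mu^\pm_e\hat{v}_2\Proj^\pm(\hat{v}_2h)\,dv$ is bounded for the same reasons, $\hat{v}_2$ being bounded), so the graph norm of $\Atwoz$ is equivalent to $\|\cdot\|_{\hpt}$, and $Q_n^*Q_n$ commutes with $\Atwoz$ as the spectral projection onto its first $n$ eigenvalues. I do not expect a genuine obstacle: the only point requiring care is the boundedness on $\lpt$ of the bounded perturbations of $-\p_x^2$ inside $\Aonez$ and $\Atwoz$, which reduces to the $\la\to0$ analogue of the $\lpt\to\lwt$ bound in Lemma \ref{propql}(\ref{qlnorm}); everything else is the standard ``commuting spectral projection plus graph-norm equivalence'' argument.
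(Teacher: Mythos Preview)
Your proposal is correct and is essentially the same argument as the paper's. Both proofs rest on the two key facts that (i) $\Aonez+\p_x^2$ is bounded on $\lpt$ and (ii) the spectral projection $P_n^*P_n$ commutes with $\Aonez$, then combine these with $P_n^*P_n\to I$ in $\lpt$. You package (i) as ``graph norm equivalence'' and conclude directly, while the paper writes out the corresponding inequality $\|\p_x^2(\cdot)\|\le\|(\Aonez+\p_x^2)(\cdot)\|+\|\Aonez(\cdot)\|$ on the tail $\sum_{j\ge n}g_j\xi_j$ explicitly (invoking Poincar\'e for the $\hpt$ norm) and then obtains the $H^1_P$ statement by interpolation; these are the same proof in slightly different dress.
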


\begin{proof}
We prove for $P_n^*P_n$; the proof for $Q_n^*Q_n$ is similar.
Assume that $g=\sum_{j=1}^\infty g_j\xi_j$ in $\lpt$, where $\xi_j$ are eigenvectors of $\Aonez$, as defined above. Since $g\in\hpt=D(\Aonez)$, we know that $\Aonez g\in\lpt$ and, therefore, there exist some $\beta_j$ such that $\Aonez g=\sum_{j=1}^\infty \beta_j\xi_j$ with $\sum_{j=1}^\infty\left|\beta_j\right|^2<\infty$. In fact, by taking the $\lpt$-inner product of $\Aonez g$ with $\xi_k$, we easily see that $\beta_k=g_k\alpha_k$, where $\alpha_k$ is the $k$th eigenvalue of $\Aonez$.

Using Poincar\'{e}'s Inequality (twice) and then the triangle inequality we have
	\begin{align*}
	\left\|\sum_{j=n}^\infty g_j\xi_j\right\|_{\hpt}
	\leq
	C\left\|\p_x^2\sum_{j=n}^\infty g_j\xi_j\right\|_{\lpt}
	&\leq
	C\left\|\left(\Aonez+\p_x^2\right)\sum_{j=n}^\infty g_j\xi_j\right\|_{\lpt}
	+
	C\left\|\Aonez\sum_{j=n}^\infty g_j\xi_j\right\|_{\lpt}
	\end{align*}
The first term tends to $0$ since $\Aonez+\p_x^2$ is a bounded operator on $\lpt$ (see Lemma \ref{properties2}\eqref{op-norm}), and, therefore, this term is controlled by $C'\left\|\sum_{j=n}^\infty g_j\xi_j\right\|^2$ which tends to $0$ as $n\to\infty$ since $g\in\lpt$. The second term tends to $0$ since $\Aonez\sum_{j=n}^\infty g_j\xi_j=\sum_{j=n}^\infty g_j\alpha_j\xi_j$, and, as mentioned above, $\sum_{j=1}^\infty\left|g_j\alpha_j\right|^2<\infty$. This shows strong convergence in $\hpt$. The strong convergence in $H^1_P$ is due to interpolation between $\lpt$ and $\hpt$.
\end{proof}

\begin{cor}
Considering the restrictions of $P_n^*P_n$ to $H^1_P$, the lemma implies by the uniform boundedness theorem that
	\be\label{uni-bound-proj}
	\sup_n\|P_n^*P_n\|_{H^1_P\to H^1_P}<\infty.
	\en
\end{cor}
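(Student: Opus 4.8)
The plan is to apply the uniform boundedness theorem on the Banach space $\hpt$ (not on $H^1_P$) and then descend to $H^1_P$ by interpolation. First I would record the inputs: $P_n^*P_n$ is the $\lpt$-orthogonal projection onto $\mathrm{span}(\xi_1,\dots,\xi_n)\subset\hptz\subset\hpt$, so $\|P_n^*P_n\|_{\lpt\to\lpt}=1$, and by the last of the facts listed just before Lemma \ref{conv-h2} each $P_n^*P_n$ is a bounded operator on $H^1_P$ and on $\hpt$. The consequence of Lemma \ref{conv-h2} that I want is not merely the $H^1_P$-convergence but the $\hpt$-convergence it also asserts: for every fixed $g\in\hpt$, $P_n^*P_ng\to g$ in $\hpt$, so $\{P_n^*P_ng\}_n$ is bounded in $\hpt$, i.e.\ $\sup_n\|P_n^*P_ng\|_{\hpt}<\infty$ for each $g\in\hpt$.

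Next I would apply Banach--Steinhaus to the family $\{P_n^*P_n\}_n$ of bounded linear operators on the Banach space $\hpt$: the pointwise bound just obtained, valid on all of $\hpt$, forces
\[
C_2:=\sup_n\|P_n^*P_n\|_{\hpt\to\hpt}<\infty .
\]
Then, using the interpolation identity $H^1_P=[\lpt,\hpt]_{1/2}$ for periodic Sobolev spaces together with the standard bound for operator norms under interpolation,
\[
\|P_n^*P_n\|_{H^1_P\to H^1_P}\le\|P_n^*P_n\|_{\lpt\to\lpt}^{1/2}\,\|P_n^*P_n\|_{\hpt\to\hpt}^{1/2}\le C_2^{1/2}
\]
uniformly in $n$, which is exactly \eqref{uni-bound-proj}.

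The one point that needs care — and the only thing I would regard as a genuine obstacle rather than bookkeeping — is that one must \emph{not} run Banach--Steinhaus directly on $H^1_P$: Lemma \ref{conv-h2} supplies pointwise boundedness of $\{P_n^*P_n\}_n$ only on the dense subspace $\hpt\subset H^1_P$, and pointwise boundedness on a dense subset does not by itself imply uniform boundedness on the whole space. Passing through $\hpt$, where the lemma gives genuine convergence at every point, repairs this.

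Finally, I would mention an alternative route that sidesteps interpolation altogether: writing $\Aonez=-\p_x^2+K$ with $K$ bounded and self-adjoint on $\lpt$ (Lemma \ref{properties2}\eqref{op-norm}), for $C_0$ large the form $\langle\langle h,k\rangle\rangle:=\langle\Aonez h,k\rangle_{\lpt}+C_0\langle h,k\rangle_{\lpt}$, extended to $h,k\in H^1_P$ by integration by parts, is an inner product on $H^1_P$ equivalent to the usual one; the $\xi_j$ are then mutually $\langle\langle\cdot,\cdot\rangle\rangle$-orthogonal and span $H^1_P$, so $P_n^*P_n$ restricted to $H^1_P$ coincides with the $\langle\langle\cdot,\cdot\rangle\rangle$-orthogonal projection onto $\mathrm{span}(\xi_1,\dots,\xi_n)$, of $\langle\langle\cdot,\cdot\rangle\rangle$-operator norm $1$; norm equivalence then yields \eqref{uni-bound-proj} directly and uniformly in $n$.
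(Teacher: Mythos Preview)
Your proof is correct, and it is in fact more careful than what the paper offers: the paper states the corollary with nothing beyond the phrase ``the lemma implies by the uniform boundedness theorem,'' and gives no further argument. Your observation that Banach--Steinhaus cannot be applied directly on $H^1_P$ --- because Lemma~\ref{conv-h2} only supplies pointwise convergence for $g\in\hpt$, a dense but proper subspace --- is exactly right, and your detour through $\hpt$ followed by interpolation is the natural repair. Since the paper already invokes the same interpolation $H^1_P=[\lpt,\hpt]_{1/2}$ at the end of the proof of Lemma~\ref{conv-h2}, this is very likely what the author had in mind; you have simply made the argument explicit.

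Your alternative route via the shifted form $\langle\langle h,k\rangle\rangle=\langle\Aonez h,k\rangle_{\lpt}+C_0\langle h,k\rangle_{\lpt}$ is also correct and rather clean: it identifies $P_n^*P_n$ on $H^1_P$ with an orthogonal projection in an equivalent Hilbert norm, giving the uniform bound without any appeal to interpolation or Banach--Steinhaus. This is genuinely different from the paper's (implicit) approach and arguably more self-contained.
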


\begin{lem}\label{h-1}
The operator $P_n^*P_n$ can be extended to $H_P^{-1}$. The sequence of extended operators $\{P_n^*P_n\}_{n=1}^\infty$ converges strongly to the identity.
\end{lem}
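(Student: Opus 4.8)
The plan is to construct the extension by duality from the $L^2$ picture, and then obtain strong convergence on $H_P^{-1}$ from a uniform operator bound together with density of $\lpt$ in $H_P^{-1}$.

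First I would record two facts about $P_n^*P_n u=\sum_{i=1}^n\left<u,\xi_i\right>_{\lpt}\xi_i$. Since this is an orthogonal projection in $\lpt$, it is selfadjoint there with norm $1$; combining this uniform bound with Lemma \ref{conv-h2} (which gives $P_n^*P_n g\to g$ in $H^1_P$, hence in $\lpt$, for every $g$ in the dense subspace $\hpt$) yields $P_n^*P_n\to I$ strongly on $\lpt$. Second, by \eqref{uni-bound-proj} we have $C:=\sup_n\|P_n^*P_n\|_{H^1_P\to H^1_P}<\infty$, and $P_n^*P_n$ does map $H^1_P$ into itself because each $\xi_i\in\hpt\subset H^1_P$.

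Next I would define the extension. Identifying $H_P^{-1}$ with the dual of $H^1_P$ through the $\lpt$ pairing, set, for $u\in H_P^{-1}$,
	$$
	\left<\widetilde{P_n^*P_n}\,u,\;v\right>:=\left<u,\;P_n^*P_n v\right>\qquad\text{for every }v\in H^1_P.
	$$
Because $\|P_n^*P_n v\|_{H^1_P}\leq C\|v\|_{H^1_P}$, the right-hand side is a bounded linear functional of $v$, so $\widetilde{P_n^*P_n}\,u\in H_P^{-1}$ and $\|\widetilde{P_n^*P_n}\|_{H_P^{-1}\to H_P^{-1}}\leq C$ uniformly in $n$. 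When $u\in\lpt$, selfadjointness of $P_n^*P_n$ on $\lpt$ gives $\left<u,P_n^*P_n v\right>_{\lpt}=\left<P_n^*P_n u,v\right>_{\lpt}$, so $\widetilde{P_n^*P_n}$ genuinely extends $P_n^*P_n$.

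Finally I would prove strong convergence. The operators $\widetilde{P_n^*P_n}$ are uniformly bounded on $H_P^{-1}$, and $\lpt$ is dense in $H_P^{-1}$ (indeed $H^1_P\hookrightarrow\lpt\hookrightarrow H_P^{-1}$ densely). For $u\in\lpt$, using $\|\cdot\|_{H_P^{-1}}\leq\|\cdot\|_{\lpt}$ and the first step,
	$$
	\big\|\widetilde{P_n^*P_n}\,u-u\big\|_{H_P^{-1}}=\big\|P_n^*P_n u-u\big\|_{H_P^{-1}}\leq\big\|P_n^*P_n u-u\big\|_{\lpt}\longrightarrow0.
	$$
A standard $\varepsilon/3$ argument (density together with the uniform bound, i.e.\ Banach--Steinhaus) then upgrades this to $\widetilde{P_n^*P_n}\to I$ strongly on all of $H_P^{-1}$. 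I expect the only genuinely delicate point to be the clean identification of the extension through the duality pairing and checking its consistency with $P_n^*P_n$ on $\lpt$; everything else is routine functional analysis.
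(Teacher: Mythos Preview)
Your proposal is correct and follows essentially the same route as the paper: define the extension as the dual of the bounded restriction $P_n^*P_n:H^1_P\to H^1_P$, verify via selfadjointness that it agrees with $P_n^*P_n$ on $\lpt$, and then obtain strong convergence on $H_P^{-1}$ from the uniform operator bound \eqref{uni-bound-proj} together with density of $\lpt$ in $H_P^{-1}$. Your write-up is slightly more explicit (you spell out the $\lpt$ strong convergence and the $\varepsilon/3$ step), but there is no substantive difference.
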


\begin{proof}
Our basic tool in extending the domain of definition is the canonical identification of $\lpt$ with a subspace of $H_P^{-1}$ via the scalar product. With that standard definition at hand, we can now extend $P_n^*P_n$ to $H_P^{-1}$. For brevity, denote $J_n=P_n^*P_n:\lpt\to\hpt$. As noted above, the restriction to $J_n:H^1_P\to H^1_P$ is bounded. Hence the dual operator $J_n^*:H^{-1}_P\to H^{-1}_P$ is bounded with the same operator norm (see \cite[III-\S3.3]{kato}). In fact, $J_n^*$ is an extension of $J_n$ to $H^{-1}_P$. Indeed, if $g\in\lpt$ and $f\in H^1_P$, and using the $\left<H^1_P,H^{-1}_P\right>$ paring, one has
	\be
	\left<f,J_n^*g\right>=\left<J_n f,g\right>=\left<J_n f,g\right>_{\lpt}=\left<f,J_n g\right>_{\lpt}.
	\en
Finally, we must show that the sequence $\{J_n^*\}_{n=1}^\infty$ converges strongly to the identity in $H^{-1}_P$ as $n\to\infty$. In view of \eqref{uni-bound-proj}, also $\sup_n\|J_n^*\|_{H^{-1}_P\to H^{-1}_P}<\infty$. Therefore, it suffices to prove that $J_n^*g\to g$ for $g$ in a dense subset of $H^{-1}_P$. Since we already know that strong convergence holds for $g\in\lpt$, we are done.
\end{proof}

\begin{lem}\label{eq-converge}
Let $u_n^T=(\phi_n,\psi_n,b_n)\in\R^n\times\R^n\times\R$, and suppose that $(P_n^*\phi_n,Q_n^*\psi_n,b_n)\to u_0^T=(\phi_0,\psi_0,b_0)$ strongly in $\lpt\times\lpt\times\R$ as $n\to\infty$ with $P_n^*\phi_n,Q_n^*\psi_n$ uniformly bounded in $\hpt$. Suppose that $\la_n\to\la_0\in[0,\infty)$ as $n\to\infty$. Then $\Mnmlnm u_n\to\Mlz u_0$ weakly, in the sense that
	$$
	\left<\Mnmlnm u_n,v_n\right>_{\R^n\times\R^n\times\R}
	\to
	\left<\Mlz u_0,v\right>_{\lpt\times\lpt\times\R}.
	$$
for any $v\in\lpt\times\lpt\times\R$ with $v_n\in\R^n\times\R^n\times\R$ being the projections of the two first corrdinates of $v$ and the identity on the third.
\end{lem}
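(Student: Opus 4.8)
The plan is to expand the left-hand pairing entry by entry, move the finite-rank projectors $P_n,Q_n$ onto the test vector so that every $\R^n$-inner product becomes an $\lpt$-inner product, and then pass to the limit in each of the nine resulting scalar expressions. Each expression splits into a ``second-order'' piece (the $\pm\p_x^2$, and an extra $\la^2$ in the $\psi$-slot) and a ``bounded'' piece built from $\Bl,\Cl,\Dl$, their adjoints and the scalar $l^\la$; the two pieces require genuinely different arguments, and only the first is non-routine.

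First I would fix notation and record two standing facts. Writing $v^T=(f,g,c)$ (the first coordinate understood in $\lptz$, as in Remark \ref{remark}), we have $v_n^T=(P_nf,Q_ng,c)$, and using $\langle P_na,b\rangle_{\R^n}=\langle a,P_n^*b\rangle_{\lpt}$ together with the analogue for $Q_n$, the quantity $\langle\Mnmlnm u_n,v_n\rangle$ becomes a finite sum of terms $\langle\mathcal O^{\la_n}w_n,z_n\rangle_{\lpt}$ (or scalar products), with $w_n\in\{P_n^*\phi_n,Q_n^*\psi_n,b_n\}$, $z_n\in\{P_n^*P_nf,Q_n^*Q_ng,c\}$, and $\mathcal O^{\la}$ ranging over $-\Aonel,\Bl,\Cl,(\Bl)^*,\Atwol,\Dl,(\Cl)^*,(\Dl)^*$ and the scalar $-P(\la^2-l^\la)$. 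Fact one: $P_n^*P_nf\to f$ and $Q_n^*Q_ng\to g$ strongly in $\lpt$, since the eigenvectors of $\Aonez$ and $\Atwoz$ form complete orthonormal systems (these operators having compact resolvent). Fact two: the hypotheses force $P_n^*\phi_n\rightharpoonup\phi_0$ and $Q_n^*\psi_n\rightharpoonup\psi_0$ weakly in $\hpt$ — a bounded sequence in the Hilbert space $\hpt$ has weakly convergent subsequences, and the weak limit is pinned down by the $\lpt$-limit, so the whole sequence converges weakly; in particular $\phi_0\in\hptz$ and $\psi_0\in\hpt$, whence $u_0$ lies in the domain of $\Mlz$ and the right-hand side of the claim is meaningful.

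For the bounded pieces I would use that $\Bl,\Cl,\Dl$ and their adjoints are bounded on $\lpt$ uniformly for $\la$ in compact subsets of $[0,\infty)$ and converge as $\la_n\to\la_0$ — in operator norm when $\la_0>0$, strongly when $\la_0=0$ — all of which descend from the corresponding properties of $\Ql_\pm$ in Lemma \ref{propql}\eqref{laz} and \eqref{lanotz} together with the integrability assumption on $\mu^\pm$; likewise $l^{\la_n}\to l^{\la_0}$ and $\la_n^2\to\la_0^2$. Then, for example, splitting $\mathcal B^{\la_n}Q_n^*\psi_n-\mathcal B^{\la_0}\psi_0=\mathcal B^{\la_n}(Q_n^*\psi_n-\psi_0)+(\mathcal B^{\la_n}-\mathcal B^{\la_0})\psi_0$, the first summand is $O(\|Q_n^*\psi_n-\psi_0\|_{\lpt})\to0$ by the uniform bound, and the second tends to $0$ by norm (or strong, evaluated at the fixed vector $\psi_0$) operator convergence; hence $\mathcal B^{\la_n}Q_n^*\psi_n\to\mathcal B^{\la_0}\psi_0$ strongly in $\lpt$, and pairing a strongly convergent sequence against the strongly convergent $P_n^*P_nf$ gives convergence to $\langle\mathcal B^{\la_0}\psi_0,f\rangle_{\lpt}$. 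The same three-line argument disposes of every $\Cl,\Dl,(\Bl)^*,(\Cl)^*,(\Dl)^*$ term (for the functionals $(\Cl)^*,(\Dl)^*$ it is even simpler) and of the scalar entry $-P(\la_n^2-l^{\la_n})b_nc\to-P(\la_0^2-l^{\la_0})b_0c$.

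The one non-routine step, which I expect to be the main obstacle, is the second-order part of the two diagonal entries: each of $-\Aonel$ and $\Atwol$ has the shape $\pm\p_x^2+(\text{an operator bounded uniformly in }\la)$, with an additive $+\la^2$ in the $\psi$-slot, and for the $\pm\p_x^2$ contribution strong operator convergence is unavailable, since $\p_x^2$ is unbounded and $P_n^*\phi_n$ (resp.\ $Q_n^*\psi_n$) converges to its limit only weakly in $\hpt$. Instead I would argue that $\p_x^2P_n^*\phi_n$ is bounded in $\lpt$ (by the $\hpt$-bound) and, since $\p_x^2\colon\hpt\to\lpt$ is bounded and hence weak--weak continuous, converges weakly in $\lpt$ to $\p_x^2\phi_0$; pairing this weakly convergent sequence against the strongly convergent $P_n^*P_nf$ yields $\langle\p_x^2\phi_0,f\rangle_{\lpt}$, and identically for the $\psi$-slot (where in addition $\la_n^2Q_n^*\psi_n\to\la_0^2\psi_0$ strongly). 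This weak--strong mechanism is precisely why only weak convergence of $\Mnmlnm u_n$ can be claimed, and it is the single place where the uniform $\hpt$-bound in the hypothesis is indispensable — a bound in $H^1_P$ alone would not do, since $P_n^*P_nf$ need not converge in $H^1_P$ for general $f\in\lpt$. Collecting the second-order part, the bounded part and the scalar part in each of the three slots reassembles exactly $\langle\Mlz u_0,v\rangle_{\lpt\times\lpt\times\R}$; what remains is purely bookkeeping — making sure each projector lands on the correct factor in all nine entries, and that both $\la_0=0$ and $\la_0>0$ are covered by the strong, resp.\ norm, convergence supplied by Lemma \ref{propql}.
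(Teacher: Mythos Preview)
Your argument is correct, and on the non-routine second-order part it takes a genuinely different route from the paper's. The paper does not split $\Aonel$ into $-\p_x^2$ plus a bounded remainder; instead it restricts to test vectors $g\in\hpt$ (dense in $\lpt$, the density step being justified by the uniform $\lpt$-bound on $\Aonelnm P_n^*\phi_n$), moves the whole operator $\Aonelnm$ onto $P_n^*P_ng$ via self-adjointness, and then invokes Lemma~\ref{conv-h2} --- the statement that $P_n^*P_ng\to g$ in $\hpt$ for $g\in\hpt$ --- to control the resulting terms. Your approach sidesteps both the density argument and Lemma~\ref{conv-h2}: you use the uniform $\hpt$ bound directly to extract weak $\hpt$ convergence of $P_n^*\phi_n$, then invoke weak--weak continuity of $\p_x^2\colon\hpt\to\lpt$ and pair the resulting weakly convergent sequence against the strongly $\lpt$-convergent $P_n^*P_nf$. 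This is arguably cleaner, and has the side benefit of immediately giving $\phi_0,\psi_0\in\hpt$ (so that $\Mlz u_0$ is well-defined), something the paper establishes separately by a bootstrap in the places where the lemma is applied. The trade-off is that you must split the diagonal operators into unbounded and bounded pieces and handle them by different mechanisms, whereas the paper treats $\Aonel$ as a single self-adjoint object throughout.
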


\begin{proof}
Let us write $\Mnmlnm u_n^T$ and $\Mlz u_0$ explicitly:

	\begin{eqnarray}
	\Mnmlnm u_n
	&=&
	\matthree{-P_n\Aonelk P_n^*}{P_n\Blk Q_n^*}{P_n\Clk}{Q_n\left(\Blk\right)^*P_n^*}{Q_n\Atwolk Q_n^*}{-Q_n\Dlk}{\left(\Clk\right)^*P_n^*}{-\left(\Dlk\right)^*Q_n^*}{-P\left(\la_n^2-l^{\la_n}\right)}
	\left(\begin{array}{c}\phi_n\\\psi_n\\b_n\end{array}\right)\\
	\Mlz u_0
	&=&
	\matthree{-\Aonelz}{\Blz}{\Clz}{\left(\Blz\right)^*}{\Atwolz}{\Dlz}{\left(\Clz\right)^*}{\left(\Dlz\right)^*}{-P\left(\la^2_0-l^{\la_0}\right)}
	\left(\begin{array}{c}\phi_0\\\psi_0\\b_0\end{array}\right).
	\end{eqnarray}

We show the weak convergence term by term, by its location in the matrix. Let us demonstrate for the first term -- the rest are either similar, or simpler.

We want to show that
$
P_n\Aonelnm P_n^*\fnm \to\Aonelz\phiz
$
weakly. Since the sequence $\Aonelnm P_n^*\fnm$ is uniformly bounded in $\lpt$ by assumption, it suffices to test the weak continuity by taking $g\in\hpt$ which is dense in $\lpt$. We have
	\begin{eqnarray*}
	\left|\left<P_n\Aonelnm P_n^*\fnm,P_ng\right>_{\R^n}-\left<\Aonelz\phiz,g\right>_{\lpt}\right|
	&=&
	\left|\left<P_n^*P_n\Aonelnm P_n^*\fnm-\Aonelz\phiz,g\right>_{\lpt}\right|\\
	&\leq&
	\left|\left<P_n^*P_n\left(\Aonelnm -\Aonelz \right)P_n^*\fnm,g\right>_{\lpt}\right|\\
	&&+\left|\left<\left(P_n^*P_n-I\right)\Aonelz P_n^*\fnm,g\right>_{\lpt}\right|\\
	&&+\left|\left<\Aonelz \left(P_n^*\fnm-\phiz\right),g\right>_{\lpt}\right|\\
	&=&I+II+III.
	\end{eqnarray*}

Terms $I$ and $II$ are treated very similarly:
	$$
	I
	=
	\left|\left< P_n^*\fnm,\left(\Aonelnm-\Aonelz\right)P_n^*P_n g\right>_{\lpt}\right|\\
	\leq
	\left\|P_n^*\fnm\right\|_{\lpt}\left\|\left(\Aonelnm-\Aonelz\right)P_n^*P_n g\right\|_{\lpt}\\
	\to
	0,
	$$
	and
	$$
	II
	=
	\left|\left<P_n^*\fnm,\Aonelz \left(P_n^*P_n-I\right)g\right>_{\lpt}\right|\\
	\leq
	\left\|P_n^*\fnm\right\|_{\lpt}\left\|\Aonelz \left(P_n^*P_n-I\right)g\right\|_{\lpt}
	\to0,
	$$
since $\left\|P_n^*\fnm\right\|_{\lpt}$ is bounded, and since $\Aonelnm-\Aonelz$ tends strongly to $0$ as an operator $\hpt\to\lpt$ and $P_n^*P_n$ tends strongly to $I$ in $\hpt$ by Lemma \ref{conv-h2}. Finally
	$$
	III
	=
	\left|\left< P_n^*\fnm-\phiz,\Aonelz g\right>_{\lpt}\right|\\
	\to
	0,
	$$
since $P_n^*\fnm\to\phiz$ strongly in $\lpt$ by assumption.

\end{proof}

\begin{lem}\label{neg-eig-mz}
For $n$ sufficiently large, $\Mnz$ has exactly $K_n:=n-neg(\Aonez)+neg(\Atwoz)+neg(l^0)$ negative eigenvalues.
\end{lem}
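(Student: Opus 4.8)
The plan is to exploit the block-diagonal structure of $\Mnz$. Since
\[
\Mnz
=
\matthree{-P_n\Aonez P_n^*}{0}{0}{0}{Q_n\Atwoz Q_n^*}{0}{0}{0}{Pl^0}
\]
is symmetric and block-diagonal, its eigenvalues (with multiplicity) are the union of those of the three diagonal blocks, so $neg(\Mnz)=neg(-P_n\Aonez P_n^*)+neg(Q_n\Atwoz Q_n^*)+neg(Pl^0)$, and it suffices to identify each of the three summands.

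For the first block I would note that, because the $\xi_i$ were chosen to be orthonormal eigenvectors of $\Aonez$ with eigenvalues $\alpha_i$, a one-line computation gives $\left(P_n\Aonez P_n^*a\right)_i=\left<\Aonez\sum_j a_j\xi_j,\xi_i\right>_{\lpt}=\alpha_i a_i$; that is, $P_n\Aonez P_n^*$ is literally the diagonal matrix $\mathrm{diag}(\alpha_1,\dots,\alpha_n)$ on $\R^n$. Recall that (per Remark \ref{remark}) $\Aonez$ acts on $\hptz$, and that it is a Schr\"odinger-type operator with discrete spectrum that is bounded below and tends to $+\infty$ (established in \S\ref{the-operators}); hence it has only finitely many non-positive eigenvalues. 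Moreover, under the running hypothesis that $\ker\Aonez$ consists of the constants, the restriction of $\Aonez$ to $\lptz$ has trivial kernel, so none of the $\alpha_i$ vanish. Consequently, once $n$ is large enough that $\alpha_n>0$ (equivalently, that $\{\alpha_1,\dots,\alpha_n\}$ contains all the negative eigenvalues of $\Aonez$), exactly $neg(\Aonez)$ of the $\alpha_i$ are negative and the remaining $n-neg(\Aonez)$ are strictly positive; therefore $neg(-P_n\Aonez P_n^*)=n-neg(\Aonez)$.

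The second block is handled in exactly the same way, but more easily: $Q_n\Atwoz Q_n^*=\mathrm{diag}(\beta_1,\dots,\beta_n)$, and $\Atwoz$ (acting on $\hpt$) likewise has only finitely many negative eigenvalues, so for $n$ large $neg(Q_n\Atwoz Q_n^*)=neg(\Atwoz)$ — here no discussion of the kernel is needed, since zero eigenvalues do not contribute to a count of negative ones. The third block is the scalar $Pl^0$, and since $P>0$ we have $neg(Pl^0)=neg(l^0)$. Adding the three contributions yields $neg(\Mnz)=\bigl(n-neg(\Aonez)\bigr)+neg(\Atwoz)+neg(l^0)=K_n$, as claimed.

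There is no genuine obstacle here; the argument is essentially the observation that $\Mnz$ is diagonal once written in the $\{\xi_i\},\{\zeta_i\}$ bases. The only points that need care are (i) quoting the spectral facts about $\Aonez$ and $\Atwoz$ from \S\ref{the-operators} so that ``$n$ sufficiently large'' is meaningful, i.e. that each operator has finitely many negative eigenvalues, and (ii) being precise that the bookkeeping $n-neg(\Aonez)$ (rather than $n-neg(\Aonez)-\dim\ker\Aonez$) uses that the kernel of $\Aonez$ \emph{on the zero-mean subspace} $\lptz$ is trivial, which is exactly what the hypothesis $\ker\Aonez=\{\text{constants}\}$ provides.
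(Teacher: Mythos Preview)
Your proof is correct and follows essentially the same approach as the paper's: exploit the block-diagonal structure of $\Mnz$, use the hypothesis $\ker\Aonez=\{\text{constants}\}$ on $\lptz$ to rule out zero eigenvalues in the first block, and take $n$ large enough that the truncations capture all negative eigenvalues of $\Aonez$ and $\Atwoz$. Your version is slightly more explicit in writing $P_n\Aonez P_n^*=\mathrm{diag}(\alpha_1,\dots,\alpha_n)$, but the argument is the same.
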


\begin{proof}
Since $\Mnz$ is diagonal, we may consider each entry along the diagonal separately. The number of negative eigenvalues of $-P_n\Aonez P^*_n$ equals the number of positive eigenvalues of $P_n\Aonez P^*_n$, namely $pos(P_n\Aonez P^*_n)$. Since we assume that $\ker{\Aonez}=\{\text{constants}\}$, and since our domain does not include constant functions, $P_n\Aonez P^*_n$ has no null space. Therefore, $pos(P_n\Aonez P^*_n)=n-neg(P_n\Aonez P^*_n)$. However, for $n$ that is sufficiently large $neg(P_n\Aonez P^*_n)=neg(\Aonez)$ by our definition of $P_n$. Thus, the contribution from the first term on the diagonal is $n-neg(\Aonez)$. The contribution to the negative spectrum from the next term on the diagonal is $neg(\Atwoz)=neg(Q_n\Atwoz Q^*_n)$ for $n$ that is sufficiently large. The contribution from the last term depends upon the sign of $l^0$ and is denoted by $neg(l^0)$. Combining these observations we get our result.
\end{proof}

\begin{prop}\label{lemma2}
There exist $N\in\N$ and $\ula>0$ such that for any $n>N$ and for all $\la\in\left[0,\ula\right]$, $neg(\Mnl)\geq neg(\Mnz)=K_n$.
\end{prop}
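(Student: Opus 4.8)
The plan is a min--max argument. Since $\Mnl$ is finite-dimensional, self-adjoint, and varies continuously in $\la$ for $\la\ge0$ (as established above), it suffices to exhibit, for every $\la\in[0,\ula]$, a subspace of dimension $K_n=neg(\Mnz)$ on which the quadratic form of $\Mnl$ is strictly negative; then $neg(\Mnl)\ge K_n$. The candidate is $W_n^-$, the span of the eigenvectors of $\Mnz$ with negative eigenvalue, which has dimension $K_n$ by Lemma \ref{neg-eig-mz} (for $n$ large). Because $\Mnz$ is block-diagonal in the eigenbases $\{\xi_i\}$ of $\Aonez$ and $\{\zeta_j\}$ of $\Atwoz$, we split $W_n^-=W_{\mathrm{fix}}\oplus W_{\mathrm{high}}(n)$ orthogonally: for a large threshold $R>0$ (fixed below) set $W_{\mathrm{high}}(n)=\mathrm{span}\{\xi_i:\alpha_i>R\}$ and $W_{\mathrm{fix}}=\mathrm{span}\{\xi_i:0<\alpha_i\le R\}\oplus\mathrm{span}\{\zeta_j:\beta_j<0\}\oplus(\text{the third coordinate, if }l^0<0)$. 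The key structural point is that $W_{\mathrm{fix}}$ is a \emph{fixed}, $n$-independent finite-dimensional subspace once $n$ exceeds the largest index in it (there are finitely many $\alpha_i\le R$, and $\Atwoz$ has only $neg(\Atwoz)$ negative eigenvalues). On $W_{\mathrm{high}}(n)$ one has $\langle\Mnz u,u\rangle\le-R\|u\|^2$, and on $W_n^-$ one has $\langle\Mnz u,u\rangle\le\sigs\|u\|^2$, since $\Sigma(\Mnz)\subseteq\Sigma(\Mz)$ and $[\sigs,0)\cap\Sigma(\Mz)=\emptyset$.

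Next I would record two estimates for $\Mnl-\Mnz$, both uniform in $n$. \emph{Crude uniform bound}: the operators $\Aonel-\Aonez=\sum_\pm\int\mu^\pm_e(\Ql_\pm-\Proj^\pm)(\cdot)\,dv$ and $\Atwol-\Atwoz=\la^2 I-\sum_\pm\int\mu^\pm_e\hat{v}_2(\Ql_\pm-\Proj^\pm)(\hat{v}_2\,\cdot)\,dv$, as well as $\Bl,\Cl,\Dl$, are bounded on $\lpt$ by a single constant $C_1$ for all $\la\in[0,1]$; this uses $\|\Ql_\pm\|\le1$, $\|\Proj^\pm\|\le1$ (Lemma \ref{propql}) and the inequality $\|\int\mu^\pm_e k\,dv\|_{\lpt}\le C\|k\|_w$, which follows from Cauchy--Schwarz and $|\mu^\pm_e|\le w$. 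Hence $|\langle(\Mnl-\Mnz)u_h,u_h\rangle|\le C_1\|u_h\|^2$ for $u_h\in W_{\mathrm{high}}(n)$ (only the $(1,1)$ block contributes) and $|\langle(\Mnl-\Mnz)u_f,u_h\rangle|\le C_1\|u_f\|\,\|u_h\|$ for the cross terms. \emph{Refined bound on the fixed space}: by Lemma \ref{propql}\eqref{laz}, $\Ql_\pm\to\Proj^\pm$ strongly, so $\Aonel-\Aonez$, $\Atwol-\Atwoz$, $\Bl$, $\Cl$ and $\Dl$ all converge strongly to $0$ on $\lpt$ as $\la\to0$ (here one uses $\mathcal{B}^0=\mathcal{C}^0=\mathcal{D}^0=0$, which is precisely why $\Mz$ is block-diagonal; this comes from the $v_1$-parity of $\Proj^\pm$ (Lemma \ref{proj-parity}), an analogous $v_2$-reflection relating $\Proj^+$ and $\Proj^-$, and the symmetry $\mu^+(e,p)=\mu^-(e,-p)$; cf.\ \S\ref{the-operators}), and $l^\la\to l^0$. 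Because $W_{\mathrm{fix}}$ is finite-dimensional and $n$-independent, this forces $\delta(\la):=\sup\{|\langle(\Mnl-\Mnz)u,u\rangle|:u\in W_{\mathrm{fix}},\ \|u\|=1\}\to0$ as $\la\to0$.

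Finally, write $u=u_f+u_h\in W_n^-$ with $u_f\in W_{\mathrm{fix}}$, $u_h\in W_{\mathrm{high}}(n)$ orthogonal; the $\Mnz$-cross terms vanish, so
\[
\langle\Mnl u,u\rangle\le\sigs\|u_f\|^2-R\|u_h\|^2+\delta(\la)\|u_f\|^2+2C_1\|u_f\|\,\|u_h\|+C_1\|u_h\|^2 .
\]
Using $2C_1\|u_f\|\,\|u_h\|\le\tfrac{R}{4}\|u_h\|^2+\tfrac{4C_1^2}{R}\|u_f\|^2$ gives $\langle\Mnl u,u\rangle\le\bigl(\sigs+\delta(\la)+\tfrac{4C_1^2}{R}\bigr)\|u_f\|^2+\bigl(-\tfrac{3R}{4}+C_1\bigr)\|u_h\|^2$. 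Now fix $R$ (depending only on $C_1$, $|\sigs|$, the finitely many negative eigenvalues of $\Atwoz$, and $P|l^0|$) so large that $\tfrac{4C_1^2}{R}<\tfrac14|\sigs|$ and $-\tfrac{3R}{4}+C_1<0$; this determines $W_{\mathrm{fix}}$ and hence $N$ (also taken large enough for Lemma \ref{neg-eig-mz} and so that the projections $P_n,Q_n$ do not truncate $W_{\mathrm{fix}}$). Then pick $\ula\in(0,1]$ with $\delta(\la)<\tfrac14|\sigs|$ for $\la\in[0,\ula]$. For $n>N$ and $\la\in[0,\ula]$ the right-hand side is $\le-\tfrac12|\sigs|\,\|u_f\|^2+(\text{a negative multiple of})\|u_h\|^2<0$ for all $u\in W_n^-\setminus\{0\}$, whence $neg(\Mnl)\ge\dim W_n^-=K_n=neg(\Mnz)$.

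The main obstacle is the absence of operator-norm convergence $\Mnl\to\Mnz$ as $\la\to0$ — only strong convergence $\Ql_\pm\to\Proj^\pm$ is available — which rules out a one-line perturbation estimate. The splitting of $W_n^-$ is engineered around this: on the $n$-independent low part $W_{\mathrm{fix}}$, strong convergence on a fixed finite-dimensional space suffices, while on the high part $W_{\mathrm{high}}(n)$ the eigenvalues of $\Mnz$ are so negative that a merely uniformly bounded perturbation cannot raise them to $0$. The genuine work is in establishing the uniformity in $n$ of $C_1$ and the convergence $\delta(\la)\to0$.
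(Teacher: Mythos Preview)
Your proof is correct and takes a genuinely different route from the paper's. The paper argues by contradiction: if the conclusion fails along sequences $n_k\to\infty$, $\la_k\to0$, then by continuity some eigenvalue of $\mathcal{M}_{n_k}^{\la}$ crosses the spectral gap value $\sigs$, producing eigenvectors $u_{n_k}$ with $\mathcal{M}_{n_k}^{\la_k}u_{n_k}=\sigs u_{n_k}$; a compactness argument (uniform $H^1_P$ bounds, then elliptic bootstrapping to $\hpt$) and the weak-convergence machinery of Lemma~\ref{eq-converge} then pass to the limit to obtain a nontrivial solution of $\Mz u=\sigs u$, contradicting $\sigs\notin\Sigma(\Mz)$.

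Your argument is a direct min--max estimate on the negative eigenspace $W_n^-$ of $\Mnz$, which you split into a fixed low-frequency piece $W_{\mathrm{fix}}$ (where strong convergence of $\Ql_\pm\to\Proj^\pm$ on a single finite-dimensional space yields $\delta(\la)\to0$) and a high-frequency piece $W_{\mathrm{high}}(n)$ (where the eigenvalues $-\alpha_i<-R$ are so negative that a merely uniformly bounded perturbation cannot lift the form to zero). The key observation that for $n>N$ the quadratic form $\langle(\Mnl-\Mnz)u_f,u_f\rangle$ is actually $n$-independent on $W_{\mathrm{fix}}$ (because $P_n^*\phi_f$, $Q_n^*\psi_f$ are then fixed functions in $\lpt$) is what makes your choice of $\ula$ uniform in $n$. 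This is more elementary: it avoids weak limits, the $H^1$ extraction, and Lemma~\ref{eq-converge} entirely. On the other hand, the paper's compactness argument is not wasted effort, since essentially the same machinery is reused in Lemma~\ref{exist} to pass from the truncated kernel of $\Mnmlnm$ to an actual kernel element of $\Mlz$; the proof of Proposition~\ref{lemma2} serves there as a template.
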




\begin{proof}
%
We show the existence of $\ula$ and $N$ by contradiction: If they do not exist, then for any $k>1$ there exist $n_k>k$ and $\la_k'<\frac{1}{k}$
for which $neg(\mathcal{M}_{n_k}^{\la_k'})<K_{n_k}$. We choose $k$ large enough for Lemma \ref{neg-eig-mz} to hold, and fix it. Since $\mathcal{M}_{n_k}^\la$ is a continuous mapping in $\la$, its spectrum varies continuously with $\la$. Therefore, since $neg(\mathcal{M}_{n_k}^0)=K_{n_k}>neg(\mathcal{M}_{n_k}^{\la_k'})$, at least one eigenvalue in the negative part of $\Sigma(\Mnl)$ must cross $0$ from left to right as $\la$ varies from $0$ to $\la_k'$.
Since there is no spectrum of $\mathcal{M}_{n_k}^0$ on the interval $(\sigma^*,0)$, this eigenvalue must also cross this interval, and, in particular, it must cross $\sigma^*$ at some value $\la_k\in(0,\la_k')$.


To summarize this argument, our contradiction asserts that there exist $n_k\to\infty$ and $\la_k\to0$ for which
%
%
	\begin{equation}\label{kernel}
	\mathcal{M}_{n_k}^{\la_k} u_{n_k}
	=
	\sigma^* u_{n_k},
	\end{equation}
where $0\neq u_{n_k}^T=(\phi_{n_k},\psi_{n_k},b_{n_k})\in\R^n\times\R^n\times\R$. To simplify notation we drop the ``$k$'' index and simply write $\la_n$ instead of $\la_k$ and $n$ instead of $n_k$. Our plan is to show that this contradicts the fact that $\sigma^*$ is not an eigenvalue of $\Mz$, as follows: First we show that $u_n$ has some nontrivial limit in an appropriate space; then we use Lemma \ref{eq-converge} to show that the operators $\Mklk$ converge (in the weak sense defined in the statement of that lemma) to $\Mz$.

Since $P_n^*$ and $Q_n^*$ both have trivial kernels, we may normalize the eigenvectors as follows
	\begin{equation}\label{one}
	\|P_n^*\fk\|_{\lpt}+
	\|Q_n^*\psi_n\|_{\lpt}+
	|b_n|=1,
	\end{equation}
and, therefore, there exist the two weak limits in $\lpt$ and the limit in $\R$, respectively (after extracting a subsequence)
	\begin{equation}\label{weak-limits}
	P_n^*\fnm\rightharpoonup\phi,
	\hspace{1cm}
	Q_n^*\pnm\rightharpoonup\psi,
	\hspace{1cm}
	b_n\to b.
	\end{equation}
Our goal is to show that $(\phi,\psi,b)$ is nontrivial, and that
	\begin{equation}\label{contradiction}
	\Mz
	\left(\begin{array}{c}
	\phi\\
	\psi\\
	b
	\end{array}\right)=
	\sigma^*
	\left(\begin{array}{c}
	\phi\\
	\psi\\
	b
	\end{array}\right),
	\end{equation}
thus reaching a contradiction to the fact that $\sigma^*\notin\ker(\Mz)$. Note that we do not have to worry about showing that $(\phi,\psi,b)$ is not a multiple of $(1,0,0)$, since $(1,0,0)$ is in the kernel of $\Ml$ for all $\la\geq0$, but $\sigma^*\neq0$. We begin by showing that both $P_n^*\fnm$ and $P_n^*\pnm$ are bounded in $H_P^1$. The first row of \eqref{kernel} is
	\begin{equation}\label{line1}
	P_n\left(-\Aonelnm P_n^*\fnm+\Blnm Q_n^*\pnm+\Clnm\bnm\right)=\sigs\fnm.
	\end{equation}
Write $\phi_n=\left(\phi_n^1,\phi_n^2,\dots,\phi_n^n\right)$, and take the inner product in $\R^n$ of \eqref{line1} with $\fnm$:
	\begin{eqnarray*}
	\sigs|\fnm|^2&=&\fnm\cdot P_n\left(-\Aonelnm P_n^*\fnm+\Blnm Q_n^*\pnm+\Clnm\bnm\right)\\
	&=&
	\left<P_n^*\fnm,-\Aonelnm P_n^*\fnm\right>_{\lpt}
	+\fnm\cdot P_n\left(\Blnm Q_n^*\pnm+\Clnm\bnm\right)
	=I+II.
	\end{eqnarray*}
Since $1\geq\|P_n^*\phi_n\|^2_{\lpt}=\sum_{k=1}^n|\phi_n^k|^2=|\phi_n|^2$, the left hand side of the above equation is uniformly (in $n$) bounded. Moreover,
	\begin{eqnarray*}
	I
	&=&
	\left<P_n^*\fnm,-\Aonelnm P_n^*\fnm\right>_{\lpt}\\
	&=&
	\left<P_n^*\fnm,\p_x^2 P_n^*\fnm\right>_{\lpt}+\left<P_n^*\fnm,\left(\sum_\pm\int\mu^\pm_e\;dv\right)P_n^*\fnm-\sum_\pm\int\mu^\pm_e\Qlnm_\pm \left(P_n^*\fnm\right)dv\right>_{\lpt}\\
	&=&
	-\left\|\p_x P_n^*\fnm\right\|_{\lpt}^2+I_1+I_2
	\end{eqnarray*}
where it is easily seen that
	$$
	|I_1|
	=
	\left|\left<P_n^*\fnm,\left(\sum_\pm\int\mu^\pm_e\;dv\right)P_n^*\fnm\right>_{\lpt}\right|
	\leq
	\left\|P_n^*\fnm\right\|^2_{\lpt}\sum_\pm\sup_{x}\int|\mu^\pm_e|\;dv
	<
	C<\infty
	$$
and, as in the proof of Lemma \ref{propql}\eqref{qlnorm}, we have
	\begin{align*}
	|I_2|
	&=
	\left|\int_0^PP_n^*\phi_n\sum_\pm\int\mu^\pm_e\Qlnm_\pm(P_n^*\phi_n)dv\;dx\right|\\
	&=
	\left|\sum_\pm\int_{-\infty}^0\la_n e^{\la_n s}\int_0^P\int\mu^\pm_e\;P_n^*\phi_n(x)\;\;P_n^*\phi_n(X^\pm(s))\;dv\;dx\;ds\right|\\
	&\leq
	2\left\|P_n^*\phi_n\right\|_w^2
	\leq
	C\left\|P_n^*\phi_n\right\|_{\lpt}^2
	<C<
	\infty.
	\end{align*}
In addition,
	\begin{eqnarray*}
	|II|
	&=&
	\left|\fnm\cdot P_n\left(\Blnm Q_n^*\pnm+\Clnm\bnm\right)\right|\\
	&=&
	\left|\left<P_n^*\fnm,\Blnm Q_n^*\pnm+\Clnm\bnm\right>_{\lpt}\right|\\
	&\leq&
	\left\|P_n^*\fnm\right\|_{\lpt}\left(\left\|\Blnm\right\|_{\lpt\to\lpt}\left\| Q_n^*\pnm\right\|_{\lpt}
	+
	\left\|\Clnm\bnm\right\|_{\lpt}\right)\\
	&<&
	C<\infty.
	\end{eqnarray*}
by Lemma \ref{properties2}.
Thus $\left\|\p_x P_n^*\fnm\right\|_{\lpt}$ is bounded \emph{uniformly} in $n$ in $\lpt$, so that $\left\|P_n^*\fnm\right\|_{H_P^1}$ is uniformly bounded. Therefore $P_n^*\fnm$ converges strongly in $\lpt$ and weakly in $H^1_P$ to $\phi$ by \eqref{weak-limits}. Similarly, we show that $Q_m^*\pnm$ is bounded in $H_P^1$ by considering the second row of \eqref{kernel}:
\begin{equation}
Q_m\left(\left(\Blnm\right)^* P_n^*\fnm+\Atwolnm P_n^*\pnm-\Dlnm\bnm\right)=\sigs\pnm.
\end{equation}
The analysis is similar. Hence $P_n^*\fnm$ and $P_n^*\pnm$ have strong limits in $\lpt$ and weak limits in $H^1_P$, which must be $\phi$ and $\psi$ respectively. Due to \eqref{one} the limit $\left(P_n^*\fnm,P_n^*\pnm,b_n\right)$ cannot be trivial: $\left(\phi,\psi,b\right)\neq(0,0,0)$.

It remains to be shown that, in fact, $\phi$ and $\psi$ lie in $\hpt$ and hence, in the domain of $\Mz$. Consider again equation \ref{line1}: We rewrite this equation, keeping only the Laplacian on the left-hand-side, moving all other terms to the right-hand-side and applying $P_n^*$. We get
	\begin{align*}
	P_n^*P_n\p_x^2\left(P_n^*\fnm\right)
	=&
	-P_n^*P_n\left(\sum_\pm\int\mu_e^\pm\;dv\right)P_n^*\fnm
	+
	P_n^*P_n\sum_\pm\int\mu_e^\pm\Qlnm_\pm\left(P_n^*\fnm\right)dv\\
	&
	-P_n^*P_n\Blnm Q_n^*\pnm-P_n^*P_n\Clnm\bnm+\sigma^*P_n^*\phi_n.
	\end{align*}
The last term may be written as $\sigma^*P_n^*P_nP_n^*\phi_n$, so that we may denote the entire right-hand-side as $P_n^*P_nh_n$ for brevity. Now, we know that $h_n\in\lpt$ has a strong limit in $\lpt$, and $P_n^*P_n\to I$ strongly in $\lpt$. Therefore, the right-hand-side converges in $\lpt$. Using Lemma \ref{h-1} and since $P_n^*\phi_n$ converges weakly in $H^1_P$ to $\phi$, the left hand side converges weakly in the $H^{-1}_P$ sense to $\p_x^2\phi$. By elliptic regularity one can bootstrap and deduce that, in fact, $\phi\in\hpt$.
	
Finally, using Lemma \ref{eq-converge}, we know that the approximate equations \eqref{kernel} tend (weakly, in the sense of the lemma) to the equations $\Mz u=\sigs u$. But $\sigma^*\notin\Sigma(\Mz)$. This contradiction ends the proof.
\end{proof}

\begin{prop}\label{trivial-atwoz}
If the null space of $\Atwoz$ is trivial, then for any $n>N$ and for all $\la\in\left[0,\ula\right]$, $neg(\Mnl)= neg(\Mnz)=K_n$, where $N$ and $\ula$ are as in Proposition \ref{lemma2}.
\end{prop}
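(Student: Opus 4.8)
The plan is to establish the inequality \emph{complementary} to the one already supplied by Proposition~\ref{lemma2}: since that proposition gives $neg(\Mnl)\ge neg(\Mnz)=K_n$ for $n>N$ and $\la\in[0,\ula]$, it suffices to show $neg(\Mnl)\le K_n$ on the same range. The role of the new hypothesis $\ker\Atwoz=\{0\}$ is to upgrade the one-sided statement $[\sigma^*,0)\cap\Sigma(\Mz)=\emptyset$ recorded above to a genuine \emph{two-sided} gap of $\Sigma(\Mz)$ around $0$. Indeed, restricted to $\hptz\times\hpt\times\R$ the diagonal operator $\Mz=\mathrm{diag}(-\Aonez,\Atwoz,Pl^0)$ is invertible: on $\lptz$ the kernel of $\Aonez$ (the constants) has been removed, so $0\notin\Sigma(\Aonez)$; $\Atwoz$ is invertible by assumption; and $l^0\ne0$. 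As the spectrum of $\Mz$ is discrete with no finite accumulation point, there is $\delta>0$ with $(-\delta,\delta)\cap\Sigma(\Mz)=\emptyset$, and, by the inclusion $\Sigma(\Mnz)\subseteq\Sigma(\Mz)$ noted above, $(-\delta,\delta)\cap\Sigma(\Mnz)=\emptyset$ for every $n$.

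The main step will be to show that this gap near $0$ is stable under the truncation and the passage from $\Mnz$ to $\Mnl$, uniformly in $n$: possibly after shrinking $\ula$ and enlarging $N$ (for which Proposition~\ref{lemma2} still holds), one has $\Sigma(\Mnl)\cap(-\tfrac\delta2,\tfrac\delta2)=\emptyset$ for all $n>N$ and $\la\in[0,\ula]$. Granting this the proposition follows at once: for fixed $n$ the eigenvalues of $\Mnl$ are real and depend continuously on $\la$ (continuity of $\la\mapsto\Mnl$ was noted above, and holds at $\la>0$ by Lemma~\ref{propml}), so if none of them enters $(-\delta/2,\delta/2)$ then none can cross $0$; hence $neg(\Mnl)=neg(\Mnz)=K_n$ throughout $[0,\ula]$ by Lemma~\ref{neg-eig-mz}, and combined with Proposition~\ref{lemma2} this yields the asserted equality (and also $\ker\Mnl=\{0\}$ there).

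I would prove the stability of the gap by contradiction, rerunning essentially verbatim the argument of Proposition~\ref{lemma2} with the single eigenvalue $\sigma^*$ replaced by a variable value in $(-\delta/2,\delta/2)$. If the gap failed, there would be $n_k\to\infty$, $\la_k\to0$, $\nu_k\in(-\delta/2,\delta/2)$ and nontrivial $u_{n_k}^T=(\fnm,\pnm,\bnm)$ with $\Mnmlnm u_{n_k}=\nu_k u_{n_k}$ (with $n=n_k$, $\la_n=\la_k$); after the normalization $\|P_n^*\fnm\|_{\lpt}+\|Q_n^*\pnm\|_{\lpt}+|\bnm|=1$, the computations pairing the first and second rows of the eigenvalue equation with $\fnm$ and $\pnm$ carry over unchanged — the left-hand sides being bounded uniformly in $k$ because $|\nu_k|<\delta/2$ and $|\fnm|,|\pnm|\le1$ — and produce uniform $H^1_P$ bounds on $P_n^*\fnm$ and $Q_n^*\pnm$. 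Extracting a subsequence gives $P_n^*\fnm\rightharpoonup\phi$, $Q_n^*\pnm\rightharpoonup\psi$ weakly in $H^1_P$ and strongly in $\lpt$, $\bnm\to b$ and $\nu_k\to\nu\in[-\delta/2,\delta/2]$, with $(\phi,\psi,b)\ne0$ by the normalization and $\phi,\psi\in\hpt$ by the same elliptic-regularity bootstrap (through Lemma~\ref{h-1}) used in Proposition~\ref{lemma2}. Finally Lemma~\ref{eq-converge}, applied along $\la_k\to0$, gives $\Mnmlnm u_{n_k}\rightharpoonup\Mz(\phi,\psi,b)$ while $\nu_k u_{n_k}\to\nu(\phi,\psi,b)$, so $\Mz(\phi,\psi,b)=\nu(\phi,\psi,b)$ with $(\phi,\psi,b)\ne0$; thus $\nu\in\Sigma(\Mz)\cap(-\delta,\delta)$, contradicting the gap.

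The genuine obstacle is exactly this last step. Because every operator entering $\Mnl$ converges to its $\la=0$ counterpart only \emph{strongly} and not in operator norm, the gap near $0$ cannot be transferred from $\Mnz$ to $\Mnl$ by any perturbation bound uniform in the truncation parameter $n$, so one is forced back into the compactness scheme of Proposition~\ref{lemma2}. Everything else — the invertibility of $\Mz$ on the restricted space, the identity $neg(\Mnz)=K_n$, and the intermediate-value bookkeeping — is routine.
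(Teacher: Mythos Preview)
Your argument is correct, but it takes a different route from the paper's. The paper does not rerun the contradiction with a variable eigenvalue in a two-sided gap; instead it simply observes that the \emph{same} argument as in Proposition~\ref{lemma2} yields the mirror inequality $pos(\Mnl)\ge pos(\Mnz)$ for $n>N$ and $\la\in[0,\ula]$, and then finishes by pure counting: under the hypothesis $\ker\Atwoz=\{0\}$ one has $\ker\Mnz=\{0\}$, hence $neg(\Mnz)+pos(\Mnz)=2n+1$; combining $neg(\Mnl)\ge neg(\Mnz)$ and $pos(\Mnl)\ge pos(\Mnz)$ with $neg(\Mnl)+pos(\Mnl)\le 2n+1$ forces both inequalities to be equalities. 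Your approach instead proves directly that the two-sided spectral gap of $\Mz$ around $0$ persists for $\Mnl$, which is conceptually transparent and yields the extra information $\ker\Mnl=\{0\}$ on the whole range; the paper's counting argument is shorter because it avoids tracking a moving eigenvalue $\nu_k$ and reduces the new work to a single sentence.
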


\begin{proof}
Using the same contradiction argument as in Proposition \ref{lemma2}, we can show that $pos(\Mnl)\geq pos(\Mnz)$ for small $\la$ and large $n$. We conclude that an increase in either the number of negative eigenvalues or positive eigenvalues can only be due to zero eigenvalues of $\Mnz$ that move left or right as $\la$ increases. However, under the assumption that the null space of $\Atwoz$ is trivial, $\Mnz$ has a trivial kernel as well, and, therefore both $neg(\Mnl)$ and $pos(\Mnl)$ must remain constant (and equal to their values when $\la=0$) for small $\la$ and large $n$.
\end{proof}
\begin{lem}\label{large}
There exists $\La^*\geq\La$ ($\La$ as in Lemma \ref{properties1}) such that for every $n\in\N$ and for any $\la\geq\La^*$, $\Mnl$ has precisely $n+1$ negative eigenvalues.
\end{lem}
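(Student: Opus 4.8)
The plan is to fix a single large threshold $\La^{*}$ so that, for every $\la\ge\La^{*}$, each diagonal block of $\Mnl$ is definite with the ``correct'' sign and is large compared with the off-diagonal coupling, and then to read off $neg(\Mnl)$ from a $2\times 2$ block decomposition via Sylvester's law of inertia. The decisive point is that every constant and every threshold below can be chosen \emph{independently of the truncation parameter $n$}: one uses $P_n$, $Q_n$ only as norm-one projections, and one only ever needs quadratic-form bounds for $\Aonel,\Atwol,\Bl,\Cl,\Dl,l^{\la}$, all of which act on the $n$-independent spaces $\lptz,\lpt,\R$.

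First I would record the uniform estimates. Writing $\Atwol=-\p_x^2+\la^2 I+S_\la$ with $-\p_x^2\ge 0$ and, by Lemma~\ref{propql} (the uniform bound $\|\Ql_\pm\|=1$ together with the decay hypothesis on $\mu^\pm$), $\|S_\la\|_{\lpt\to\lpt}\le C$ uniformly in $\la$, one gets $\Atwol\ge(\la^2-C)I\ge\tfrac12\la^2 I$ for $\la$ large. Since $|\Ql_\pm(\hat v_1)|\le 1$ pointwise, $|l^{\la}|\le C$ uniformly, so $P(\la^2-l^{\la})\ge\tfrac12 P\la^2$ for $\la$ large; the same sort of estimate gives $\|\Bl\|,\|\Cl\|,\|\Dl\|\le C$ uniformly in $\la$. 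Finally, Lemma~\ref{properties1} furnishes $\La$ with $\Aonel>0$ on $\lptz$ for $\la\ge\La$; a refinement of its proof --- splitting into low and high frequencies, using $\Aonel=-\p_x^2+T_\la$ with $\|T_\la\|$ uniformly bounded and $T_\la\to 0$ strongly (Lemma~\ref{propql}) --- yields a uniform bound $\Aonel\ge\delta_0 I$ on $\lptz$ for all $\la\ge\La$, with $\delta_0>0$.

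Next I would decompose $\Mnl=\left(\begin{smallmatrix}A&B\\ B^{*}&D\end{smallmatrix}\right)$, where $A=-P_n\Aonel P_n^{*}$ is the $n\times n$ block in $\phi$, $D$ is the $(n+1)\times(n+1)$ block in $(\psi,b)$, and $B$ is the coupling. For $\la\ge\La$ the estimates above give $A\le-\delta_0 I_n$, so $A$ is invertible with $neg(A)=n$ and $\|A^{-1}\|\le\delta_0^{-1}$. The block-diagonal part of $D$ has $n$ eigenvalues $\ge\tfrac12\la^2$ (from $Q_n\Atwol Q_n^{*}$) and one eigenvalue $\le-\tfrac12 P\la^2$ (the $b$-entry), while the off-diagonal part $-Q_n\Dl$ has norm $\le C$; so by Weyl's inequality, for $\la$ large $D$ has $n$ eigenvalues $\ge\tfrac14\la^2$ and one $\le-\tfrac14 P\la^2$, hence $D$ is invertible with $neg(D)=1$ and $\|D^{-1}\|\le C'\la^{-2}$. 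Because $D$ is invertible, $\Mnl$ is congruent to $\mathrm{diag}(A-BD^{-1}B^{*},\,D)$, so by Sylvester's law of inertia $neg(\Mnl)=neg(D)+neg(A-BD^{-1}B^{*})=1+neg(A-BD^{-1}B^{*})$. Finally $\|BD^{-1}B^{*}\|\le\|B\|^{2}\|D^{-1}\|\le C''\la^{-2}$, so choosing $\La^{*}\ge\La$ with $C''(\La^{*})^{-2}<\delta_0$ makes $A-BD^{-1}B^{*}<0$, i.e. $neg(A-BD^{-1}B^{*})=n$. Thus $neg(\Mnl)=n+1$ for every $n$ and every $\la\ge\La^{*}$, which is the claim.

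I expect the only genuine difficulty to be the bookkeeping of uniformity in $n$: one must verify at each step that the thresholds are produced solely from the $n$-independent operator bounds of the second paragraph --- in particular that $\delta_0$ can be taken uniform over $\la\in[\La,\infty)$ --- after which the finite-dimensional linear algebra (Schur complement, Sylvester's law of inertia, Weyl's perturbation inequality) is routine.
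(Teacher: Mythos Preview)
Your proof is correct, but it takes a different route from the paper's. The paper argues more directly via the variational characterization of eigenvalues: since $\Mnl$ is a symmetric $(2n+1)\times(2n+1)$ matrix, it suffices to exhibit an $n$-dimensional subspace on which the quadratic form is positive and an $(n+1)$-dimensional subspace on which it is negative. For the first, one takes the $\psi$-coordinates $(0,\psi,0)$ and uses $\Atwol>0$ for $\la>\La$; for the second, one takes the $(\phi,b)$-coordinates $(\phi,0,b)$, where the form is $-\langle\Aonel P_n^*\phi,P_n^*\phi\rangle+2\langle\Cl b,P_n^*\phi\rangle-P(\la^2-l^\la)b^2$, and absorbs the cross term by Young's inequality with parameter $1/\la$, using $\Aonel>\gamma$ and the uniform bounds on $\Cl$ and $l^\la$. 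Your Schur-complement/Sylvester argument is equally valid and perhaps more systematic, with the mild cost of having to control $\|D^{-1}\|$; the paper's min--max argument is shorter and avoids inverting anything. One small remark: you do not actually need a ``refinement'' of Lemma~\ref{properties1} --- its part~(\ref{operators-positive}) already gives the uniform lower bound $\Aonel>\gamma>0$ for all $\la\ge\La$, which is precisely your $\delta_0$.
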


\begin{proof}
Since $\Mnl:\R^n\times\R^n\times\R\to\R^n\times\R^n\times\R$ is symmetric, it has $2n+1$ eigenvalues, all real. Letting $\psi\in\R^n$, we have

	\begin{equation}
	\left<\Mnl \left(\begin{array}{c}0\\\psi\\0\end{array}\right),\left(\begin{array}{c}0\\\psi\\0\end{array}\right)\right>_{\R^{2n+1}}
	=
	\left<Q_n\Atwol Q_n^*\psi,\psi\right>_{\R^{n}}
	=
	\left<\Atwol \;Q_n^*\psi,Q_n^*\psi\right>_{\lpt}
	>
	0
	\end{equation}
for all $\la>\La$ by Lemma \ref{properties1}. This implies that $\Mnl$ is positive definite on a subspace of dimension $n$, and, therefore it has at least $n$ positive eigenvalues. Similarly, we now show that there exists a subspace of dimension $n+1$ on which $\Mnl$ is negative definite: Let $(\phi,0,b)\in\R^n\times\R^n\times\R$ and consider
	\begin{align}\label{neg-def}
	\left<\Mnl \left(\begin{array}{c}\phi\\0\\b\end{array}\right),\left(\begin{array}{c}\phi\\0\\b\end{array}\right)\right>_{\R^{2n+1}}
	&=
	-\left<\Aonel P_n^*\phi,P_n^*\phi\right>_{\lpt}
	+
	2\left<\Cl b,P_n^*\phi\right>_{\lpt}
	-
	P(\la^2-l^\la)b^2.
	\end{align}
We estimate the middle term as follows:
	\begin{align*}
	2\left|\left<\Cl b,P_n^*\phi\right>_{\lpt}\right|
	\leq
	2\left\|\Cl b\right\|_{\lpt}\left\|P_n^*\phi\right\|_{\lpt}
	\leq
	\frac{\left\|\Cl b\right\|^2_{\lpt}}{\epsilon^2}+\epsilon^2\left\|P_n^*\phi\right\|^2_{\lpt}.
	\end{align*}
Letting $\epsilon^2=\frac{1}{\la}$, we have
	\begin{align*}
	\left<\Mnl \left(\begin{array}{c}\phi\\0\\b\end{array}\right),\left(\begin{array}{c}\phi\\0\\b\end{array}\right)\right>_{\R^{2n+1}}
	\leq
	-\left<\Aonel P_n^*\phi,P_n^*\phi\right>_{\lpt}
	+
	\frac{\left\|P_n^*\phi\right\|^2_{\lpt}}{\la}
	-
	P(\la^2-l^\la)b^2
	+
	\la\left\|\Cl b\right\|^2_{\lpt}.
	\end{align*}
Using the fact that $\Aonel>\gamma>0$, for all $\la>\La$ (see Lemma \ref{properties1}\eqref{operators-positive}), this expression is negative for all $\phi\in\R^n$ and $b\in\R$, since $l^\la$ and $\Cl$ are both uniformly bounded. Therefore, there exists  a $\La^*>0$ such that for every $\la\geq\La^*$ there exists an $n+1$ dimensional subspace on which $\Mnl$ is negative definite. We conclude that
	\begin{equation}
	neg\left(\Mnl\right)=n+1,
	\hspace{.5cm}
	\text{for all }\la>\La^*.
	\end{equation}
Notice that $\La^*$ does not depend upon $n$.
\end{proof}

\section{Limit as $n\to\infty$}\label{sec5}
\begin{lem}\label{solution}
Let $\la^*,\La^*, N$ be as above.
Fix any $n>N$. Then there exists $\la_{n}\in[\la^*,\La^*]$ such that $\Mnmlnm$ has a nontrivial kernel.
\end{lem}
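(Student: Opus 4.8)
The plan is to obtain $\la_n$ by an intermediate value argument: I compare the number of negative eigenvalues of $\Mnl$ at the two ends of $[\la^*,\La^*]$ and exploit continuity of the eigenvalues in $\la$ to force one of them to vanish somewhere in between.

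First I would make the count $K_n$ explicit. By Lemma \ref{neg-eig-mz}, for $n>N$ we have $K_n=n-neg(\Aonez)+neg(\Atwoz)+neg(l^0)$, and since $l^0\neq0$ exactly one of $neg(l^0)$, $neg(-l^0)$ equals $1$ and the other $0$, so $neg(l^0)=1-neg(-l^0)$ and
$$K_n=n+1+\big(neg(\Atwoz)-neg(\Aonez)-neg(-l^0)\big).$$
Under the hypothesis $neg(\Atwoz)>neg(\Aonez)+neg(-l^0)$ of Theorem \ref{mainthm}, the bracketed integer is at least $1$, hence $K_n\geq n+2$.

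Next I would read off the endpoint values. Proposition \ref{lemma2} gives $neg(\Mnl)\geq neg(\Mnz)=K_n$ for all $n>N$ and $\la\in[0,\la^*]$, so in particular $neg(\mathcal{M}^{\la^*}_{n})\geq K_n\geq n+2$; shrinking $\la^*$ if necessary (which leaves Proposition \ref{lemma2} intact) we may assume $\la^*<\La^*$. Lemma \ref{large} gives $neg(\mathcal{M}^{\La^*}_{n})=n+1$. Since $\la\mapsto\Mnl$ is continuous on $[\la^*,\La^*]$ (as noted just before Lemma \ref{conv-h2}) and each $\Mnl$ is a symmetric $(2n+1)\times(2n+1)$ matrix, its ordered eigenvalues $\mu_1(\la)\leq\cdots\leq\mu_{2n+1}(\la)$ are continuous functions of $\la$. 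Then $\mu_{n+2}(\la^*)<0$ while $\mu_{n+2}(\La^*)\geq0$, so the intermediate value theorem yields $\la_n\in[\la^*,\La^*]$ with $\mu_{n+2}(\la_n)=0$; that is, $\Mnmlnm$ has a nontrivial kernel.

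The main obstacle is really only the bookkeeping needed to guarantee that the two endpoint counts genuinely differ, together with citing the correct earlier result for each sign of $K_n-(n+1)$: when $K_n\geq n+2$ the lower bound from Proposition \ref{lemma2} suffices, whereas in the complementary case relevant to Theorem \ref{mainthm2}, where $neg(\Atwoz)<neg(\Aonez)+neg(-l^0)$ forces $K_n\leq n$, one must instead invoke Proposition \ref{trivial-atwoz} to get the \emph{equality} $neg(\mathcal{M}^{\la^*}_{n})=K_n\leq n<n+1$, and then run the same continuity argument with $\mu_{n+1}$ in place of $\mu_{n+2}$. Once the endpoint counts are settled, the zero crossing itself is a routine application of the intermediate value theorem to the continuous eigenvalue branches.
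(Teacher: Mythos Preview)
Your proposal is correct and follows essentially the same route as the paper's proof: compare $neg(\Mnl)$ at $\la=\la^*$ (via Proposition \ref{lemma2} or, for Theorem \ref{mainthm2}, Proposition \ref{trivial-atwoz}) with $neg(\Mnl)$ at $\la=\La^*$ (via Lemma \ref{large}), then invoke continuity of the eigenvalues in $\la$ to force a zero crossing. The only cosmetic difference is that you name the specific ordered eigenvalue branch $\mu_{n+2}$ (respectively $\mu_{n+1}$) that must vanish, while the paper simply says ``at least one eigenvalue must cross $0$''; the algebraic rewriting $K_n=n+1+\big(neg(\Atwoz)-neg(\Aonez)-neg(-l^0)\big)$ is exactly the unraveling the paper performs as well.
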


\begin{proof}
As we have seen above, $\la^*$ and $\La^*$ do not depend on $n$. We apply a simple continuity argument: $\Mnl$ is continuous in $\la$ for each (fixed) $n$ in the sense that if $\sigma>0$, then there exist $C,\delta>0$ such that
$$\|\Mnl-\Mns\|\leq C|\la-\sigma|$$
for $\la\in(0,\infty)$ and $|\la-\sigma|<\delta$. This follows from Lemma \ref{propml}.
By Proposition \ref{lemma2}, $\mathcal{M}^{\la^*}_{n}$ has at least $n-neg\left(\Aonez\right)+neg\left(\Atwoz\right)+neg\left(l^0\right)$ negative eigenvalues. By Lemma \ref{large}, $\mathcal{M}^{\La^*}_{n}$ has exactly $n+1$ negative eigenvalues. Since $\Mnl$ is a finite-dimensional operator, its set of eigenvalues varies continuously with $\la$. Thus, if

	\begin{equation}\label{criterion}
	n-neg\left(\Aonez\right)+neg\left(\Atwoz\right)+neg\left(l^0\right)
	>
	n+1
	\end{equation}
then at least one eigenvalue must cross $0$ for some $\la_{n}\in(\la^*,\La^*)$. In particular, the $\la$ value for which 0 is a (nontrivial) eigenvalue, has a corresponding (nontrivial) eigenspace. Unraveling condition \eqref{criterion}, we get the equivalent criterion:
	\begin{equation}
	neg\left(\Atwoz\right)>neg\left(\Aonez\right)+neg(-l^0)
	\end{equation}
which is precisely the main assumption of Theorem \ref{mainthm}. In the context of Theorem \ref{mainthm2}  one would invoke Proposition \ref{trivial-atwoz} instead of invoking Proposition \ref{lemma2} and get the criterion
	\begin{equation}
	neg\left(\Atwoz\right)\neq neg\left(\Aonez\right)+neg(-l^0).
	\end{equation}
\end{proof}

\begin{lem}\label{exist}
There exists $0<\la_0<\infty$ and a nontrivial $u_0^T=\left(\phi_0, \psi_0, b_0\right)$ that is not a multiple of $\left(1,0, 0\right)$ such that
\begin{equation}\label{main}
\Mlz u_0=0.
\end{equation}
\end{lem}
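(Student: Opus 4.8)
The plan is to obtain $\la_0$ and $u_0$ as limits of the objects produced by Lemma \ref{solution}. For each $n > N$, Lemma \ref{solution} gives $\la_n \in [\la^*,\La^*]$ and a nontrivial $u_n^T = (\phi_n,\psi_n,b_n) \in \R^n\times\R^n\times\R$ with $\Mnmlnm u_n = 0$. Normalize so that $\|P_n^*\phi_n\|_{\lpt} + \|Q_n^*\psi_n\|_{\lpt} + |b_n| = 1$, which is possible since $P_n^*$ and $Q_n^*$ have trivial kernels. Since $\la_n$ lives in the compact interval $[\la^*,\La^*] \subset (0,\infty)$, after passing to a subsequence we may assume $\la_n \to \la_0 \in [\la^*,\La^*]$, so in particular $0 < \la_0 < \infty$.

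The next step is to upgrade the normalization bound to an $H^1_P$ bound on $P_n^*\phi_n$ and $Q_n^*\psi_n$, exactly as in the proof of Proposition \ref{lemma2}: take the $\R^n$-inner product of the first row of $\Mnmlnm u_n = 0$ with $\phi_n$, isolate $-\|\p_x P_n^*\phi_n\|_{\lpt}^2$, and bound the remaining terms using the uniform operator bounds from Lemma \ref{properties2} (for $\Blnm$, $\Clnm$) together with the $L^2_w$-estimates from Lemma \ref{propql}\eqref{qlnorm} (for the $\Qlnm_\pm$ terms) and the fact that $\int|\mu^\pm_e|\,dv$ is bounded. The same argument on the second row controls $\|\p_x Q_n^*\psi_n\|_{\lpt}$. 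Hence $P_n^*\phi_n$ and $Q_n^*\psi_n$ are uniformly bounded in $H^1_P$; passing to a further subsequence, they converge weakly in $H^1_P$ and (by compact embedding $H^1_P \hookrightarrow \lpt$) strongly in $\lpt$, to limits $\phi_0$ and $\psi_0$, while $b_n \to b_0$ in $\R$. By the normalization, $(\phi_0,\psi_0,b_0) \neq (0,0,0)$.

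With the hypotheses of Lemma \ref{eq-converge} now verified (strong $\lpt\times\lpt\times\R$ convergence, uniform $\hpt$ — here $H^1_P$ suffices for the boundedness input, but one first promotes to $\hpt$ via the elliptic-regularity bootstrap of Proposition \ref{lemma2}: rewrite the first row keeping only $\p_x^2$ on the left, observe the right side converges in $\lpt$, use Lemma \ref{h-1} to identify the $H^{-1}_P$-limit of the left side as $\p_x^2\phi_0$, and conclude $\phi_0 \in \hpt$, and similarly $\psi_0 \in \hpt$), Lemma \ref{eq-converge} applied to the identities $\Mnmlnm u_n = 0$ yields $\langle \Mlz u_0, v\rangle = 0$ for every $v \in \lpt\times\lpt\times\R$, i.e. $\Mlz u_0 = 0$. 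It remains to rule out $u_0$ being a multiple of $u_{triv}^T = (1,0,0)$: since $\phi_0 \in \hptz$ has zero mean it cannot be a nonzero constant, so unless $u_0 = 0$ (already excluded) it is not a multiple of $(1,0,0)$. This proves the lemma.

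The main obstacle is the regularity/convergence bookkeeping: the operators $\Mnl$ only converge strongly (not in norm), and one is juggling two parameters ($n \to \infty$ and $\la_n \to \la_0$) at once, so care is needed to extract the various subsequences in the right order and to invoke Lemma \ref{eq-converge} with all of its hypotheses genuinely in force — in particular the elliptic bootstrap to land in $\hpt$ rather than merely $H^1_P$.
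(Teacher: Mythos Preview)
Your proposal is correct and follows essentially the same approach as the paper: extract a subsequence $\la_n\to\la_0\in[\la^*,\La^*]$, normalize, obtain uniform $H^1_P$ bounds by pairing the rows of $\Mnmlnm u_n=0$ with $\phi_n$ and $\psi_n$, pass to strong $\lpt$ limits, bootstrap to $\hpt$ via the elliptic-regularity argument of Proposition~\ref{lemma2}, and then apply Lemma~\ref{eq-converge}. The exclusion of multiples of $(1,0,0)$ via $\phi_0\in\hptz$ is exactly the paper's reasoning as well.
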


\begin{proof}
This proof is very similar to the proof of Proposition \ref{lemma2}: We have seen that for all $n>N$, $\Mnl$ has a nontrivial kernel when $\la=\la_{n}$ if \eqref{criterion} is satisfied. We show that \eqref{main} is satisfied weakly, with $u_0$ extracted by some compactness argument from nontrivial elements in the kernel of $\Mnmlnm$, denoted by $u_{n}$.
As we have seen in Lemma \ref{solution}, the equation

	\begin{equation}\label{ker1}
	\Mnmlnm u_{n}
	=
	\matthree{-P_n\Aonelnm P_n^*}{P_n\Blnm Q_n^*}{P_n\Clnm}
	{Q_n\left(\Blnm\right)^*P_n^*}{Q_n \Atwolnm Q_n^*}{-Q_n\Dlnm}
	{\left(\Clnm\right)^*P_n^*}{-\left(\Dlnm\right)^*Q_n^*}{-P\left(\la_{n}^2-l^{\la_{n}}\right)}
	\left(\begin{array}{c}
	\fnm\\
	\pnm\\
	\bnm
	\end{array}\right)
	=
	\left(\begin{array}{c}
	0\\
	0\\
	0
	\end{array}\right)
	\end{equation}
has a nontrivial solution for all $n>N$. Here $0<\la^*<\la_{n}<\La^*<\infty$. Let us extract a subsequence $\la_{n}\to\la_0$. We want to show that the ``limiting" equation

	\begin{equation}\label{ker2}
	\Mlz u_{0}
	=
	\matthree{-\Aonelz}{\Blz}{\Clz}
	{\left(\Blz\right)^*}{\Atwolz}{-\Dlz}
	{\left(\Clz\right)^*}{-\left(\Dlz\right)^*}{-P\left(\la_{0}^2-l^{\la_{0}}\right)}
	\left(\begin{array}{c}
	\phi_0\\
	\psi_0\\
	b_0
	\end{array}\right)
	=
	\left(\begin{array}{c}
	0\\
	0\\
	0
	\end{array}\right)
	\end{equation}
is satisfied nontrivially.
%
%
We follow the procedure of Proposition \ref{lemma2}, showing that $\left(P_n^*\fnm,Q_n^*\pnm,\bnm\right)=u_n^T\to u_0^T=(\phi_0,\psi_0,b_0)\neq(0,0,0)$ in $\hpt\times\hpt\times\R$:

\begin{enumerate}
\item
We normalize the vectors $u_n$ as in \eqref{one}:
	\begin{equation}\label{one2}
	\|P_n^*\fk\|_{\lpt}+
	\|Q_n^*\psi_n\|_{\lpt}+
	|b_n|=1.
	\end{equation}

\item
We take the inner product of the first row of \eqref{ker1} with $\phi_n$, to obtain the equation
	\begin{equation}\label{line12}
	\phi_n\cdot P_n\left(-\Aonelnm P_n^*\fnm+\Blnm Q_n^*\pnm+\Clnm\bnm\right)=0.
	\end{equation}
Showing uniform boundedness of $P_n^*\phi_n$ in $H_P^1$ is identical to the calculations performed in the lines following \eqref{line1}: We show that the $\lpt$ norm of all terms in \eqref{line12} is uniformly bounded in $n$ (except for the Laplacian), and, therefore, by integrating by parts we obtain the uniform $H_P^1$ bound. We conclude that $P_n^*\phi_n$ converges in $\lpt$. 

\item
Using the second row of \eqref{ker1} we show that $Q_n^*\psi_n$ converges in $\lpt$.

\item
We bootstrap our convergence problem, showing that, in fact, $P_n^*\phi_n$ converges to $\phi$ in $\hpt$, by applying $P_n^*$ to the first row of \eqref{ker1}. Similarly, to show that $Q_n^*\psi_n\to\psi$ in $\hpt$ we apply $Q_n^*$ to the second row of \eqref{ker1}.

\end{enumerate}
Moreover, since $P_n^*\phi_n\in\hptz$ all have mean $0$, so does $\phi_0$. Therefore $u_0$ is not a multiple of $(1,0,0)$.
The fact that the terms in the equations \eqref{ker1} tend to the terms in the equations \eqref{ker2} weakly follows from Lemma \ref{eq-converge}. This finishes the proof.

\end{proof}

\section{Construction of a Growing Mode}\label{construction}
We finish the proof of Theorem \ref{mainthm} by verifying that the nontrivial element $u_0$ that we found above satisfies the linearized RVM System. For ease of notation, we drop the ``0" subscript, so that we simply have $u^T=\left(\phi, \psi, b\right)$, and $\la$. The equation we verified in the previous section is
	\begin{equation}\label{final}
	\Ml u
	=
	\matthree{-\Aonel}{\Bl}{\Cl}
	{\left(\Bl\right)^*}{\Atwol}{-\Dl}
	{\left(\Cl\right)^*}{-\left(\Dl\right)^*}{-P\left(\la^2-l^\la\right)}
	\left(\begin{array}{c}\phi\\\psi\\b\end{array}\right)
	=
	0,
	\end{equation}
with $u$ nontrivial and not a multiple of $(1,0,0)$, and where $0<\la<\infty$.
We begin by defining $f^\pm(x,v)$:
	\begin{equation}
	f^\pm(x,v)
	=
	\pm\mu^\pm_e\phi(x)\pm\mu^\pm_p\psi(x)
	\mp\mu^\pm_e\left[\Ql_\pm\phi-\Ql_\pm\left(\hat{v}_2\psi\right)-b\Ql_\pm\hat{v}_1\right].
	\end{equation}

In addition, we define
	$$
	E_1
	=
	-\p_x\phi-\la b
	\hspace{1.8cm}
	E_2
	=
	-\la\psi
	\hspace{1.8cm}
	B
	=
	\p_x\psi
	$$
and
	$$	
	\rho
	=
	\int (f^+-f^-)\;dv
	\hspace{1.5cm}
	j_i
	=
	\int \hat{v}_i(f^+-f^-)\;dv,
	\hspace{.3cm}
	i=1,2.
	$$
	
\begin{lem}
Gauss' Equation \eqref{gauss1d} holds. 
\end{lem}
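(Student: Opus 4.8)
The plan is to check the two equalities asserted by \eqref{gauss1d}, $-\p_x^2\phi = \p_x E_1$ and $\p_x E_1 = \rho$, one at a time. The first is immediate from the definition of $E_1$; the second is exactly the first row of the matrix equation \eqref{final}, once we notice that the formula for $f^\pm$ just introduced coincides with the one, \eqref{partdist1d}, that was used to assemble $\Ml$.

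First I would note that, since $b\in\R$ is a constant and $\phi=\phi_0\in\hpt$ (by Lemma \ref{exist}, so that $\p_x^2\phi\in\lpt$ is a genuine function), differentiating $E_1=-\p_x\phi-\la b$ gives $\p_x E_1=-\p_x^2\phi$. Next, expanding the bracket prefixed by $\mp$ in the definition of $f^\pm$ and comparing termwise with \eqref{partdist1d}, one sees the two agree; hence $f^+-f^-$ is given by \eqref{fplusfminus}, the integral $\rho=\int(f^+-f^-)\,dv$ converges thanks to the weight bound \eqref{weight}, and each term $\int\mu^\pm_e\Ql_\pm(\cdot)\,dv$ makes sense by Remark \ref{ql-domain}. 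Now the chain of equalities written out in item (1) of \S\ref{ml} is a purely algebraic identity (it never used that Maxwell's equations hold), so substituting \eqref{fplusfminus} yields
\[
\p_x^2\phi+\rho=\p_x^2\phi+\int(f^+-f^-)\,dv=-\Aonel\phi+\Bl\psi+\Cl b .
\]
By the first row of \eqref{final}, that is \eqref{firsteq}, the right-hand side vanishes, so $\p_x^2\phi+\rho=0$, i.e.\ $\p_x E_1=-\p_x^2\phi=\rho$, which is \eqref{gauss1d}.

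I do not expect any real obstacle here: the content of the lemma is bookkeeping — matching the ad hoc definition of $f^\pm$ against the representation used to build $\Ml$, and then reading off the first coordinate of $\Ml u=0$. The only places that deserve a line of justification are the convergence of the velocity integrals and the meaning of $\Ql_\pm$ applied to functions of $x$ alone (both covered by \eqref{weight} and Remark \ref{ql-domain}) and the regularity $\phi\in\hpt$ needed to treat $\p_x^2\phi$ as an honest $\lpt$ function (part of the conclusion of Lemma \ref{exist}). Unlike the analogous verifications of Amp\`ere's equations later in this section, Gauss' equation does not require the parity identity \eqref{fact}.
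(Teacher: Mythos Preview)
Your proposal is correct and follows essentially the same approach as the paper: both start from $\p_x E_1=-\p_x^2\phi$, then use the first row of \eqref{final} to identify $-\p_x^2\phi$ with $\rho=\int(f^+-f^-)\,dv$. The only difference is presentational---the paper writes out the operator expansion explicitly, while you (reasonably) cite the identical computation already performed in \S\ref{ml}(1).
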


\begin{proof}
We use the first row of \eqref{final}.
	\begin{eqnarray*}
	\p_xE_1
	&=&
	-\p_x^2\phi\\
	&=&
	\sum_\pm\left\{\int\mu^\pm_e\;dv\;\phi
	-\int\mu^\pm_e\Ql_\pm\phi\;dv
	+\int\mu^\pm_p\;dv\;\psi
	+\int\mu^\pm_e\Ql_\pm(\hat{v}_2\psi)\;dv
	+\int\mu^\pm_e\Ql_\pm\left(\hat{v}_1\right)\;dv\; b\right\}\\
	&=&
	\sum_\pm\int\left(\mu^\pm_e\phi
	+\mu^\pm_p \psi
	-\mu^\pm_e\left[\Ql_\pm\phi-\Ql_\pm\left(\hat{v}_2\psi\right)-b\Ql_\pm\hat{v}_1\right]\right)dv\\
	&=&
	\int(f^+-f^-)\;dv=\rho.
	\end{eqnarray*}
\end{proof}
\begin{lem}
The Linearized Vlasov Equation \eqref{linvlasov} holds.
\end{lem}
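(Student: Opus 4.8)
The plan is to verify \eqref{vlas}, which is exactly what \eqref{linvlasov} becomes once $\p_t$ is replaced by $\la$ and the relations $E_1=-\p_x\phi-\la b$, $E_2=-\la\psi$, $B=\p_x\psi$ from the construction above are substituted on its right-hand side. Thus it suffices to show
\[
(\la+D^\pm)f^\pm=\pm\mu^\pm_e\hat{v}_1(\p_x\phi+\la b)\pm\mu^\pm_p\hat{v}_1\p_x\psi\pm\la\left(\mu^\pm_e\hat{v}_2+\mu^\pm_p\right)\psi
\]
for the $f^\pm$ defined at the start of this section.

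The key ingredient is the identity $(\la+D^\pm)\Ql_\pm k=\la k$, valid for every $k=k(x,v)$. To see this, recall that $D^\pm$ is the infinitesimal generator of the characteristic flow $(X^\pm(s),V^\pm(s))$, so that $D^\pm(\Ql_\pm k)$ is the $\tau$-derivative at $\tau=0$ of $\Ql_\pm k(X^\pm(\tau),V^\pm(\tau))$. Using the group property of the flow and the change of variable $s\mapsto s+\tau$ in the integral defining $\Ql_\pm$, one gets $\Ql_\pm k(X^\pm(\tau),V^\pm(\tau))=e^{-\la\tau}\int_{-\infty}^\tau\la e^{\la s}k(X^\pm(s),V^\pm(s))\,ds$; differentiating at $\tau=0$ yields $D^\pm\Ql_\pm k=-\la\Ql_\pm k+\la k$, which is the claimed identity.

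With this in hand I would apply $(\la+D^\pm)$ term by term to $f^\pm=\pm\mu^\pm_e\phi\pm\mu^\pm_p\psi\mp\mu^\pm_e\Ql_\pm\phi\pm\mu^\pm_e\Ql_\pm(\hat{v}_2\psi)\pm b\mu^\pm_e\Ql_\pm\hat{v}_1$. Since $\mu^\pm_e$ and $\mu^\pm_p$ are constant along trajectories they are annihilated by $D^\pm$, while $D^\pm$ reduces to $\hat{v}_1\p_x$ on functions of $x$ alone; hence the bare terms give $\pm\mu^\pm_e(\la\phi+\hat{v}_1\p_x\phi)\pm\mu^\pm_p(\la\psi+\hat{v}_1\p_x\psi)$, and the three $\Ql_\pm$-terms collapse via the identity above to $\mp\mu^\pm_e\la\phi$, $\pm\mu^\pm_e\la\hat{v}_2\psi$, and $\pm b\mu^\pm_e\la\hat{v}_1$ respectively. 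The crucial cancellation is that the $\pm\mu^\pm_e\la\phi$ from the bare $\phi$-term and the $\mp\mu^\pm_e\la\phi$ from $\Ql_\pm\phi$ annihilate each other; collecting what is left recovers precisely the right-hand side of \eqref{vlas}.

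The computation is essentially bookkeeping once the identity $(\la+D^\pm)\Ql_\pm=\la$ is in place, so the only genuinely delicate point is that identity — in particular justifying differentiation under the integral and interpreting $D^\pm f^\pm$ in the appropriate (e.g.\ distributional) sense, since the constructed $f^\pm$ need not be classically differentiable. This is where I would be most careful, leaning on the $\lwt$-mapping properties of $\Ql_\pm$ from Lemma \ref{propql} and on the skew-adjointness of $D^\pm$.
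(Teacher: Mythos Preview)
Your argument is correct and rests on the same computation as the paper's, but the presentation is dual to it. The paper works in weak form from the outset: it pairs $f$ against $Dg$ for a compactly supported test function $g$, and for each $\Ql$-term performs the change of variables $(x,v)\mapsto(X(-s),V(-s))$ followed by an integration by parts in $s$; this is precisely the weak version of your operator identity $(\la+D^\pm)\Ql_\pm=\la$. You instead isolate that identity first and then apply $(\la+D^\pm)$ pointwise. Your route is cleaner algebraically and makes the cancellation transparent, while the paper's weak formulation sidesteps the regularity issue you correctly flag at the end: since $\phi,\psi\in\hpt$ only, one does not know a priori that $D^\pm f^\pm$ makes classical sense, so the paper's decision to move $D$ onto the test function is exactly the care you say you would take. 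In short, same engine, different chassis.
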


\begin{proof}
We recall that the Linearized Vlasov Equation \eqref{linvlasov} is
	$$
	\left(\p_t+D^\pm\right)f^\pm
	=
	\mp\mu^\pm_e\hat{v}_1E_1\pm\mu^\pm_p\hat{v}_1B\mp\left(\mu^\pm_e\hat{v}_2+\mu^\pm_p\right)E_2.
	$$

We let $g\in C_c^1([0,P]\times\R^2)$ be any test function. We show it for the electrons, $f^-$, and drop all ``$-$" superscripts. Showing that the linearized Vlasov Equation holds for $f^+$ is identical. We write:

	\begin{eqnarray*}
	\int_0^P\int	(Dg)f		\;dv\;dx
	&=&
	\int_0^P\int	(Dg)\left(-\mu_e\phi(x)-\mu_p\psi(x)
	+\mu_e\left[\Ql\phi-\Ql\left(\hat{v}_2\psi\right)-b\Ql\hat{v}_1\right]\right)\;dv\;dx\\
	&=&
	I+II+III+IV+V.
	\end{eqnarray*}
	
For the terms $I$ and $II$ we use the fact that $D$ is skew-adjoint and that $\mu$ is invariant under $D$, to ``move" the operator $D$ over from $g$ to $\phi$ and $\psi$ respectively. Thus we focus on the terms $III, IV, V$, where the definition of $\Ql$ is important, and the fact that $(x,v)\to(X,V)$ has Jacobian = 1.

	\begin{eqnarray*}
	III
	&=&
	\int_0^P\int	(Dg)\mu_e\Ql\phi	\;dv\;dx	\\
	&=&
	\int_{-\infty}^0	\la e^{\la s}	\int_0^P\int	\mu_e (Dg)(x,v)\phi(X(s;x,v))\;dv\;dx\;ds\\
	&=&
	\int_{-\infty}^0	\la e^{\la s}	\int_0^P\int	\mu_e (Dg)(X(-s),V(-s))\phi(x)\;dv\;dx\;ds\\
	&=&
	\int_0^P\int	\mu_e	\int_{-\infty}^0	\la e^{\la s}\left(-\frac{d}{ds}g(X(-s),V(-s))\right)ds\;
	\phi(x)\;dv\;dx\\
	&=&
	\int_0^P\int\mu_e\left\{-\la g(x,v)
	+\int_{-\infty}^0\la^2 e^{\la s}g(X(-s),V(-s))\;ds\;\right\}
	\phi(x)\;dv\;dx\\
	&=&
	\int_0^P\int\left\{-\mu_e\la \phi(x)
	+\mu_e\int_{-\infty}^0\la^2 e^{\la s}\phi(X(s),V(s))\;ds\;\right\}
	g(x,v)\;dv\;dx\\
	&=&
	\la\int_0^P\int\left\{-\mu_e\phi+\mu_e\Ql\phi\right\}g\;dv\;dx.
	\end{eqnarray*}
	
Similarly

	$$
	IV
	=
	-\la\int_0^P\int\left\{-\mu_e\hat{v}_2\psi+\mu_e\Ql(\hat{v}_2\psi)\right\}g\;dv\;dx
	$$
and
	$$
	V
	=
	-\la\int_0^P\int\left\{-b\mu_e\hat{v}_1+b\mu_e\Ql\hat{v}_1\right\}g\;dv\;dx.
	$$
	
Thus

	\begin{eqnarray*}
	\int_0^P\int	(Dg)f		\;dv\;dx
	&=&
	\int_0^P\int\left\{
	\mu_eD\phi+\mu_pD\psi
	\right\}g\;dv\;dx			\\
	&&+\la\int_0^P\int\left\{
	-\mu_e\phi+\mu_e\Ql\phi
	+\mu_e\hat{v}_2\psi-\mu_e\Ql(\hat{v}_2\psi)
	+b\mu_e\hat{v}_1-b\mu_e\Ql\hat{v}_1
	\right\}g\;dv\;dx			\\
	&=&
	\int_0^P\int\la\left\{
	-\mu_e\phi-\mu_p\psi+\mu_e\left[\Ql\phi-\Ql\left(\hat{v}_2\psi\right)-b\Ql\hat{v}_1\right]
	\right\}g\;dv\;dx			\\
	&&+\int_0^P\int\left\{
	\la\mu_p\psi+\mu_eD\phi+\mu_pD\psi+\la\mu_e\hat{v}_2\psi+\la b\mu_e\hat{v}_1
	\right\}g\;dv\;dx			\\
	&=&
	\int_0^P\int\left\{
	\la\left(f+\mu_p\psi\right)
	+\mu_eD\phi+\mu_pD\psi+\la\mu_e\hat{v}_2\psi+\la b\mu_e\hat{v}_1
	\right\}g\;dv\;dx			\\
	\end{eqnarray*}
	
Therefore, weakly, $f$ satisfies the equation

	\begin{eqnarray*}
	(\la+D)f
	&=&
	-\mu_eD\phi-\mu_pD\psi-\la\mu_p\psi-\la\mu_e\hat{v}_2\psi-\la b\mu_e\hat{v}_1\\
	&=&
	-\mu_e\hat{v}_1\p_x\phi-\mu_p\hat{v}_1\p_x\psi+\mu_pE_2-\la\mu_e\hat{v}_2\psi-\la b\mu_e\hat{v}_1\\
	&=&
	\mu_e\hat{v}_1E_1-\mu_p\hat{v}_1B+\left(\mu_p+\mu_e\hat{v}_2\right)E_2,
	\end{eqnarray*}
	
which is precisely \eqref{linvlasov}.
\end{proof}
\begin{lem}[Continuity Equation]
The relation
	$
	\p_xj_1+\la\rho=0
	$
holds.
\end{lem}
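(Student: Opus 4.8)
The plan is to obtain the continuity equation, as usual, as the $v$-average of the Vlasov equation: I would integrate the linearized Vlasov equation \eqref{linvlasov} -- which we have just verified for both species -- over $v\in\R^2$. Set $\rho^\pm=\int f^\pm\,dv$ and $j_1^\pm=\int\hat v_1 f^\pm\,dv$, so that $\rho=\rho^+-\rho^-$ and $j_1=j_1^+-j_1^-$. On the left of \eqref{linvlasov} the term $\la f^\pm$ integrates to $\la\rho^\pm$; for $D^\pm f^\pm$, write $D^\pm=\hat v_1\p_x\pm\hat v_2 B^0\p_{v_1}\mp\hat v_1 B^0\p_{v_2}$ (recall $E_1^0\equiv0$) and observe that, since $\p_{v_1}\hat v_2=\p_{v_2}\hat v_1=-v_1 v_2\langle v\rangle^{-3}$, the last two terms combine into the momentum divergence $\pm\mathrm{div}_v\!\big[(\hat v_2 B^0,-\hat v_1 B^0)f^\pm\big]$, which integrates to $0$, leaving $\int\hat v_1\p_x f^\pm\,dv=\p_x j_1^\pm$. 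On the right of \eqref{linvlasov} all three terms vanish upon integration: since $e=\langle v\rangle$ and $p^\pm=v_2\pm\psi^0(x)$ are even in $v_1$, the coefficients $\mu^\pm_e,\mu^\pm_p$ are even in $v_1$ while $\hat v_1$ is odd, so $\int\mu^\pm_e\hat v_1\,dv=\int\mu^\pm_p\hat v_1\,dv=0$, which kills the $E_1$- and $B$-terms; and $\int(\mu^\pm_e\hat v_2+\mu^\pm_p)\,dv=0$ by \eqref{fact}, which kills the $E_2$-term. Collecting, $\p_x j_1^\pm+\la\rho^\pm=0$ for each species, and subtracting gives $\p_x j_1+\la\rho=0$.

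The point that requires care -- and the main, though modest, obstacle -- is that \eqref{linvlasov} was only established weakly, against test functions $g\in C^1_c([0,P]\times\R^2)$, so the $v$-integration above must be realized through a limiting procedure. I would test the weak identity against $g(x,v)=h(x)\chi_R(v)$, where $h$ is an arbitrary smooth $P$-periodic function and $\chi_R$ is a smooth cutoff that equals $1$ on $\{|v|\le R\}$ and is supported in $\{|v|\le 2R\}$, and then let $R\to\infty$. The pieces of $D^\pm g$ in which a derivative falls on $\chi_R$ are supported in $\{R\le|v|\le 2R\}$ and are $O(1/R)$ there, so they drop out in the limit once one knows that $f^\pm(x,\cdot)\in L^1(\R^2)$ uniformly in $x$ with $\int_{|v|\ge R}|f^\pm|\,dv\to0$. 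This integrability follows from the explicit formula \eqref{partdist1d} and the elementary pointwise bounds $|\Ql_\pm k(x,v)|\le\|k\|_{L^\infty}$ when $k$ is a function of $x$ alone and $|\Ql_\pm\hat v_1|\le 1$, combined with $\phi,\psi\in\hpt\hookrightarrow L^\infty$ and hypothesis \eqref{weight}, which together give the pointwise estimate $|f^\pm(x,v)|\le C\,w(e)$ with $\int w\,dv<\infty$.

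Passing to the limit, the weak identity becomes $\int_0^P(\p_x h)\,j_1^\pm\,dx=\la\int_0^P h\,\rho^\pm\,dx$ for every $P$-periodic $h$; an integration by parts on the left (no boundary contribution, by periodicity) turns this into $\int_0^P h\,(\p_x j_1^\pm+\la\rho^\pm)\,dx=0$ for all such $h$, hence $\p_x j_1^\pm+\la\rho^\pm=0$, and subtracting the two species yields $\p_x j_1+\la\rho=0$. For consistency one notes that $\rho=\p_x E_1=-\p_x^2\phi\in\lpt$ by the preceding Gauss lemma, so in fact $\p_x j_1\in\lpt$ and $j_1\in H^1_P$.
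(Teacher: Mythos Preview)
Your proof is correct and follows essentially the same approach as the paper's: integrate the linearized Vlasov equation over $v$, use evenness of $\mu^\pm$ in $v_1$ and the identity \eqref{fact} to kill the right-hand side, and note that the non-transport terms of $D^\pm$ integrate to zero in $v$. Your treatment is in fact more careful than the paper's --- you justify the divergence form of the $v$-derivative terms explicitly and handle the passage from the weak formulation to the pointwise identity via a cutoff in $v$, whereas the paper simply integrates formally.
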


\begin{proof}
Integrating the Linearized Vlasov Equation with respect to $v$, we get
	$$
	\int\left(\p_t+D^\pm\right)f^\pm\;dv
	=
	\int\left(\mp\mu^\pm_e\hat{v}_1E_1\pm\mu^\pm_p\hat{v}_1B\mp\left(\mu^\pm_e\hat{v}_2+\mu^\pm_p\right)E_2\right)dv
	=
	0,
	$$
where we use the facts that $\mu^\pm$ is even in $v_1$, and that $\p\mu^\pm/\p v_2=\mu^\pm_e\hat{v}_2+\mu^\pm_p$ is a perfect derivative. Subtracting these two equations, replacing the time derivative by a factor of $\la$, and using the fact that $D^\pm$ consist of three terms, of which only the first is not a $v_i$ derivative, we have:

	$$
	0
	=
	\sum_\pm\int\left(\la+D^\pm\right)f^\pm\;dv
	=
	\la\rho+\p_x\sum_\pm\int\hat{v}_1f^\pm\;dv
	=
	\la\rho+\p_xj_1.
	$$
\end{proof}
\begin{lem}
Amp\`{e}re's Equations \eqref{ampere1d1} and \eqref{ampere1d2}
hold.
\end{lem}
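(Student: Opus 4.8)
The plan is to handle \eqref{ampere1d2} and \eqref{ampere1d1} separately. The first is essentially a restatement of the way $\Ml$ was built, so it is almost immediate; the second needs a short extra argument because the matrix operator only records the $x$-average of \eqref{ampere1d1}.

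For \eqref{ampere1d2}: In \S\ref{ml} (item 2) it was shown that for any $P$-periodic $\phi,\psi$ and any $b\in\R$, substituting \eqref{fplusfminus} into $\p_x^2\psi-\la^2\psi+j_2$ and simplifying via \eqref{fact} gives
	$$
	\p_x^2\psi-\la^2\psi+j_2=-\Atwol\psi-\left(\Bl\right)^*\phi+\Dl b,
	$$
where $j_2=\int\hat{v}_2(f^+-f^-)\;dv$ with $f^\pm$ as defined above. Since $u=(\phi,\psi,b)^T$ satisfies the second row of \eqref{final}, the right-hand side vanishes, so $\p_x^2\psi-\la^2\psi=-j_2$. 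As $E_2=-\la\psi$ and $B=\p_x\psi$ by definition, the left-hand side is exactly $\la E_2+\p_x B$, which is \eqref{ampere1d2}.

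For \eqref{ampere1d1}: First I would use the two preceding lemmas --- Gauss' Equation $\p_x E_1=\rho$ and the Continuity Equation $\p_x j_1+\la\rho=0$ --- to get $\p_x(j_1+\la E_1)=-\la\rho+\la\rho=0$, so $j_1+\la E_1$ is constant in $x$. It remains to show this constant is zero, which we do by integrating over one period. Since $\phi$ is $P$-periodic, $\int_0^P E_1\;dx=\int_0^P(-\p_x\phi-\la b)\;dx=-\la b P$. On the other hand, re-running the computation of the terms $I,II,III$ from \S\ref{ml} (item 3) --- which uses only the definition of $f^\pm$ above and of $\Cl,\Dl,l^\la$, and not \eqref{ampere1d1} --- gives $\int_0^P j_1\;dx=\left(\Cl\right)^*\phi-\left(\Dl\right)^*\psi+P l^\la b$; combining this with the third row of \eqref{final}, namely $\left(\Cl\right)^*\phi-\left(\Dl\right)^*\psi=Pb(\la^2-l^\la)$, yields $\int_0^P j_1\;dx=P\la^2 b$. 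Hence $\int_0^P(j_1+\la E_1)\;dx=P\la^2 b-\la^2 b P=0$, so $j_1+\la E_1\equiv 0$, i.e. $\la E_1=-j_1$; since $E_1=-\p_x\phi-\la b$ this is precisely \eqref{ampere1d1}.

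The only real subtlety is this last point: \eqref{final} by itself encodes just the mean value of \eqref{ampere1d1}, so the pointwise statement must be recovered via the continuity and Gauss relations (which show $j_1+\la E_1$ is $x$-independent) together with the periodicity of $\phi$ (which pins the constant to zero). Everything else is bookkeeping already carried out in \S\ref{ml}.
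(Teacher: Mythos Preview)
Your proof is correct and follows essentially the same approach as the paper: both handle \eqref{ampere1d2} directly via the second row of \eqref{final}, and both recover \eqref{ampere1d1} by first showing $j_1+\la E_1$ is constant in $x$ (via the continuity and Gauss relations) and then pinning that constant to zero using the third row of \eqref{final} together with the periodicity of $\phi$. Your write-up is somewhat more streamlined, citing the $I,II,III$ computation from \S\ref{ml} rather than redoing it, but the underlying argument is the same.
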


\begin{proof}
We first want to show that

	\begin{equation}\label{amp}
	\la E_1
	=
	-j_1.
	\end{equation}

Recalling the definition $E_1=-\p_x\phi-\la b$, we wish to show that
	\begin{equation}\label{amp2}
	\la^2b=-\la\p_x\phi+j_1.
	\end{equation}

Let us show equality of the derivative with respect to $x$, and then equality of the integral with respect to $x$. Differentiating once, and using the continuity equation, we get

	\begin{eqnarray*}
	0
	&=&
	-\p_x^2\phi+\frac{1}{\la}\p_xj_1\\
	&=&
	-\p_x^2\phi-\rho\\
	&=&
	-\p_x^2\phi
	-\sum_\pm\int\left(\mu^\pm_e\phi
	+\mu^\pm_p \psi
	-\mu^\pm_e\left[\Ql_\pm\phi-\Ql_\pm\left(\hat{v}_2\psi\right)-b\Ql_\pm\hat{v}_1\right]\right)dv\\
	&=&
	\Aonel\phi-\Bl\psi-\Cl b
	\end{eqnarray*}
which is precisely the first row of \eqref{final}. This verifies that the derivatives are the same.

Now we turn to the integral of \eqref{amp}: Plugging in the the relation $\la^2b=\frac{1}{P}\left(\Cl\right)^*\phi-\frac{1}{P}\left(\Dl\right)^*\psi+l^\la b$ (which is obtained from the last row of \eqref{final}) into \eqref{amp2}, it suffices to show
	$$
	-\la\p_x\phi+j_1
	=
	\frac{1}{P}\left(\Cl\right)^*\phi-\frac{1}{P}\left(\Dl\right)^*\psi+l^\la b.
	$$
Writing in detail the expressions for $j_1$ and for the operators on the right hand side, we need to show that
	\begin{eqnarray*}
	&&-\la\p_x\phi+\sum_\pm\int\hat{v}_1
	\left(\mu^\pm_e\phi
	+\mu^\pm_p \psi
	-\mu^\pm_e\left[\Ql_\pm\phi-\Ql_\pm\left(\hat{v}_2\psi\right)-b\Ql_\pm\hat{v}_1\right]\right)dv\\
	&&=
	\frac{1}{P}\sum_\pm\int_0^P\int\left[\mu^\pm_e\Ql_\pm\left(\hat{v}_1\right)\phi
	-\hat{v}_2\mu^\pm_e\Ql_\pm\left(\hat{v}_1\right)\psi
	+b\hat{v}_1\mu^\pm_e\Ql_\pm\left(\hat{v}_1\right)\right]\;dv\;dx.
	\end{eqnarray*}
Since $\mu$ is even in $v_1$, we may drop the first two terms in the integral on the left hand side. Therefore we need to show that
	\begin{eqnarray*}
	-\la\p_x\phi
	&=&
	\sum_\pm\int\hat{v}_1\mu^\pm_e
	\left[\Ql_\pm\phi-\Ql_\pm\left(\hat{v}_2\psi\right)-b\Ql_\pm\left(\hat{v}_1\right)\right]dv\\
	&&+
	\frac{1}{P}\sum_\pm\int_0^P\int\mu^\pm_e\left[\Ql_\pm\left(\hat{v}_1\right)\phi
	-\hat{v}_2\Ql_\pm\left(\hat{v}_1\right)\psi
	+b\hat{v}_1\Ql_\pm\left(\hat{v}_1\right)\right]dv\;dx.\\
	\end{eqnarray*}
Integrating this equation over the period $P$ we get
	\begin{eqnarray*}
	0
	&=&
	\sum_\pm\int_0^P\int\hat{v}_1\mu^\pm_e
	\left[\Ql_\pm\phi-\Ql_\pm\left(\hat{v}_2\psi\right)-b\Ql_\pm\left(\hat{v}_1\right)\right]dv\;dx\\
	&&+
	\sum_\pm\int_0^P\int\mu^\pm_e\left[\Ql_\pm\left(\hat{v}_1\right)\phi
	-\hat{v}_2\Ql_\pm\left(\hat{v}_1\right)\psi
	+b\hat{v}_1\Ql_\pm\left(\hat{v}_1\right)\right]dv\;dx\\
	&=&
	\sum_\pm\int_0^P\int\hat{v}_1\mu^\pm_e
	\left[\Ql_\pm\phi-\Ql_\pm\left(\hat{v}_2\psi\right)\right]dv\;dx
	+
	\sum_\pm\int_0^P\int\mu^\pm_e\left[\Ql_\pm\left(\hat{v}_1\right)\phi
	-\hat{v}_2\Ql_\pm\left(\hat{v}_1\right)\psi\right]dv\;dx\\
	\end{eqnarray*}
which indeed holds due to the change of variables $(x,v)\to(X,V)$. Therefore, both the derivatives and the integrals are equal. Hence \eqref{ampere1d1} holds.

Now we turn to show that the equation $\la E_2+\p_xB=-j_2$ holds. We again recall our definitions
	
	$$
	E_2=-\la\psi,
	\hspace{.5cm}
	B=\p_x\psi,
	\hspace{.5cm}
	j_2=\int\hat{v}_2(f^+-f^-)\;dv,
	$$
that imply that we need to show
	\begin{eqnarray*}
	-\la^2\psi+\p_x^2\psi
	&=&
	-\int\hat{v}_2(f^+-f^-)\;dv\\
	&=&
	-\sum_\pm\int\hat{v}_2\left(\mu^\pm_e\phi
	+\mu^\pm_p \psi
	-\mu^\pm_e\left[\Ql_\pm\phi-\Ql_\pm\left(\hat{v}_2\psi\right)-b\Ql_\pm\hat{v}_1\right]\right)dv.
	\end{eqnarray*}
But this is precisely the second row of \eqref{final}.


\end{proof}
%
%
Our proof of Theorem \ref{mainthm} is now complete.

\begin{proof}[Proof of Theorem \ref{mainthm2}]
The proof of Theorem \ref{mainthm2} is identical to the proof of Theorem \ref{mainthm}, with the only difference being that in Lemma \ref{solution} we invoke Proposition \ref{trivial-atwoz} instead of invoking Proposition \ref{lemma2}.
\end{proof}

\section{Examples}\label{sec-examples}


\subsection{Homogeneous Example}
We start with the simple homogeneous case, where there is no $x$ dependence. This case is so simple, that certain properties of the operators can be calculated explicitly. In this case $e^\pm=\left<v\right>, p^\pm=v_2$. Hence $\mu^+=\mu^-$, and we can therefore drop the $\pm$ in this example.
\begin{lem}
In the homogeneous case,
	$$
	\Proj\left[\gamma(v)h(x)\right]
	=
	\gamma(v)\cdot\frac{1}{P}\int_0^Ph(x)\;dx.
	$$
\end{lem}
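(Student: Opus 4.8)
The plan is to exploit the fact that in the homogeneous case the equilibrium magnetic field vanishes, so that the transport operator degenerates to $\hat{v}_1\p_x$, and then to identify $\ker D$ and compute the orthogonal projection onto it directly.

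First I would record that since the equilibrium has no $x$-dependence, the magnetic potential $\psi^0$ is a $P$-periodic function of $x$ which is constant, so $B^0=\p_x\psi^0\equiv 0$ (this is also why $p^\pm=v_2$ in this example). Feeding $B^0\equiv 0$ (and $E_1^0\equiv 0$) into the formula $D^\pm=\hat{v}_1\p_x\pm(E_1^0+\hat{v}_2B^0)\p_{v_1}\mp\hat{v}_1B^0\p_{v_2}$, we obtain $D^+=D^-=\hat{v}_1\p_x=:D$, with no $v$-derivatives; in particular $\Proj^+=\Proj^-=:\Proj$.

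Next I would characterize $\ker D$ inside $\lwt$. If $g=g(x,v)\in\lwt$ satisfies $\hat{v}_1\p_xg=0$, then $\p_xg=0$ for a.e.\ $v$ with $v_1\neq 0$; since $\{v_1=0\}$ is a null set in $\R^2$, this forces $g$ to be (a.e.\ equal to) a function of $v$ alone, and conversely every such function is annihilated by $D$. Thus $\ker D$ is precisely the closed subspace of $\lwt$ consisting of the $x$-independent functions.

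Finally I would compute $\Proj$ using the crucial point that the weight $w=w(e)=w(\left<v\right>)$ is independent of $x$. Given $m=m(x,v)\in\lwt$, set $\bar m(v):=\frac{1}{P}\int_0^Pm(x,v)\,dx$; this lies in $\lwt$ and is $x$-independent, hence in $\ker D$, while for every $g=g(v)\in\ker D$,
\[
\left<m-\bar m,\,g\right>_w=\int_0^P\int_{\R^2}\big(m(x,v)-\bar m(v)\big)\,\overline{g(v)}\,w(v)\,dv\,dx=\int_{\R^2}\Big(\int_0^P\big(m-\bar m\big)\,dx\Big)\,\overline{g}\,w\,dv=0
\]
by the definition of $\bar m$. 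Hence $m-\bar m\perp\ker D$ and $\bar m\in\ker D$, so $\Proj m=\bar m$, i.e.\ $\Proj$ is averaging over $x$. Taking $m(x,v)=\gamma(v)h(x)$ yields $\Proj[\gamma(v)h(x)]=\gamma(v)\cdot\frac{1}{P}\int_0^Ph(x)\,dx$. The only step needing any care is the null-set argument identifying $\ker D$ (the behaviour on $\{v_1=0\}$), which is harmless in the $L^2$ setting; everything else is a direct computation.
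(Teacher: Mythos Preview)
Your proof is correct and follows exactly the approach the paper intends: the paper's own proof is a single sentence noting that in the homogeneous case $D$ reduces to $\hat v_1\p_x$ (since $B^0\equiv 0$) and declares the result ``straightforward.'' You have simply spelled out the details the paper omits --- identifying $\ker D$ as the $x$-independent functions and verifying that $x$-averaging gives the orthogonal projection with respect to the $x$-independent weight $w$.
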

\begin{proof}
This is straightforward, since $D$ reduces to the simple differential operator $\hat{v}_1\p_x$ in the homogeneous case, since there is equilibrium magnetic field.
\end{proof}

\begin{prop}
\begin{enumerate}
\item
Any homogeneous equilibrium that satisfies
	\begin{enumerate}
	\item
	$\int\mu_e\hat{v}_1^2\;dv<0$.
	\item
	$\int\mu_e\;dv\leq0$.
	\item
	$\int\mu_e\hat{v}_2^2\;dv-\int\mu_p\hat{v}_2\;dv<0$.
	\end{enumerate}
is unstable.

\item
There exists such an equilibrium.
\end{enumerate}
\end{prop}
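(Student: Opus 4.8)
The plan is to evaluate the operators $\Aonez,\Atwoz$ and the number $l^0$ explicitly in the homogeneous case and then quote Theorem~\ref{mainthm}. What makes this feasible is that the preceding lemma gives $\Proj$ in closed form: writing $\overline h=\tfrac1P\int_0^P h\,dx$ and using $\Proj[\gamma(v)h(x)]=\gamma(v)\,\overline h$ together with $\mu^+=\mu^-=:\mu$, each of $\Aonez,\Atwoz$ becomes a constant-coefficient operator perturbed by the rank-one map $h\mapsto\overline h$, so both are simultaneously diagonalised by the Fourier basis $\{e^{2\pi i kx/P}\}_{k\in\Z}$. A homogeneous equilibrium is $P$-periodic for every $P>0$, so the period is at our disposal.

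First I would read off the explicit forms. One gets $l^0=2\int\hat v_1^2\mu_e\,dv$, so (a) forces $l^0\ne0$ and $neg(-l^0)=0$. Next $\Aonez h=-\p_x^2h-2\bigl(\int\mu_e\,dv\bigr)(h-\overline h)$, with eigenvalue $0$ on the constants and $(2\pi k/P)^2-2\int\mu_e\,dv$ on the $k$-th mode, $k\ne0$; by (b) all of the latter are positive, hence $neg(\Aonez)=0$ and $\ker\Aonez=\{\text{constants}\}$ -- exactly the standing hypothesis of Theorem~\ref{mainthm}. Finally $\Atwoz h=-\p_x^2h-2\bigl(\int\hat v_2\mu_p\,dv\bigr)h-2\bigl(\int\hat v_2^2\mu_e\,dv\bigr)\overline h$.

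The crux is the negative spectrum of $\Atwoz$. Since $\mu_p+\hat v_2\mu_e=\partial_{v_2}\bigl(\mu(e(v),p(v))\bigr)$ and $\partial_{v_2}\hat v_2=(1+v_1^2)\langle v\rangle^{-3}$, integrating by parts in $v_2$ (the boundary terms vanish by the decay imposed on $\mu$) gives $\int\hat v_2\mu_p\,dv+\int\hat v_2^2\mu_e\,dv=-\int\mu\,(1+v_1^2)\langle v\rangle^{-3}\,dv$. Hence the constant mode of $\Atwoz$ has the positive eigenvalue $2\int\mu\,(1+v_1^2)\langle v\rangle^{-3}\,dv$, while the $k$-th mode ($k\ne0$) has eigenvalue $(2\pi k/P)^2-2\int\hat v_2\mu_p\,dv$. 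Condition (c), rewritten by means of the same identity, is what controls the sign of $\int\hat v_2\mu_p\,dv$, so that after taking the period $P$ large the lowest nonzero mode of $\Atwoz$ becomes negative; thus $neg(\Atwoz)\ge1>0=neg(\Aonez)+neg(-l^0)$, and Theorem~\ref{mainthm} yields a purely growing mode. (Should the chosen $P$ make $\ker\Atwoz$ nontrivial -- it is trivial for all but countably many $P$ -- one passes to a nearby $P$ and invokes Theorem~\ref{mainthm2} instead.) I expect this step -- keeping track of every sign in $\Atwoz$ and extracting from (c) the negativity of $(2\pi/P)^2-2\int\hat v_2\mu_p\,dv$ for $P$ large -- to be the main obstacle.

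For part~2 I would exhibit the temperature-anisotropic family $\mu(e,p)=(1+\epsilon p^2)(1+e)^{-\beta}$ with $\beta>3$ and $\epsilon>0$. It is nonnegative and $C^1$; it is even in $p$, so \eqref{pur-mag} holds with $\mu^+=\mu^-$; and since $p^2\le\langle v\rangle^2$ one has $|\mu_e|+|\mu_p|=O\bigl((1+e)^{1-\beta}\bigr)$, which obeys the required decay bound for any $\alpha\in(2,\beta-1)$. Since $\mu_e<0$ everywhere, (a) and (b) hold; since $\int\hat v_2\mu_p\,dv=2\epsilon\int v_2^2\langle v\rangle^{-1}(1+\langle v\rangle)^{-\beta}\,dv>0$ while $\int\hat v_2^2\mu_e\,dv<0$, condition (c) holds. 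For part~2 there is no real difficulty beyond checking convergence of these integrals and the decay bound, both guaranteed by $\beta>3$.
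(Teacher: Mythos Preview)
Your route to $neg(\Atwoz)\ge1$ differs from the paper's and contains a genuine gap. The paper does not vary $P$ at all: it simply plugs the constant function $h\equiv1$ into $\Atwoz$ and reads off $\langle\Atwoz 1,1\rangle_{\lpt}<0$ directly from condition~(c), which is designed precisely for this purpose. You instead diagonalise by Fourier modes, observe (correctly, via your integration-by-parts identity) that the constant mode carries a \emph{positive} eigenvalue, and then try to make a nonconstant mode negative by taking $P$ large. That step requires $\int\hat v_2\mu_p\,dv>0$, and you assert that condition~(c) ``controls the sign'' of this integral. It does not: combining (c) with your identity $\int\hat v_2\mu_p\,dv+\int\hat v_2^2\mu_e\,dv=-A$ (where $A=\int\mu(1+v_1^2)\langle v\rangle^{-3}\,dv>0$) yields only $\int\hat v_2\mu_p\,dv>-A/2$, not positivity. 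A concrete failure is the paper's own explicit example in Part~2, $\mu(e,p)=\alpha(e)$ with $\mu_p\equiv0$: here $\int\hat v_2\mu_p\,dv=0$, so every nonzero Fourier mode of $\Atwoz$ has eigenvalue $(2\pi k/P)^2>0$ for \emph{every} $P$, and your argument produces no negative direction. (Your identity in fact shows that the sign on the $\bar h$-term in the paper's displayed homogeneous formula for $\Atwoz$ is opposite to the one obtained from the main definition in \S\ref{mainres}; but noticing this does not repair your approach, since you still need a negative direction and the nonconstant modes do not supply one under (a)--(c) alone.)

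For Part~2 your family $(1+\epsilon p^2)(1+e)^{-\beta}$ with $\beta>3$ does satisfy (a)--(c) and the decay hypothesis \eqref{weight}, and the verification is cleaner than the paper's piecewise-defined, numerically checked $\alpha(e)$. Note, however, that your example has $\mu_e<0$ everywhere and is therefore strictly monotone; the paper's example is deliberately nonmonotone, in keeping with the thrust of the paper, even though the proposition as stated does not require this.
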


\begin{proof}
\begin{enumerate}
\item
We verify that the above conditions verify the conditions of the first part of Theorem \ref{mainthm} which would imply that the solution is unstable.
Our plan is to show that the conditions listed in the proposition guarantee $l^0<0, neg\left(\Atwoz\right)\geq1$, and $neg\left(\Aonez\right)=0$.
Since $\mu$ has no $x$ dependence
	$$
	l^0
	=
	\frac{1}{P}\int_0^P\int\mu_e\hat{v}_1\Proj\left(\hat{v}_1\right)\;dv\;dx
	=
	\int\mu_e\hat{v}_1\Proj\left(\hat{v}_1\right)\;dv
	=
	\int\mu_e\hat{v}_1^2\;dv<0.
	$$

Next, we have
	$
	\Aonez h
	=
	-\p_x^2h-\left(\int\mu_e\;dv\right)h+\frac{1}{P}\int\mu_e\int_0^Ph\;dx\;dv,
	$
so that
	\begin{eqnarray*}
	\left<\Aonez h,h\right>
	&=&
	\int_0^P(h')^2\;dx-\left(\int\mu_e\;dv\right)\left[\int_0^Ph^2\;dx-\frac{1}{P}\left(\int_0^Ph\;dx\right)^2\right]\geq0
	\end{eqnarray*}
by H\"{o}lder's Inequality since $\int\mu_e\;dv\leq0$. Thus $neg(\Aonez)=0$. As for $\Atwoz$, we write
	$$
	\Atwoz h
	=
	-\p_x^2h-\left(\int\hat{v}_2\mu_p\;dv\right)h+\frac{1}{P}\int\hat{v}_2^2\mu_e\;dv\int_0^Ph\;dx.
	$$
Therefore, choosing $h\equiv1$, we have
	$$
	\left<\Atwoz 1,1\right>
	=
	-P\int\hat{v}_2\mu_p\;dv+P\int\hat{v}_2^2\mu_e\;dv
	<0,
	$$
so that $neg(\Atwoz)\geq1$.

This verifies that the conditions imply that $neg(\Atwoz)\geq1>0=neg(\Aonez)$, and that $l^0<0$. Therefore, by Theorem \ref{mainthm} such an equilibrium is unstable.

%
%
%
%
%

\item
We construct an explicit example.
Let $\mu(e,p)=\alpha(e)$ with $\alpha(e)=\gamma(e)+\eta(e)$, where $\gamma(e)=e-1$ on the interval $[1,2)$ and $0$ otherwise, and $\eta(e)=\exp{[-(e-2)^2]}$ on $[2,\infty)$ and $0$ otherwise. Even though $\alpha(e)$ is not smooth, it can be approximated by a smooth function that will still verify the calculations below. We verify the conditions of the proposition one by one:

	\begin{enumerate}
	\item
	We need to verify that $\int\mu_e\hat{v}_1^2\;dv<0$:
		$$
		\int\mu_e\hat{v}_1^2\;dv
		=
		\int\alpha'\;\hat{v}_1^2\;dv\\
		=
		\int(\gamma'+\eta')\;\hat{v}_1^2\;dv.
		$$
	We make the change of variables $v_1v_2\to r\theta$, so that $r^2=v_1^2+v_2^2=e^2-1$, to get
		\begin{eqnarray*}
		\int\mu_e\hat{v}_1^2\;dv
		&=&
		\int_0^{2\pi}\int_0^\infty(\gamma'+\eta')\frac{(r\cos\theta)^2}{1+r^2}\;r\;dr\;d\theta\\
		&=&
		\int_0^{2\pi}\cos^2\theta\;d\theta\left[\underbrace{\int_0^{\sqrt{3}}\frac{r^3}{1+r^2}\;dr}_{I}+\underbrace{\int_{\sqrt{3}}^\infty\eta'\frac{r^3}{1+r^2}\;dr}_{II}\right]
		\end{eqnarray*}
	Now:
		$$
		I
		=
		\int_0^{\sqrt{3}}\frac{r^3}{1+r^2}\;dr
		\approx
		0.8
		$$
		
		$$
		II
		=
		\int_{\sqrt{3}}^\infty\eta'\frac{r^3}{1+r^2}\;dr
		=
		-\int_{\sqrt{3}}^\infty2\left(\sqrt{1+r^2}-2\right)\exp{\left[-\left(\sqrt{1+r^2}-2\right)^2\right]}\frac{r^3}{1+r^2}\;dr
		\approx
		-2.5
		$$
	which, indeed, verifies the first condition.
	
	\item
	We verify that $\int\mu_e\;dv\leq0$.
		\begin{eqnarray*}
		\int\mu_e\;dv
		&=&
		\int\alpha'\;dv\\
		&=&
		\int_0^{2\pi}d\theta\left[\int_0^{\sqrt{3}}r\;dr+\int_{\sqrt{3}}^\infty\eta'\;r\;dr\right]\\
		&=&
		2\pi\left[\frac{3}{2}-\int_{\sqrt{3}}^\infty2\left(\sqrt{1+r^2}-2\right)\exp{\left[-\left(\sqrt{1+r^2}-2\right)^2\right]}\;r\;dr\right]\\
		&\approx&
		2\pi\left[\frac{3}{2}-2.9\right]<0.
		\end{eqnarray*}
	
	\item
	We have $\int\mu_e\hat{v}_2^2\;dv-\int\mu_p\hat{v}_2\;dv=\int\mu_e\hat{v}_2^2\;dv$ since $\mu$ does not depend on $p$. The proof that this integral is negative is identical to the above proof that $\int\mu_e\hat{v}_1^2\;dv<0$.

	\end{enumerate}
	
Thus our assumptions are all verified. This implies instability.
%
%
\end{enumerate}
\end{proof}

\subsection{Weak Magnetic Field}
\begin{prop}
There exists an inhomogeneous, nonmonotone, purely magnetic equilibrium that is unstable.
\end{prop}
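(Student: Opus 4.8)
The plan is to obtain the desired equilibrium as a small perturbation of the homogeneous unstable example of the previous proposition, turning on a weak equilibrium magnetic field, and to show that the instability criterion of Theorem~\ref{mainthm} survives the perturbation. Fix the period $P$ and consider a one-parameter family $\mu^\pm_\delta(e,p)=\alpha(e)+\delta\,\nu^\pm(e,p)$, $0\le\delta<\delta_0$, where $\alpha$ is the homogeneous profile just constructed (so $\mu^\pm_0=\alpha$ has no $p$-dependence and $\mu^+_0=\mu^-_0$), and $\nu^\pm$ is a fixed, compactly supported profile chosen so that the symmetry $\mu^+_\delta(e,p)=\mu^-_\delta(e,-p)$ of \eqref{pur-mag} holds, so that $\mu^\pm_\delta\ge0$ and the bound \eqref{weight} hold for $\delta$ small, and so that the reduced source below does not vanish identically. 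Write $\mathcal{A}_{1,\delta}^0,\mathcal{A}_{2,\delta}^0,l_\delta^0$ for the operators and constant of \S\ref{mainres} built from $\mu^\pm_\delta$, so that $\mathcal{A}_{1,0}^0,\mathcal{A}_{2,0}^0,l_0^0$ are the $\Aonez,\Atwoz,l^0$ of the previous proposition.

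I would first exhibit, for each small $\delta>0$, an honest purely magnetic $P$-periodic equilibrium for $\mu^\pm_\delta$. Since $\phi^0\equiv0$, the energy is $e=\langle v\rangle$, independent of $x$, and the symmetry \eqref{pur-mag} turns the equilibrium equation $\p_x^2\psi^0=-\int\hat{v}_2\bigl(\mu^+_\delta(e,v_2+\psi^0)-\mu^-_\delta(e,v_2-\psi^0)\bigr)\,dv$ into an autonomous second-order ODE $\p_x^2\psi^0=-g_\delta(\psi^0)$, where $g_\delta(\psi)=\delta\int\hat{v}_2\bigl(\nu^+(\langle v\rangle,v_2+\psi)+\nu^-(\langle v\rangle,v_2-\psi)\bigr)\,dv$ (the $\alpha$-part drops out by oddness in $v_2$). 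Choosing $\nu^\pm$ so that $\psi\mapsto\int_0^\psi g_\delta$ has a strict local minimum, the usual conservation-of-energy analysis of this ODE produces a nonconstant periodic solution $\psi^0_\delta$, whose period can be normalized to $P$ by rescaling; this is the Lin--Strauss construction of \cite{rvm1} near the homogeneous state. By construction $B^0_\delta=\p_x\psi^0_\delta\not\equiv0$, so the equilibrium is genuinely inhomogeneous, while $\|B^0_\delta\|\to0$ as $\delta\to0$. Finally, $\alpha'>0$ on the interval on which $\gamma$ is increasing and $\alpha>0$ there, so $\mu^\pm_{\delta,e}>0$ on a subset of $\{\mu^\pm_\delta>0\}$ for $\delta$ small, i.e. $\mu^\pm_\delta$ is nonmonotone.

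The core of the proof is then the continuity statement that, as $\delta\to0$, $\mathcal{A}_{1,\delta}^0\to\Aonez$ and $\mathcal{A}_{2,\delta}^0\to\Atwoz$ (in, say, strong resolvent sense) and $l_\delta^0\to l^0$. The equilibrium trajectories $(X^\pm,V^\pm)$ depend continuously on $B^0_\delta$, and---using the energy weight $w$ and the flow estimates of \S\ref{the-operators}---this forces the projections $\Proj^\pm$ associated with $\mu^\pm_\delta$ to converge strongly to the homogeneous projection $h\mapsto\frac{1}{P}\int_0^Ph$; since the only difference between these operators and $\mp\p_x^2$ is a family of uniformly bounded operators converging strongly, the resolvent convergence and $l_\delta^0\to l^0$ follow. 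At $\delta=0$ the previous proposition gives $neg(\Aonez)=0$ with $\ker\Aonez=\{\text{constants}\}$ (indeed $\Aonez\ge c_0>0$ on $\lptz$), $l^0<0$, and $\langle\Atwoz\,1,1\rangle_{\lpt}<0$ so $neg(\Atwoz)\ge1$; each of these is an open condition. Moreover constants always lie in $\ker\mathcal{A}_{1,\delta}^0$ (because $\Proj^\pm 1=1$). Hence for $\delta>0$ small the profile $\mu^\pm_\delta$ satisfies all hypotheses of Theorem~\ref{mainthm}: $\ker\mathcal{A}_{1,\delta}^0=\{\text{constants}\}$, $l_\delta^0<0\neq0$, and
$$
neg\bigl(\mathcal{A}_{2,\delta}^0\bigr)\ge1>0=neg\bigl(\mathcal{A}_{1,\delta}^0\bigr)+neg(-l_\delta^0),
$$
so $\mu^\pm_\delta$ is spectrally unstable. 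Fixing any such $\delta>0$ gives an inhomogeneous, nonmonotone, purely magnetic unstable equilibrium, as claimed.

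The step I expect to be the main obstacle is precisely the continuity in $\delta$: even a norm-small change in $B^0$ perturbs the characteristic flow---and hence $\ker D^\pm$ and the projections $\Proj^\pm$---globally (and $0$ is not an isolated point of $\Sigma(D^\pm)$), so establishing genuine convergence of $neg(\mathcal{A}_{2,\delta}^0)$, $neg(\mathcal{A}_{1,\delta}^0)$ and $l_\delta^0$ (rather than merely of the formal symbols) requires careful use of the weighted flow estimates of \S\ref{the-operators}, in the spirit of Lemma~\ref{propql}. A secondary point is to check that the reduced ODE above really does admit a nonconstant $P$-periodic solution for a suitable choice of $\nu^\pm$ with $\|B^0_\delta\|$ as small as desired.
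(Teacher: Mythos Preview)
Your overall strategy---perturb off the homogeneous example and verify that the hypotheses of Theorem~\ref{mainthm} survive---matches the paper's, but the particular perturbation you choose creates a genuine obstruction precisely at the step you flag as ``secondary.'' Because $\mu^\pm_0=\alpha(e)$ has no $p$-dependence, the source $g_\delta$ in the pendulum ODE $\p_x^2\psi^0=-g_\delta(\psi^0)$ satisfies $g_\delta'(0)=O(\delta)$, and hence the period of the small periodic orbits is $2\pi/\sqrt{|g_\delta'(0)|}\to\infty$ as $\delta\to0$. There is no legitimate ``rescaling'' that brings this down to your fixed $P$: once the speed of light is set to $1$, the spatial period is a hard parameter of the system. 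Consequently you cannot simultaneously fix $P$, make $B^0_\delta$ small, and send $\delta\to0$; either $P=P_\delta\to\infty$ (so the operators $\mathcal A_{i,\delta}^0$ live on varying spaces $L^2_{P_\delta}$ with no common limiting problem) or $\delta$ is pinned away from zero (and there is nothing to perturb from). The Lin--Strauss construction you cite does not rescue this, because it is not a construction in which the $p$-dependence of $\mu$ is being switched on.

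The paper separates the two roles that your $\delta$ is playing. It fixes $\mu^-$ once and for all \emph{with} genuine $p$-dependence (even in $p$ and with $\int\hat v_2\mu^-_p(\langle v\rangle,v_2)\,dv>0$, so that $g'(0)\neq0$), and takes as small parameter the \emph{amplitude} $\epsilon$ of the periodic solution $\psi^0_\epsilon$ of the now non-degenerate ODE. Then $|\psi^0_\epsilon|_{C^1}\to0$ while $T_{\psi^0_\epsilon}\to P_{cr}=2\pi/\sqrt{|g'(0)|}<\infty$, so the limiting homogeneous problem sits on a fixed period and the comparison makes sense. Nonmonotonicity is imposed directly on this fixed $\mu^-$ via a quantitative smallness condition on the bad set $S_b=\{\mu^-_e>0\}$, namely $\sup\mu^-_e<\pi^2/(3P_{cr}^2|S_b|)$; this yields $\mathcal A_1^{0,\epsilon}\ge0$ by an explicit Poincar\'e estimate (no delicate convergence of $\Proj^\pm$ is needed for this part), while the persistence of a negative eigenvalue of $\mathcal A_2^{0,\epsilon}$ is taken from \cite{rvm1}.
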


Our goal is to construct an explicit purely magnetic equilibrium, for which we can conclude, using our main result, that it is unstable. Again, we would like to verify the conditions of Theorem \ref{mainthm}:
	$$
	l^0<0
	\hspace{1cm}
	\text{and}
	\hspace{1cm}
	neg\left(\Atwoz\right)>neg\left(\Aonez\right).
	$$
	
Therefore, we construct an equilibrium $\mu^\pm(e,p)$ for which $l^0<0, neg\left(\Atwoz\right)\geq1$ and $neg\left(\Aonez\right)=0$.
The main idea of the construction is to consider an equilibrium that is \emph{almost} monotone. We separate $x,v$ space into the sets $S^\pm_g$ (good) and $S^\pm_b$ (bad) where $\mu^\pm_e$ are negative or positive respectively. Then we show that if $S^\pm_b$ are not too big (in measure) we are essentially in the previously-known \emph{monotone} situation, and we can easily investigate the properties of the operators $\Aonez$ and $\Atwoz$.
To choose a magnetic potential we consider the ODE \eqref{psi-ode} for the magnetic potential, which can be written as
	\begin{equation}\label{4.2}
	\p_x^2\psi^0
	=
	2\int\hat{v}_2\mu^-(\left<v\right>,v_2-\psi^0(x))\;dv
	\end{equation}
using a simple change of variables, similar to the ones demonstrated in the proof of Lemma \ref{properties2}\eqref{bz}, and recalling that $\mu^+(e,p)=\mu^-(e,-p)$. For simplicity, we write \eqref{4.2} as
	$$
	\p_x^2\psi^0
	=
	g(\psi^0).
	$$
	
In \cite{rvm1} it is shown that a sufficient condition for the existence of a purely magnetic equilibrium is the existence of magnetic potential that solves this ODE. Indeed, there exist periodic solutions.
Moreover, assuming that $\mu^-(e,p)$ is even in $p$ we readily see that $g(0)=0$, and assuming that $\int\hat{v}_2\mu^-_p(\left<v\right>,v_2)\;dv>0$ we see that $g'(0)<0$. This implies that the origin is a center and that there exists $\epsilon_0>0$ for which there is a family of periodic solutions $\psi_\epsilon^0(x)$ with $0<\epsilon<\epsilon_0$ and periods $T_{\psi_\epsilon^0}$ depending upon $\epsilon$, satisfying \eqref{4.2}, with
	\begin{enumerate}
	\item
	$\left|\psi_\epsilon^0\right|_{C^1}\to0$ as $\epsilon\to0$,
	\item
	$T_{\psi_\epsilon^0}\to P_{cr}$ as $\epsilon\to0$. Here $P_{cr}$ is the period associated with the homogeneous case $\psi^0\equiv0$.
	\end{enumerate}
By readjusting the starting point, we may assume that $\psi_\epsilon^0$ obtains its minimum at $0$ and at $T_{\psi_\epsilon^0}$, that it has a single maximum, obtained at $\frac{1}{2}T_{\psi_\epsilon^0}$, that it is strictly increasing on $(0,\frac{1}{2}T_{\psi_\epsilon^0})$ and finally that it is symmetric with respect to $\frac{1}{2}T_{\psi_\epsilon^0}$.
We now show that for $\epsilon$ that is small enough, and with an appropriate choice of particle distribution $\mu^-(e,p)$, this distribution is linearly unstable.
As discussed above, we are flexible in our choice of particle distribution $\mu^-(e,p)$, as long as the following conditions are satisfied:
	
	\begin{enumerate}
	\item
	$\mu^-(e,p)$ is even in $p$
	
	\item
	$\int\hat{v}_2\mu^-_p(\left<v\right>,v_2)\;dv>0$
	
	\item
	$\sup{\mu^-_e}$ and $\left|S_b\right|$ satisfy the relation
		\begin{equation}\label{stabcond}
		\sup{\mu^-_e}<\frac{\pi^2}{3P^2_{cr}\left|S_b\right|}.
		\end{equation}
	where $\left|S_b\right|$ is the measure of the set $S_b=\left\{\mu^-_e>0\right\}$.
	\end{enumerate}
The first and third conditions are clearly easily satisfied with the right choice of $\mu^-(e,p)$. As for the second condition: Since $p$ is essentially $v_2$ (they differ by $\psiez$ which is a perturbation of $0$), the second condition essentially states that $\int p\mu^-_p\;dv>0$. This is satisfied since $\mu_p$ is an odd function of $p$.

We now show that $\Aonez$ is a nonnegative operator: Recall that

	$$
	\Aonez h=-\p_x^2h-\left(\int\mu^-_e\;dv\right)h+\int\mu^-_e\Proj h\;dv
	$$
where $\Proj$ is the projection operator onto $\ker D$. However,

	$$
	D
	=
	\hat{v}_1\p_x-\hat{v}_2B^0\p_{v_1}+\hat{v}_1B^0\p_{v_2},
	$$
where $B^0_\epsilon=\p_x\psiez$. Thus $\Proj$ and $\Aonez$ both depend on $\epsilon$. We denote them by $\Proj_\epsilon$ and $\Aoneze$. Let $K_\epsilon=\frac{\pi^2}{T_{\psiez}^2}$ be the constant given by Poincar\'{e}'s Inequality, which depends on $T_{\psiez}$ and thus on $\epsilon$. This constant satisfies
	$
	\int_0^{T_{\psiez}}h^2\;dx\leq K_\epsilon^{-1}\int_0^{T_{\psiez}}\left(\p_xh\right)^2\;dx.
	$
Since $T_{\psiez}$ depends on $\epsilon$ continuously, so does $K_{\epsilon}$. We let
	$
	K_0=\frac{\pi^2}{P_{cr}^2}
	$
be the Poincar\'{e} constant associated with the homogeneous case $\epsilon=0$. We have the following:

\begin{clm}
If \eqref{stabcond} holds then there exists some $\epsilon'>0$ such that $\Aoneze$ is nonnegative for all $0\leq\epsilon<\epsilon'$. 
\end{clm}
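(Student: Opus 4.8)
The plan is to show that the quadratic form of $\Aoneze$ is uniformly coercive on mean-zero functions for all sufficiently small $\gep$, by reducing to the homogeneous case $\gep=0$, where the projection $\Proj_0$ is explicit. Since $1\in\ker D_\gep$ we have $\Proj_\gep 1=1$, so $\Aoneze$ annihilates constants; as $\Aoneze$ is selfadjoint (see \S\ref{the-operators}), it suffices to prove $\left<\Aoneze h,h\right>\geq 0$ for every $\Tpsiez$-periodic $h$ with $\int_0^{\Tpsiez}h\,dx=0$. For such $h$ a direct computation gives
\[
\left<\Aoneze h,h\right>
=
\|\p_x h\|_{L^2}^{2}
+
\int_0^{\Tpsiez}\!\!\left(\int(-\mu^-_e)\,dv\right)h^2\,dx
-
\int_0^{\Tpsiez}\!\!\int(-\mu^-_e)\,(\Proj_\gep h)\,h\,dv\,dx ,
\]
all $L^2$ norms being taken over one period $[0,\Tpsiez]$.

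For the second term I would split the inner integral over the good set $S_g=\{\mu^-_e<0\}$ and the bad set $S_b=\{\mu^-_e>0\}$: the $S_g$-part is pointwise nonnegative, and on $S_b$ one has $\int(-\mu^-_e)\,dv\geq -(\sup\mu^-_e)\,|S_b|$ pointwise in $x$, so by Poincar\'e's (Wirtinger's) inequality with constant $K_\gep=\pi^2/\Tpsiez^{2}$,
\[
\int_0^{\Tpsiez}\!\!\left(\int(-\mu^-_e)\,dv\right)h^2\,dx
\;\geq\;
-(\sup\mu^-_e)\,|S_b|\,K_\gep^{-1}\,\|\p_x h\|_{L^2}^{2}.
\]
The stability condition \eqref{stabcond} is exactly calibrated so that $(\sup\mu^-_e)\,|S_b|<K_0/3$ with $K_0=\pi^2/P_{cr}^{2}$, and since $\Tpsiez\to P_{cr}$ we have $K_\gep\to K_0$; hence this term is $\geq -(\tfrac13+o(1))\,\|\p_x h\|_{L^2}^{2}$ as $\gep\to0$. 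For the third term I would use $|\mu^-_e|\leq w$ from \eqref{weight} and Cauchy--Schwarz in $\lwt$: writing $\tilde h(x,v):=h(x)$,
\[
\left|\int_0^{\Tpsiez}\!\!\int(-\mu^-_e)\,(\Proj_\gep h)\,h\,dv\,dx\right|
\;\leq\;
\left\|\Proj_\gep\tilde h\right\|_w\,\|\tilde h\|_w
\;\leq\;
\big(\textstyle\sup_x\int w\,dv\big)^{1/2}\left\|\Proj_\gep\tilde h\right\|_w\,\|h\|_{L^2},
\]
and $\sup_x\int w\,dv<\infty$ by the form of $w$.

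It then remains to show that $\|\Proj_\gep\tilde h\|_w\to 0$ as $\gep\to0$, \emph{uniformly} over the set $\{h:\int_0^{\Tpsiez}h\,dx=0,\ \|\p_x h\|_{L^2}=1\}$. At $\gep=0$ one has $\Proj_0\tilde h=\frac1{P_{cr}}\int_0^{P_{cr}}h\,dx=0$ for mean-zero $h$, so the uniform smallness follows from (i) the strong convergence $\Proj_\gep\to\Proj_0$ in $\lwt$, a consequence of $B^0_\gep=\p_x\psiez\to0$ in $C^0$ and the attendant convergence $D_\gep\to D_0$ (cf.\ \S\ref{the-operators} and Lemma \ref{properties2}), together with (ii) the fact that $h\mapsto\tilde h$ carries bounded subsets of $H^1_{P,0}$ into \emph{pre}compact subsets of $\lwt$ (compactness of $H^1(0,\Tpsiez)\hookrightarrow L^2(0,\Tpsiez)$ followed by the bounded inclusion into $\lwt$), on which the strong convergence of the uniformly bounded operators $\Proj_\gep$ is automatically uniform. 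Granting this, the third term is $o(1)\,\|\p_x h\|_{L^2}^{2}$, and combining the three estimates yields $\left<\Aoneze h,h\right>\geq\big(1-\tfrac13-o(1)\big)\,\|\p_x h\|_{L^2}^{2}$; choosing $\gep'>0$ small enough makes the coefficient positive for all $0\leq\gep<\gep'$, so $\Aoneze\geq 0$ there.

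The main obstacle is step (i): establishing the strong convergence of the spectral projection $\Proj_\gep$ onto the \emph{non-isolated}, infinitely degenerate kernel of the skew-adjoint transport operator $D_\gep$ as the equilibrium magnetic field is switched off. This does not follow from naive resolvent convergence, and it is precisely the point where the decay of the weight $w$ in \eqref{weight} — hence the compactness of the embedding into $\lwt$ — is indispensable. The only other delicate point is bookkeeping the $\gep$-dependence of the period $\Tpsiez$, equivalently of $K_\gep$, which is routine.
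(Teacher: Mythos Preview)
Your approach diverges from the paper's at the handling of the projection term. The paper splits \emph{both} integrals---the one with $u^2$ and the one with $(\Proj^\pm u)u$---over the good and bad sets $S^\pm_g,S^\pm_b$, then pairs them: on the good set, the contraction $|III|\le|II|$ (coming from $\|\Proj^\pm\|\le1$ in $\lwt$) gives $-II+III\ge0$ for \emph{every} $\epsilon$, and on the bad set the same contraction gives $|V|\le IV$, after which $IV$ is controlled by \eqref{stabcond} and Poincar\'e. The only residual $\epsilon$-dependence is through $K_\epsilon$, which is obviously continuous in the period. No limit of projections is ever taken.

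You instead discard the good-set positivity in the second term and try to absorb the \emph{entire} projection integral by showing $\|\Proj_\epsilon\tilde h\|_w=o(1)$ as $\epsilon\to0$. This forces you to prove strong convergence $\Proj_\epsilon\to\Proj_0$ on functions of $x$ alone, which you correctly flag as the crux but do not actually establish. The references you invoke (\S\ref{the-operators}, Lemma~\ref{properties2}) concern convergence in the growth parameter $\la$ at a \emph{fixed} equilibrium; nothing in the paper addresses stability of $\ker D_\epsilon$ under perturbation of the equilibrium magnetic field. This is not mere bookkeeping: turning on an arbitrarily small $B^0_\epsilon$ can create trapped orbits whose ergodic averages differ substantially from the spatial mean, so the kernel---and hence the projection---can jump discontinuously at $\epsilon=0$. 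Absent an argument ruling this out, your step (i) is a genuine gap, and the compactness in (ii) does not rescue it. The paper's pairing of $II$ with $III$ and $IV$ with $V$ is precisely what lets one avoid this delicate limit.
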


\begin{proof}
We first show for $\epsilon=0$, and then conclude for small values of $\epsilon$ by a certain continuity argument. For brevity we drop the $\epsilon$. Letting $u\in\hptz$ not identically $0$, we have:
	\begin{align*}
	\left<\Aonez u,u\right>_{\lpt}
	&=
	\int_0^P\left(\p_xu\right)^2\;dx
	-
	\sum_\pm\iint_{S^\pm_g}\mu^\pm_e\;dv\;u^2\;dx
	+
	\sum_\pm\iint_{S^\pm_g}\mu^\pm_e\Proj^\pm(u)\;u\;dv\;dx\\
	&\hspace{2.65 cm}-
	\sum_\pm\iint_{S^\pm_b}\mu^\pm_e\;dv\;u^2\;dx
	+
	\sum_\pm\iint_{S^\pm_b}\mu^\pm_e\Proj^\pm(u)\;u\;dv\;dx\\
	&=
	I-II+III-IV+V.
	\end{align*}
We need to show that this is positive for all $u$. Since $\Proj^\pm$ are projection operators, we easily have $|III|\leq|II|$ and $|V|\leq|IV|$. Since $II$ has a `good' sign, we have that $-II+III>0$. Terms $IV$ and $V$ have the `wrong' sign.
Using Poincar\'{e}'s inequality we can estimate:
	\begin{align*}
	|V|\leq IV
	=
	\sum_\pm\iint_{S^\pm_b}\mu^\pm_e\;dv\;u^2\;dx
	\leq
	\sum_\pm\left(\max_{S^\pm_b}\mu^\pm_e\right)\left|S^\pm_b\right|\int_0^Pu^2\;dx
	\leq
	\frac{K_0}{3} K_0^{-1}\int_0^P\left(\p_xu\right)^2\;dx
	=
	\frac{1}{3}I.
	\end{align*}
Thus $\left<\Aonez u,u\right>\geq I-IV+V\geq\frac{1}{3}I\geq\frac{K_0}{3}\left\|u\right\|_{\lpt}^2>0$. Finally, since $K_\epsilon$ varies continuously with $\epsilon$, we conclude that there must exist an $\epsilon'$ as in the claim.
\end{proof}

Next, we turn to the operator $\Atwoz$. Again, one should denote $\Atwoze$ to make clear the dependence upon $\epsilon$. It is shown in section 4.3 of \cite{rvm1} that for $\epsilon$ sufficiently small $\Atwoze$ has a negative eigenvalue if $\Atwozz$ has one. The same proof still holds in our situation.

Our last duty is to show that
	$
	l^{0,\epsilon}
	=
	\frac{1}{\Tpsiez}\int_0^{\Tpsiez}\int\hat{v}_1\mu^-_e\Proj_{\epsilon}\left(\hat{v}_1\right)dv\;dx$
is negative. By Lemma \ref{proj-parity} $\Proj$ preserves parity with respect to $v_1$. Thus $\hat{v}_1\Proj_{\epsilon}\left(\hat{v}_1\right)>0$. But since we chose $\mu^-(e,p)$ to be such that the domain where $\mu^-_e>0$ is negligible compared to the domain where $\mu^-_e<0$ (and $\mu^-_e$ is bounded by a small positive number from above), we can clearly pick $\mu^-(e,p)$ such that $l^0<0$.

\section{The Operators}\label{the-operators}
For the sake of completeness of this paper, we prove the important properties of our operators in full detail. These proofs appear in very similar form in \cite{rvm1,rvm2}. One significant difference is that we cannot use $\mu_e$ as a weight as it may vanish, and we therefore use the weight $w$ introduced in the introduction. Another notable novel part is Lemma \ref{properties1}\eqref{operators-positive} concerning the positivity of $\Aonel$ for large values of $\la$, which has a rather lengthy proof. Most of these lemmas are consequences of the properties of $\Ql_\pm$ discussed in Lemma \ref{propql}.

\begin{lem}[Properties of $\Aonel, \Atwol$]\label{properties1}
Let $0\leq\la<\infty$.
	\begin{enumerate}
	\item\label{op-norm}
	$\Aonel$ is selfadjoint on $\lptz$. $\Atwol$ is selfadjoint on $\lpt$. Their domains are $\hptz$ and $\hpt$, respectively, and their spectra are discrete.
	
	\item
	For all $h(x)\in\hptz$, $\|\Aonel h-\Aonez h\|_{\lpt}\to0$ as $\la\to0$. The same is true for $\Atwol$ with $h(x)\in\hpt$.

	\item
	For $i=1,2$ and $\sigma>0$, it holds that $\|\Ail-\Ais\|=O(|\la-\sigma|)$ as $\la\to\sigma$, where $\|\cdot\|$ is the operator 	norm from $\hptz$ to $\lpt$ in the case $i=1$, and from $\hpt$ to $\lpt$ in the case $i=2$.

	\item
	For all $h(x)\in\hptz$, $\|\Aonel h+\p_x^2h\|_{\lpt}\to0$ as $\la\to\infty$.

	\item
	When thought of as acting on $\hpt$ (rather than $\hptz$), the null spaces of $\Aonel$ and $\Aonez$ both contain the constant functions.
	
	\item\label{operators-positive}
	There exists $\gamma>0$ such that there exists $\La>0$ such that for all $\la\geq\La$, $\Ail>\gamma>0$, $i=1,2$.
	\end{enumerate}
\end{lem}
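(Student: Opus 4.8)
\emph{Items (1)--(5)} are soft, and I would dispatch them as consequences of Lemma~\ref{propql} together with standard facts on the circle; the substance of the lemma is item~\eqref{operators-positive}, whose case $i=1$ is the only genuinely hard point. For item~(1) I would invoke Kato--Rellich: $-\p_x^2$ (resp.\ $-\p_x^2+\la^2$) is self-adjoint with compact resolvent on $\R/P\Z$, and each remaining term of $\Aonel,\Atwol$ is an everywhere-defined \emph{bounded} operator on $\lpt$ (the $dv$-integrals converge by \eqref{weight} and $\|\Ql_\pm\|_{\lwt\to\lwt}=1$), so the domains remain $\hptz,\hpt$ and the spectra stay discrete. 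Symmetry of the bounded terms — e.g.\ of $h\mapsto\sum_\pm\int\mu^\pm_e\Ql_\pm h\,dv$ on $\lpt$ — is the change-of-variables identity already used for Lemma~\ref{propql}\eqref{almost-sym}: unfolding $\Ql_\pm$, substituting $(x,v)\to(X^\pm(-s),V^\pm(-s))$ (Jacobian $1$ by \eqref{jaco}, with $\mu^\pm$ and $w$ invariant under $D^\pm$), then $v\to(-v_1,v_2)$ via \eqref{coor-change}, and using that $\mu^\pm_e$ and $w$ are even in $v_1$, turns $\langle\int\mu^\pm_e\Ql_\pm h\,dv,k\rangle_{\lpt}$ into $\langle h,\int\mu^\pm_e\Ql_\pm k\,dv\rangle_{\lpt}$; the $\hat{v}_2$-weighted term of $\Atwol$ is identical since $\hat{v}_2$ is even in $v_1$. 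Items~(2)--(4) then follow term by term from Lemma~\ref{propql}\eqref{laz},\eqref{lanotz},\eqref{lainfty}, the $+\la^2$ and multiplicative pieces of $\Atwol$ being trivially continuous/Lipschitz in $\la$ on $(0,\infty)$, and item~(5) is the one-line computation $\Aonel 1=\Aonez 1=0$, using $\Ql_\pm 1=1$ and $\Proj^\pm 1=1$.

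\emph{Item (6) for $i=2$} is easy, because the $+\la^2$ term does all the work; I would just estimate the quadratic form. For $h\in\hpt$,
$$\langle\Atwol h,h\rangle_{\lpt}=\|\p_x h\|_{\lpt}^2+\la^2\|h\|_{\lpt}^2-\int_0^P\Big(\sum_\pm\int\hat{v}_2\mu^\pm_p\,dv\Big)h^2\,dx-\sum_\pm\int_0^P\int\mu^\pm_e\hat{v}_2\,\Ql_\pm(\hat{v}_2 h)\,h\,dv\,dx,$$
and the last two terms are bounded by $C'\|h\|_{\lpt}^2$ with $C'$ \emph{independent of} $\la$ (for the fourth: $|\hat{v}_2|\le1$, $\|\Ql_\pm\|=1$, and $\|\mu^\pm_e\hat{v}_2 h/w\|_w^2\le\int|\mu^\pm_e|h^2\le(\sup_x\int|\mu^\pm_e|\,dv)\|h\|_{\lpt}^2$, then Cauchy--Schwarz in $\lwt$; the third uses \eqref{weight}). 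Hence $\langle\Atwol h,h\rangle\ge(\la^2-C'')\|h\|_{\lpt}^2$, so $\Atwol>1$ once $\la^2\ge C''+1$.

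\emph{Item (6) for $i=1$ is the main obstacle}, because there is no $\la^2$ term. The remainder $\mathcal{R}^\la:=\Aonel+\p_x^2=\sum_\pm\int\mu^\pm_e(\Ql_\pm-I)(\cdot)\,dv$ obeys $\|\mathcal{R}^\la\|_{\lpt\to\lpt}\le C$ uniformly in $\la$ (same estimates), but by Lemma~\ref{propql}\eqref{lainfty} it tends to $0$ only \emph{strongly}, not in operator norm, so it cannot simply be absorbed into the spectral gap $K_0=(2\pi/P)^2$ of $-\p_x^2$ on $\lptz$. Instead I would argue by compactness/contradiction: if the claim failed there would be $\la_n\to\infty$ and $h_n\in\hptz$ with $\|h_n\|_{\lpt}=1$ and $\langle\Aonelnm h_n,h_n\rangle=\|\p_x h_n\|_{\lpt}^2+\langle\mathcal{R}^{\la_n}h_n,h_n\rangle\le1/n$; then $\{h_n\}$ is bounded in $H^1_P$, so along a subsequence $h_n\to h$ in $\lpt$ (hence in $\lwt$, since $\|\cdot\|_w^2=C_w\|\cdot\|_{\lpt}^2$ on functions of $x$, $C_w:=\int w\,dv<\infty$) and $h_n\rightharpoonup h$ in $H^1_P$, with $\|h\|_{\lpt}=1$ and $\int_0^P h\,dx=0$. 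The crux is $\langle\mathcal{R}^{\la_n}h_n,h_n\rangle\to0$: writing it as $\sum_\pm\langle(\Ql_\pm-I)h_n,\,\mu^\pm_e h_n/w\rangle_w$, the second factor stays bounded in $\lwt$ by \eqref{weight}, while $\|(\Ql_\pm-I)h_n\|_w\le2\|h_n-h\|_w+\|(\Ql_\pm-I)h\|_w\to0$ by Lemma~\ref{propql}\eqref{lainfty} applied to the \emph{fixed} limit $h$. Then by weak lower semicontinuity of the Dirichlet energy and Poincar\'{e}'s inequality on mean-zero $P$-periodic functions,
$$0\ \ge\ \liminf_{n}\langle\Aonelnm h_n,h_n\rangle\ \ge\ \liminf_{n}\|\p_x h_n\|_{\lpt}^2\ \ge\ \|\p_x h\|_{\lpt}^2\ \ge\ K_0\|h\|_{\lpt}^2=K_0>0,$$
a contradiction; hence $\Aonel\ge K_0/2$ for $\la$ large, and choosing $\gamma$ the smaller and $\La$ the larger of the constants/thresholds obtained for $i=1,2$ finishes the lemma. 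The delicate point is precisely this last estimate: since $\mathcal{R}^\la\to0$ only strongly, it must survive a limit in which both the operator \emph{and} the test vector move, which works because $H^1_P\hookrightarrow\lpt$ is compact and $\Ql_\pm\to I$ strongly in $\lwt$, with $\|\Ql_\pm\|=1$ and the weight bound \eqref{weight} keeping the remaining factor under control.
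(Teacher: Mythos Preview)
Your treatment of items (1)--(5) and of item~(6) for $i=2$ is essentially the paper's: Kato--Rellich plus the boundedness/symmetry of the integral terms for (1), termwise use of Lemma~\ref{propql}\eqref{laz},\eqref{lanotz},\eqref{lainfty} for (2)--(4), the one-line check for (5), and for $\Atwol$ the observation that the $+\la^2$ dominates the uniformly bounded remainder.

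For item~(6) with $i=1$ your argument is correct but takes a genuinely different route from the paper. The paper gives a \emph{direct quantitative} estimate: it writes $\langle\Aonel h,h\rangle=\|\p_x h\|^2+\int\!\!\int\mu_e(\Ql h-h)h\,dv\,dx$, splits the time integral defining $\Ql$ at a cutoff $\beta<0$, bounds the tail by $Pe^{-\la|\beta|}\|h\|\,\|\p_x h\|$ using periodicity of the trajectories, bounds the near-zero piece by $P^{1/2}|\beta|^{1/2}\|h\|\,\|\p_x h\|$ using $|X(s)-x|\le|s|$, and then optimizes by setting $|\beta|=\La^{-1/2}$ to push the remainder below the Poincar\'e gap. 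This is constructive: $\La$ is explicit in terms of $P$ and the weight exponent $\alpha$. Your compactness/contradiction argument instead leverages the already-proved strong convergence $\Ql_\pm\to I$ in $\lwt$ together with the compact embedding $H^1_P\hookrightarrow\lpt$; the key observation --- that $\langle\mathcal{R}^{\la_n}h_n,h_n\rangle\to0$ despite both the operator and the vector moving --- is handled cleanly by splitting off the fixed limit $h$. What you gain is brevity and the avoidance of the somewhat delicate trajectory estimates; what you lose is any quantitative control on $\La$ or $\gamma$ (your ``hence $\Aonel\ge K_0/2$'' is a slight overclaim --- the contradiction yields existence of \emph{some} $\gamma>0$, though a trivial refinement of your argument recovers any $\gamma<K_0$). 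Since the paper never uses an explicit $\La$, your soft argument suffices for everything downstream.
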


\begin{proof}
\begin{enumerate}
\item
Note that as mentioned in Remark \ref{remark} we are only interested in the action of $\Aonez,\Aonel$ on $\lptz$ and not $\lpt$, but that does not matter for the purpose of this lemma. We first show that the perturbations of the Laplacian in $\Ail$ are bounded operators for $i=1,2$ and all $\la\geq0$. For the case $\la>0$, a typical such perturbation may be estimated as follows:
	\begin{align*}
	\left\|\sum_\pm\int\mu^\pm_e\Ql_\pm h\;dv\right\|_{\lpt}
	&\leq
	\sum_\pm\left(\int_0^P\left|\int\mu^\pm_e\Ql_\pm h\;dv\right|^2dx\right)^{1/2}\\
	&\leq
	\sum_\pm\left(\int_0^P\left\{\int\left|\mu^\pm_e\right|dv\right\}\left\{\int\left|\mu^\pm_e\right|\left|\Ql_\pm h\right|^2dv\right\}dx\right)^{1/2}\\
	&\leq
	\sum_\pm\left(\sup_x\int\left|\mu_e^\pm\right|dv\right)\left(\int_0^P\left\{\int\left|\mu^\pm_e\right|\left|\Ql_\pm h\right|^2dv\right\}dx\right)^{1/2}\\
	&\leq
	\sum_\pm\left(\sup_x\int\left|\mu_e^\pm\right|dv\right)\left\|\Ql_\pm h\right\|_w\\
	&\leq
	\sum_\pm C^\pm\left\|\Ql_\pm h\right\|_w
	\leq
	C\left\|h\right\|_{\lpt}
	\end{align*}
due to the decay assumption \eqref{weight} and to the estimate \eqref{ql-estimate}. Here, $C$ is a universal constant, depending only on the equilibrium. The case $\la=0$ is even simpler, since the operators $\Ql_\pm$ are replaced by the projection operators $\Proj^\pm$ which clearly have operator norm $=1$.

As for the symmetry: Clearly, $-\p_x^2$ is symmetric. As for the other terms, we use the properties of the projections $\Proj^\pm$ and the operators $\Ql_\pm$ to show symmetry. In the case $\la=0$, an example of one of the terms is
	\begin{align*}
	\sum_\pm\int_0^P\int\mu^\pm_e\hat{v}_2\Proj^\pm(\hat{v}_2h)\;dv\;k\;dx
	&=
	\sum_\pm\int_0^P\int\frac{\mu^\pm_e}{w}\hat{v}_2\Proj^\pm(\hat{v}_2h)w\;dv\;k\;dx\\
	&=
	\sum_\pm\int_0^P\int\frac{\mu^\pm_e}{w}\Proj^\pm(\hat{v}_2h)\Proj^\pm(\hat{v}_2k)w\;dv\;dx
	\end{align*}
which is symmetric. (Note that in this calculation we multiply and divide by $w$ since the projection operators are defined in $\lwt$). In the case $\la>0$, we use the `almost-symmetry' property of $\Ql_\pm$ proved in Lemma \ref{propql}\eqref{almost-sym}: $\left<\Ql_\pm m,n\right>_w=\left<m,\Ql_\pm \tilde{n}\right>_w$, which is actually a symmetry property if $m$ and $n$ are functions of $x$ alone. This is precisely the case here.

The discreteness of the spectrum of $\Ail,\;i=1,2,\la\geq0$, is due to the fact that $-\p_x^2$ defined in $\lpt$ has discrete spectrum, and all other terms are bounded symmetric operators in $\lpt$, and, therefore, are relatively compact perturbations of $-\p_x^2$. Thus, according to the Kato-Rellich theorem \cite[V, \S4.1]{kato}, $\Aonel$ and $\Atwol$, $\la\geq0$, are selfadjoint with domains $\hptz$ and $\hpt$, respectively. Since $-\p_x^2$ has pure point spectrum, so do the operators $\Ail$, in view of Weyl's theorem \cite[IV, Theorem 5.35]{kato}.

\item
	We calculate
		$$
		\|\Aonel h-\Aonez h\|_{\lpt}
		=
		\left\|\sum_\pm\int\mu^\pm_e\left(\Ql_\pm h-\Proj^\pm h\right)dv\right\|_{\lpt}\\
		\leq
		C\|\Ql_\pm h-\Proj^\pm h\|_w\\
		\to
		0,
		$$
as $\la\to0$, by Lemma \ref{propql}\eqref{laz}, and using the fact that $|\mu^\pm_e|$ are bounded by $w$. Here, $C=\sum_\pm\sup_x\left(\int|\mu^\pm_e|\;dv\right)$. The proof for $\Atwol$ follows in precisely the same way.

\item
	Let $h(x)\in\hpt$. Then one has
		$$
		\|\Aonel h-\Aones h\|_{\lpt}
		=
		\left\|\sum_\pm\int\mu^\pm_e\left(\Ql_\pm h-\Qs_\pm h\right)dv\right\|_{\lpt}\\
		\leq
		C\sum_\pm\|\Ql_\pm h-\Qs_\pm h\|_w\\
		\leq
		C|\ln{\la}-\ln{\sigma}|\;\|h\|_w.
		$$
	Again, the proof for $\Atwol$ follows in precisely the same way.

\item
		$$
		\|\Aonel h+\p_x^2h\|_{\lpt}
		=
		\left\|\sum_\pm\int\mu^\pm_e\left(\Ql_\pm h-h\right)\;dv\right\|_{\lpt}
		\leq
		C\sum_\pm\|\Ql_\pm h-h\|_w
		\to
		0
		$$
	as $\la\to\infty$, by Lemma \ref{propql}\eqref{lainfty}. Here $C=\sum_\pm\sup_x\left(\int\left|\mu^\pm_e\right|\;dv\right)$.

\item
	This is clearly true by verifying that $\Aonez1=\Aonel1=0$ since $\Ql1=1$.

\item
	The proof for $\Atwol$ is straightforward: By Lemma \ref{propql}\eqref{qlnorm}, $\Ql_\pm$ both have operator norm $1$. Thus, the only unbounded terms in $\Atwol$ are $-\p_x^2$ and $\la^2$, both positive, and the claim follows.
	
	The proof for $\Aonel$ is more delicate. Since the following calculations are quite lengthy, we drop the $\pm$, but the same proof still stands when both species are considered. We need to show that for $\la$ sufficiently large and for $h\in\hpt$,
	\be
	0
	\leq
	\left<\Aonel h,h\right>
	=
	\int_0^P(\p_xh)^2\;dx
	+
	\int_0^P\int_{\R^2}\mu_e\left(\Ql h-h\right)h\;dv\;dx.
	\en
Since the Laplacian on a periodic domain has a certain spectral gap $G=G(P)>0$, we need to show that
	\be\label{need-to-show}
	\left|\int_0^P\int_{\R^2}\mu_e\left(\Ql h-h\right)h\;dv\;dx\right|\leq G.
	\en
Letting $-\infty<\beta<0$ be some constant to be chosen later, we may write the left-hand-side of \eqref{need-to-show} as
	\begin{align}
	\left|\int_0^P\int_{\R^2}\mu_e\left(\Ql h-h\right)h\;dv\;dx\right|
	\leq&
	\left|\int_{-\infty}^\beta\la e^{\la s}\int_0^P\int_{\R^2}\mu_e\left(h(X(s,x,v)-h(x)\right)h(x)\;dv\;dx\;ds\right|\\
	&+
	\left|\int_{\beta}^0\la e^{\la s}\int_0^P\int_{\R^2}\mu_e\left(h(X(s,x,v)-h(x)\right)h(x)\;dv\;dx\;ds\right|\nonumber
	\end{align}
which we denote by $I$ and $II$ respectively. We first analyze the term $I$:
	\begin{align*}
	|I|
	&=
	\left|\int_{-\infty}^\beta\la e^{\la s}\int_0^P\int_{\R^2}\mu_e\left(h(X(s,x,v)-h(x)\right)h(x)\;dv\;dx\;ds\right|\\
	&\leq
	\left(\int_{-\infty}^\beta\la e^{\la s}\;ds\right)\sup_{-\infty<s<\beta}\left|\int_0^P\int_{\R^2}\mu_e(h(X(s,x,v))-h(x))h(x)\;dv\;dx\right|\\
	&=e^{-\la|\beta|}\sup_{-\infty<s<\beta}\left|\int_0^P\int_{\R^2}\mu_e\left(\int_x^{X(s,x,v)}\p_\xi h(\xi)\;d\xi\right)h(x)\;dv\;dx\right|.\\
	\end{align*}
Now, given $s<0$ define
	\be
	\bar{s}:=\max_{t<0}\left\{t\;|\;X(t,x,v)\in\left\{X(s,x,v)+mP,\;m\in\Z\right\}\right\}.
	\en
Then due to the periodicity in the $x$ variable, we can replace $s$ by $\bar{s}$ in the last integral, and we therefore have:
	\begin{align*}
	|I|
	&\leq
	e^{-\la|\beta|}\sup_{-\infty<s<\beta}\left|\int_0^P\int_{\R^2}\mu_e\left(\int_x^{X(\bar{s},x,v)}\p_\xi h(\xi)\;d\xi\right)h(x)\;dv\;dx\right|\\
	&\leq
	e^{-\la|\beta|}\|h\|_{L^2}\sup_{-\infty<s<\beta}\left(\int_0^P\left(\int_{\R^2}\left(\int_x^{X(\bar{s},x,v)}\left|\p_\xi h(\xi)\right|\;d\xi\right)\;\frac{dv}{(1+\left<v\right>)^\alpha}\right)^{2}\;dx\right)^{1/2}.
	\end{align*}
Since $|\dot{X}|=|\hat{V}_1|<1$ we further have that $|X(\bar{s},x,v)-x|\leq P$. We can therefore finally estimate
	\begin{align*}
	|I|
	&\leq
	e^{-\la|\beta|}\|h\|_{L^2}\sup_{-\infty<s<\beta}\left(\int_0^P\left(\int_{\R^2}P^{1/2}\left(\int_x^{X(\bar{s},x,v)}\left|\p_\xi h(\xi)\right|^2\;d\xi\right)^{1/2}\;\frac{dv}{(1+\left<v\right>)^\alpha}\right)^{2}\;dx\right)^{1/2}\\
	&\leq
	e^{-\la|\beta|}\|h\|_{L^2}\sup_{-\infty<s<\beta}\left(\int_0^P\left(\int_{\R^2}P^{1/2}\left(\int_0^{P}\left|\p_\xi h(\xi)\right|^2\;d\xi\right)^{1/2}\;\frac{dv}{(1+\left<v\right>)^\alpha}\right)^{2}\;dx\right)^{1/2}\\
	&=
	Pe^{-\la|\beta|}\|h\|_{L^2}\|\p_xh\|_{L^2}\|(1+\left<v\right>)^{-\alpha}\|_{L^1}.\\
	\end{align*}

Now we turn to estimating the term $II$:
	\begin{align*}
	|II|=&\left|\int_{\beta}^0\la e^{\la s}\int_0^P\int_{\R^2}\mu_e\left(h(X(s,x,v)-h(x)\right)h(x)\;dv\;dx\;ds\right|\\
	&\leq
	\left(\int_{-\infty}^0\la e^{\la s}\;ds\right)\sup_{\beta\leq s\leq0}\left|\int_0^P\int_{\R^2}\left|h(X(s,x,v)-h(x)\right||h(x)|\;\frac{dv}{(1+\left<v\right>)^\alpha}\;dx\right|\\
	&\leq
	\|h\|_{L^2}\sup_{\beta\leq s\leq0}\left(\int_0^P\left[\int_{\R^2}\left|h(X(s,x,v)-h(x)\right|\;\frac{dv}{(1+\left<v\right>)^\alpha}\right]^2\;dx\right)^{1/2}\\
	&\leq
	\|h\|_{L^2}\sup_{\beta\leq s\leq0}\left(\int_0^P\left[\int_{\R^2}\int_x^{X(s,x,v)} \left|\p_\xi h(\xi)\right|\;d\xi\;\frac{dv}{(1+\left<v\right>)^\alpha}\right]^2\;dx\right)^{1/2}.\\
	\end{align*}
We now again use the fact that $|\dot{X}|=|\hat{V}_1|<1$, deducing that $|X(s,x,v)-x|<|s|\leq|\beta|$. Therefore if $|\beta|\leq P$, we have
	\begin{align*}
	II
	&\leq
	\|h\|_{L^2}\sup_{\beta\leq s\leq0}\left(\int_0^P\left[\int_{\R^2}\int_x^{X(s,x,v)} \left|\p_\xi h(\xi)\right|\;d\xi\;\frac{dv}{(1+\left<v\right>)^\alpha}\right]^2\;dx\right)^{1/2}\\
	&\leq
	\|h\|_{L^2}\sup_{\beta\leq s\leq0}\left(\int_0^P\left[\int_{\R^2}|s|^{1/2}\left(\int_x^{X(s,x,v)} \left|\p_\xi h(\xi)\right|^2\;d\xi\right)^{1/2}\;\frac{dv}{(1+\left<v\right>)^\alpha}\right]^2\;dx\right)^{1/2}\\
	&\leq
	\|h\|_{L^2}\sup_{\beta\leq s\leq0}\left(\int_0^P\left[\int_{\R^2}|s|^{1/2}\left(\int_0^P \left|\p_\xi h(\xi)\right|^2\;d\xi\right)^{1/2}\;\frac{dv}{(1+\left<v\right>)^\alpha}\right]^2\;dx\right)^{1/2}\\
	&=
	P^{1/2}\|h\|_{L^2}\|\p_x h\|_{L^2}\|(1+\left<v\right>)^{-\alpha}\|_{L^1}|\beta|^{1/2}.
	\end{align*}

Combining our estimates for the two terms $I$ and $II$ we have:
	\begin{align*}
	\left|\int_0^P\int_{\R^2}\mu_e\left(\Ql h-h\right)h\;dv\;dx\right|
	&\leq
	\|(1+\left<v\right>)^{-\alpha}\|_{L^1}\|h\|_{L^2}\|\p_xh\|_{L^2}\left(	
	Pe^{-\la|\beta|}
	+
	P^{1/2}|\beta|^{1/2}\right)\\
	&\leq
	\|(1+\left<v\right>)^{-\alpha}\|_{L^1}KG\left(	
	Pe^{-\la|\beta|}
	+
	P^{1/2}|\beta|^{1/2}\right)
	\end{align*}
where $K$ is the best constant given by Poincar\'{e}'s inequality on $[0,P]$. Recalling \eqref{need-to-show}, we need to choose $\beta$ and $\Lambda$ such that
	\be	
	P^{1/2}e^{-\la|\beta|}
	+
	|\beta|^{1/2}
	\leq
	\frac{1}{\|(1+\left<v\right>)^{-\alpha}\|_{L^1}KP^{1/2}}.
	\en
This is easily satisfiable by letting $|\beta|=\Lambda^{-1/2}$ and taking $\Lambda$ sufficiently large that only depends on $P$ and $\alpha$. We conclude that for for any $0<\gamma<G$, there exists $\Lambda>0$ such that for all $\la>\Lambda$, $\Aonel>\gamma>0$.

\end{enumerate}
\end{proof}

\begin{lem}[Properties of $\Bl, \Cl, \Dl$]\label{properties2}
Let $0<\la<\infty$.

	\begin{enumerate}
	\item\label{blnorm}
	$\Bl$ maps $\lpt\to\lpt$ with operator bound independent of $\la$.
	
	\item\label{bz}
	For all $h(x)\in\lpt$, $\|\Bl h\|_{\lpt}\to0$ as $\la\to0$. The same is true for $\Cl, \Dl$.

	\item
	If $\sigma>0$, then $\|\Bl-\Bs\|=O(|\la-\sigma|)$ as $\la\to\sigma$, where $\|\cdot\|$ is the operator norm from $\lpt$ to $\lpt$. The same is true for $\Cl, \Dl$.

	\item\label{bl-infty}
	For all $h(x)\in\lpt$, $\|\Bl h\|_{\lpt}\to0$ as $\la\to\infty$. The same is true for $\Cl, \Dl$.
	
	\end{enumerate}
\end{lem}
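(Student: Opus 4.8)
The proof of all four parts runs on the same template already used for $\Aonel$ and $\Atwol$: each of $\Bl,\Cl,\Dl$ is a ($\la$‑independent) multiplication operator plus an integral operator assembled from $\Ql_\pm$, and the Cauchy--Schwarz computation in the proof of Lemma~\ref{properties1}\eqref{op-norm} — which uses only $|\mu^\pm_e|+|\mu^\pm_p|\le w$ from \eqref{weight} and $|\hat v_1|,|\hat v_2|\le1$ — reduces every statement to the corresponding property of $\Ql_\pm$ in Lemma~\ref{propql}. For part~\eqref{blnorm}, write $\Bl h=\big(\sum_\pm\int\mu^\pm_p\,dv\big)h+\sum_\pm\int\mu^\pm_e\Ql_\pm(\hat v_2h)\,dv$; the first term is multiplication by a function bounded by $\sum_\pm\sup_x\int|\mu^\pm_p|\,dv<\infty$, while $\big\|\int\mu^\pm_e\Ql_\pm(\hat v_2h)\,dv\big\|_{\lpt}\le C\|\Ql_\pm(\hat v_2h)\|_w\le C\|\hat v_2h\|_w\le C\|h\|_{\lpt}$, the middle step using $\|\Ql_\pm\|_{\lwt\to\lwt}=1$ from Lemma~\ref{propql}\eqref{qlnorm}; this bound is independent of $\la$.

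For part~\eqref{bz} the operators $\Cl,\Dl$ are easy: by Lemma~\ref{propql}\eqref{laz}, $\Ql_\pm(\hat v_1)\to\Proj^\pm(\hat v_1)$ in $\lwt$, and by Lemma~\ref{proj-parity} $\Proj^\pm(\hat v_1)$ is odd in $v_1$; since $\mu^\pm_e$ and $\hat v_2$ are even in $v_1$, the $v$‑integrals defining the limits of $\Cl b$ and $\Dl b$ vanish. The operator $\Bl$ is the delicate case. Using $\sum_\pm\int\mu^\pm_p\,dv=0$ (see below), $\Bl h=\sum_\pm\int\mu^\pm_e\Ql_\pm(\hat v_2h)\,dv$, which by Lemma~\ref{propql}\eqref{laz} and the estimate above converges in $\lpt$ to $\sum_\pm\int\mu^\pm_e\Proj^\pm(\hat v_2h)\,dv$, and one must show this is $0$. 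Let $S$ be the phase‑space reflection $v_2\mapsto-v_2$. A direct computation, which uses $E_1^0\equiv0$, shows $SD^+S=D^-$, hence $S\Proj^+S=\Proj^-$, while $\mu^+(e,p)=\mu^-(e,-p)$ gives $\mu^-_e=S\mu^+_e$ and $\mu^-_p=-S\mu^+_p$ as functions on phase space. Since $S(\hat v_2h)=-\hat v_2h$ when $h=h(x)$, applying $S$ inside the ``$-$'' summand and changing variables $v_2\mapsto-v_2$ turns it into minus the ``$+$'' summand, so the sum is $0$; the same manipulation gives $\sum_\pm\int\mu^\pm_p\,dv=0$ (and, via \eqref{fact}, $\sum_\pm\int\hat v_2\mu^\pm_e\,dv=0$).

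Parts~3 and~4 are then routine. For $\la\to\sigma>0$: the multiplication coefficients of $\Bl,\Cl,\Dl$ are $\la$‑independent, so $(\Bl-\Bs)h=\sum_\pm\int\mu^\pm_e(\Ql_\pm-\Qs_\pm)(\hat v_2h)\,dv$ (with $\hat v_1$ in place of $\hat v_2$ for $\Cl,\Dl$), and the same Cauchy--Schwarz estimate together with Lemma~\ref{propql}\eqref{lanotz} yields $\|\Bl-\Bs\|=O(|\la-\sigma|)$ and the analogous bounds for $\Cl,\Dl$, using $\|\hat v_1\|_w,\|\hat v_2\|_w<\infty$. For $\la\to\infty$: Lemma~\ref{propql}\eqref{lainfty} gives $\Ql_\pm(\hat v_2h)\to\hat v_2h$ and $\Ql_\pm(\hat v_1)\to\hat v_1$ in $\lwt$, whence $\Bl h\to h\big(\sum_\pm\int(\mu^\pm_p+\hat v_2\mu^\pm_e)\,dv\big)=0$ by \eqref{fact}, while the limiting integrals $\sum_\pm\int\mu^\pm_e\hat v_1\,dv$ and $\sum_\pm\int\hat v_2\mu^\pm_e\hat v_1\,dv$ vanish termwise by $v_1$‑parity, so $\Cl b,\Dl b\to0$.

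The only genuinely nontrivial step is the vanishing of the $\la\to0$ limit of $\Bl$ in part~\eqref{bz}: this is precisely where the purely‑magnetic symmetry $\mu^+(e,p)=\mu^-(e,-p)$ enters, through the reflection $S$ that intertwines $D^+$ with $D^-$ (hence $\Proj^+$ with $\Proj^-$) — here the hypothesis $E_1^0\equiv0$ is essential for $SD^+S=D^-$. Keeping track of the signs in that cancellation is the one place care is needed; everything else is an immediate consequence of Lemma~\ref{propql} together with the bounded‑perturbation estimate already established for $\Aonel$ and $\Atwol$.
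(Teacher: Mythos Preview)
Your proof is correct and follows essentially the same approach as the paper: the same Cauchy--Schwarz reduction to properties of $\Ql_\pm$, the same parity argument for $\Cl,\Dl$, and the same symmetry cancellation for the $\la\to0$ limit of $\Bl$ via the purely magnetic assumption. The only cosmetic difference is that you use the reflection $S:v_2\mapsto-v_2$ (giving $SD^+S=D^-$) whereas the paper uses $v\mapsto-v$ (giving $RD^+R=-D^-$), but both reflections intertwine $\ker D^+$ with $\ker D^-$ and yield the same cancellation.
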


\begin{proof}
\begin{enumerate}
\item
Let $h,k\in\lpt$.

	\begin{eqnarray*}
	\left|\left<\Bl h,k\right>_{\lpt}\right|
	&\leq&
	\sum_\pm\left|\int_0^P\left(\int\mu^\pm_p\;dv\right)h(x)k(x)\;dx\right|
	+\sum_\pm\left|\int_0^P\left(\int\mu^\pm_e\Ql_\pm(\hat{v}_2h)\;dv\right)k(x)\;dx\right|\\
	&=&
	I+II
	\end{eqnarray*}
where	
	$$
	I
	\leq
	\sum_\pm\int_0^P\left|\left(\int\mu^\pm_p\;dv\right)h(x)k(x)\right|\;dx
	\leq
	\sum_\pm\sup_x\left(\int\left|\mu^\pm_p\right|\;dv\right)\left\|h\right\|_{\lpt}\left\|k\right\|_{\lpt}
	\leq
	C\left\|h\right\|_{\lpt}\left\|k\right\|_{\lpt},
	$$
and
	\begin{eqnarray*}
	II
	&\leq&
	\sum_\pm\left\{\int_0^P\left(\int\mu^\pm_e\Ql_\pm(\hat{v}_2h)\;dv\right)^2dx\right\}^{1/2}\left\{\int_0^P\left|k(x)\right|^2\;dx\right\}^{1/2}\\
	&\leq&
	\sum_\pm\left\{\int_0^P\left[\int\left|\mu_e^\pm\right|\;dv\right]\left(\int\left|\mu^\pm_e\right|\left|\Ql_\pm(\hat{v}_2h)\right|^2\;dv\right)dx\right\}^{1/2}\left\|k\right\|_{\lpt}\\
	&\leq&
	\sum_\pm\left(\sup_{x}\int\left|\mu^\pm_e\right|\;dv\right)\left\|\Ql_\pm(\hat{v}_2h)\right\|_w\left\|k\right\|_{\lpt}
	\leq
	C\left\|\hat{v}_2h\right\|_w\left\|k\right\|_{\lpt}
	\leq
	C\left\|h\right\|_{\lpt}\left\|k\right\|_{\lpt}.\\
	\end{eqnarray*}
\item
To show that $\Bl\to0$ strongly in $\lpt$ we use the fact that $\Ql_\pm\to\Proj^\pm$ strongly as shown in Lemma \ref{propql}. We consider the two terms that make up $\Bl$ separately. As for the first term of $\Bl$, $\sum_\pm\int\mu^\pm_p(e,p^\pm)\;dv$: Since $\mu_{p}^+(e,p^+)=-\mu_{p}^-(e,-p^+)$ one has
	$$
	\int\left[\mu_p^+(e,p^+)+\mu_p^-(e,p^-)\right]dv
	=
	\int\left[-\mu_p^-(\left<v\right>,-v_2-\psiz)+\mu_p^-(\left<v\right>,v_2-\psiz)\right]dv.
	$$
The change of variables $v_2\to-v_2$ applied to $\mu_p^-(e,-p^+)$ yields cancellation.
Thus the first term of $\Bl$ is $0$. (We note that even though this term, which does not depend upon $\la$, vanishes, we keep it in this paper, as it arrises naturally from Maxwell's Equations.) The second term is slightly more involved. As mentioned above, since $\Ql_\pm\to\Proj^\pm$ strongly in $L^2_w$, we have that
	$$
	\sum_\pm\int\mu_e^\pm\Ql_\pm(\hat{v}_2h)dv\to\sum_\pm\int\mu_e^\pm\Proj^\pm(\hat{v}_2h)dv
	$$
strongly in $\lpt$ as $\la\to0$. Therefore, we now show that $\sum_\pm\int\mu_e^\pm\Proj^\pm(\hat{v}_2h)dv=0$. Observe that since $E_1^0\equiv0$, $g(x,v)\in\ker{D^-}$ if and only if $g(x,-v)\in\ker{D^+}$. Therefore $\Proj^-[g(x,v)]=g(x,v)$ if and only if $\Proj^+[g(x,-v)]=g(x,-v)$. We conclude that $\Proj^-[g(x,v)](x,-v)=g(x,-v)=\Proj^+[g(x,-v)](x,v)$. Hence the second term of $\Bl h$ converges to
	\begin{eqnarray*}
	\int\left(\mu_e^+\Proj^+[\hat{v}_2h](x,v)+\mu_e^-\Proj^-[\hat{v}_2h](x,v)\right)dv
	&=&
	\int\left(\mu_e^+\Proj^-[-\hat{v}_2h](x,-v)+\mu_e^-\Proj^-[\hat{v}_2h](x,v)\right)dv\\
	&=&
	\int\left(-\mu_e^+\Proj^-[\hat{v}_2h](x,-v)+\mu_e^-\Proj^-[\hat{v}_2h](x,v)\right)dv
	\end{eqnarray*}
where in the second equality we used the linearity of $\Proj^-$. Now, we know that $\mu_e^+(e,p^+)=\p_e\mu^+(\left<v\right>,v_2+\psi^0)=\p_e\mu^-(\left<v\right>,-v_2-\psi^0)$. Therefore, by changing variables $v\to-v$ in the first integrand we get exact cancellation once again. Therefore, indeed, $\Bl\to0$ strongly in $\lpt$.

The fact that $\|\Cl(b)\|_{\lpt}\to0$ (and similarly for $\Dl$) holds since (i) as $\la\to0$, $\|\Ql_\pm m-\Proj^\pm m\|_w\to0$, as we have seen in Lemma \ref{propql}, (ii) $\mu$ is \emph{even} in $v_1$, and, finally, (iii) the projection operators $\Proj^\pm$ preserve parity with respect to the variable $v_1$ by Lemma \ref{proj-parity}. Thus $\sum_\pm\int \mu^\pm_e\Proj^\pm(\hat{v_1})\;dv=0$. Combining these three facts, we get that $\|\Cl(b)\|_{\lpt}\to\|b\sum_\pm\int \mu^\pm_e\Proj^\pm(\hat{v_1})\;dv\|_{\lpt}=0$.

\item
The proof of this fact is identical to the proof for $\Aonel$.

\item

By the triangle inequality:
	$$
	\|\Bl h\|_{\lpt}
	\leq
	\left\|h\sum_\pm \int\left(\mu_p^\pm+\hat{v}_2\mu^\pm_e\right)dv\right\|_{\lpt}
	+\left\|\sum_\pm\int\mu^\pm_e\left(\Ql_\pm(\hat{v}_2h)-\hat{v}_2h\right)dv\right\|_{\lpt}
	=
	I+II
	$$
where $I$ vanishes due to \eqref{fact}. The term $II$ is controlled as follows:
	\begin{align*}
	&II
	\leq
	\sum_\pm\left(\int_0^P\left\{\int\left|\mu^\pm_e\left(\Ql_\pm(\hat{v}_2h)-\hat{v}_2h\right)\right|dv\right\}^2dx\right)^{1/2}\\
	&\leq
	\sum_\pm\left(\int_0^P\left\{\int\left|\mu^\pm_e\right|dv\int\left|\mu_e^\pm\right|\left|\Ql_\pm(\hat{v}_2h)-\hat{v}_2h\right|^2dv\right\}dx\right)^{1/2}\\
	&\leq
	\sum_\pm\left(\left[\sup_x\int\left|\mu^\pm_e\right|dv\right]\int_0^P\left\{\int\left|\mu_e^\pm\right|\left|\Ql_\pm(\hat{v}_2h)-\hat{v}_2h\right|^2dv\right\}dx\right)^{1/2}
	\leq
	C\sum_\pm\left\|\Ql_\pm(\hat{v}_2h)-\hat{v}_2h\right\|_w\to0
	\end{align*}
as $\la\to\infty$ by Lemma \ref{propql}.\\

As for $\Cl$, we summarize the argument:

	$$
	\left\|\Cl\right\|_{\lpt}
	=
	\left\|\sum_\pm\int\mu^\pm_e\Ql_\pm(\hat{v}_1)\;dv\right\|_{\lpt}
	\to
	\left\|\sum_\pm\int\mu^\pm_e\hat{v}_1\;dv\right\|_{\lpt}
	=
	0
	$$
as $\la\to\infty$ by Lemma \ref{propql} and the fact that $\mu$ is even in $v_1$. The same proof holds for $\Dl$.
\end{enumerate}
\end{proof}

\begin{lem}[Properties of $l^\la$]\label{propll}
Let $0<\la<\infty$.
	\begin{enumerate}
	\item
	$l^\la\to l^0$ as $\la\to0$.
	
	\item
	$l^\la$ is uniformly bounded in $\la$.
	
	\end{enumerate}
\end{lem}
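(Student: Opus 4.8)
The plan is to deduce both claims directly from the properties of $\Ql_\pm$ established in Lemma \ref{propql}, using the domination $|\mu^\pm_e|\le w$ from \eqref{weight} to pass between the $L^1$-type integrals defining $l^\la$ and the $\lwt$-estimates in which $\Ql_\pm$ is controlled. The first preliminary observation is that both the constant $1$ and the function $\hat v_1$ lie in $\lwt$, since $|\hat v_1|\le1$ and $\int_0^P\int w\,dv\,dx<\infty$ by \eqref{weight}; by Remark \ref{ql-domain} everything appearing below is then well defined.

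For the uniform bound in part (2), I would first record the pointwise estimate $|\Ql_\pm(\hat v_1)(x,v)|\le1$, valid for every $\la>0$ because $\int_{-\infty}^0\la e^{\la s}\,ds=1$ and $|\hat V_1^\pm(s;x,v)|\le1$ along every trajectory. Then
\[
|l^\la|\le\frac1P\sum_\pm\int_0^P\int|\hat v_1|\,|\mu^\pm_e|\,\big|\Ql_\pm(\hat v_1)\big|\,dv\,dx\le\frac1P\sum_\pm\int_0^P\int|\mu^\pm_e|\,dv\,dx\le\sum_\pm\sup_x\int|\mu^\pm_e|\,dv,
\]
which is finite by \eqref{weight} and visibly independent of $\la$.

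For part (1), I would write $l^\la-l^0=\frac1P\sum_\pm\int_0^P\int\hat v_1\mu^\pm_e\big(\Ql_\pm(\hat v_1)-\Proj^\pm(\hat v_1)\big)\,dv\,dx$, bound $|\hat v_1|\le1$ and $|\mu^\pm_e|\le w$, and then apply the Cauchy--Schwarz inequality in $\lwt$ against the constant $1$:
\[
|l^\la-l^0|\le\frac1P\sum_\pm\int_0^P\int w\,\big|\Ql_\pm(\hat v_1)-\Proj^\pm(\hat v_1)\big|\,dv\,dx\le\frac1P\sum_\pm\Big(\int_0^P\int w\,dv\,dx\Big)^{1/2}\big\|\Ql_\pm(\hat v_1)-\Proj^\pm(\hat v_1)\big\|_w.
\]
Invoking Lemma \ref{propql}\eqref{laz} with $m=\hat v_1\in\lwt$, each norm on the right-hand side tends to $0$ as $\la\to0$, and the claim follows.

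I do not expect a genuine obstacle here; the only step requiring a little care is the passage from the $L^1$-type integral $\int|\mu^\pm_e|\,|\cdot|\,dv\,dx$ occurring in the definition of $l^\la$ to the $\lwt$-norm estimate needed to invoke Lemma \ref{propql}\eqref{laz}, which is exactly what the domination $|\mu^\pm_e|\le w$ together with a weighted Cauchy--Schwarz step accomplishes. One could alternatively prove part (1) from the pointwise bound $|\Ql_\pm(\hat v_1)|\le1$ and dominated convergence after passing to a subsequence along which $\Ql_\pm(\hat v_1)\to\Proj^\pm(\hat v_1)$ almost everywhere, but the $\lwt$ argument above is cleaner and needs no subsequence.
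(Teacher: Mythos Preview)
Your proposal is correct and follows essentially the same approach as the paper: both parts are deduced directly from the properties of $\Ql_\pm$ in Lemma \ref{propql} together with $|\hat v_1|\le1$ and the integrability/domination of $\mu^\pm_e$ by $w$. Your only minor variation is that for part (2) you use the pointwise bound $|\Ql_\pm(\hat v_1)|\le1$ rather than the $\lwt$-operator norm bound the paper cites, but this is the same idea in slightly different clothing.
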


\begin{proof}
	\begin{enumerate}
	\item
	This is an immediate consequence of the properties of $\Ql_\pm$. As we have demonstrated how to use these properties above, we do not repeat the proof.
	
	\item
	This is due to the fact that $\Ql_\pm$ have operator norm $=1$, the integrability properties of $\mu^\pm_e$, and the fact that $\left|\hat{v}_1\right|\leq1$.
	\end{enumerate}
\end{proof}

The following lemma lists the important properties of $\Ml$ -- all of which are inherited directly from the properties of the various operators it is made up of, as listed in Lemma \ref{properties1}. 
\begin{lem}[Properties of $\Ml$]\label{propml}
To simplify notation, we write $u^T$ for a generic element $\left(\phi,\psi,b\right)\in\hpt\times\hpt\times\R$.
\begin{enumerate}
\item
For all $\la\geq0$, $\Ml$ is selfadjoint on $\lpt\times\lpt\times\R$ with domain $\hpt\times\hpt\times\R$.

\item
For all $u^T\in\hpt\times\hpt\times\R$, $\|\Ml u-\Mz u\|_{\lpt\times\lpt\times\lpt}\to0$ as $\la\to0$.

\item
If $\sigma>0$, then $\|\Ml-\Ms\|\to0$ as $\la\to\sigma$, where $\|\cdot\|$ is the operator norm from $\hptz\times\hpt\times\R$ to $\lpt\times\lpt\times\lpt$.

\end{enumerate}
\end{lem}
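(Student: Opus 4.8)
The plan is to exploit the block structure of $\Ml$ and to reduce each of the three assertions to the entrywise facts already established in Lemmas \ref{propql}, \ref{properties1}, \ref{properties2} and \ref{propll}, together with the formulas for the adjoints $(\Bl)^*$, $(\Cl)^*$, $(\Dl)^*$ derived earlier. Write $\Ml=\mathcal{D}^\la+\mathcal{K}^\la$, where $\mathcal{D}^\la=\mathrm{diag}\bigl(-\Aonel,\,\Atwol,\,-P(\la^2-l^\la)\bigr)$ and $\mathcal{K}^\la$ is the off-diagonal remainder with entries $\Bl$, $(\Bl)^*$, $\Cl$, $(\Cl)^*$, $-\Dl$, $-(\Dl)^*$. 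For part (1): the first two diagonal blocks of $\mathcal{D}^\la$ are selfadjoint with domains $\hptz$ and $\hpt$ by Lemma \ref{properties1}\eqref{op-norm} (extending $-\Aonel$ by $0$ on the constants), and the third is multiplication by the real number $-P(\la^2-l^\la)$, so $\mathcal{D}^\la$ is selfadjoint with domain $\hpt\times\hpt\times\R$ (equivalently $\hptz\times\hpt\times\R$ after the restriction of Remark \ref{remark}). The perturbation $\mathcal{K}^\la$ is bounded on $\lpt\times\lpt\times\R$, since $\Bl,\Cl,\Dl$ are bounded by Lemma \ref{properties2}\eqref{blnorm} and an adjoint has the same norm, and it is symmetric, since computing $(\mathcal{K}^\la)^*$ as the transposed block matrix of entrywise adjoints returns $\mathcal{K}^\la$ (here $((\Bl)^*)^*=\Bl$, etc.). The Kato--Rellich theorem then makes $\Ml=\mathcal{D}^\la+\mathcal{K}^\la$ selfadjoint on the domain of $\mathcal{D}^\la$.

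For part (2), expand $(\Ml-\Mz)u$ componentwise for $u^T=(\phi,\psi,b)$. The diagonal contributions $-(\Aonel-\Aonez)\phi$, $(\Atwol-\Atwoz)\psi$ and $-P\la^2b+P(l^\la-l^0)b$ tend to $0$ in $\lpt$, $\lpt$ and $\R$ respectively by Lemma \ref{properties1}(2) and Lemma \ref{propll}(1). The off-diagonal contributions $\Bl\psi$, $\Cl b$, $\Dl b$ vanish in the limit by Lemma \ref{properties2}\eqref{bz}. It remains to show that $(\Bl)^*\phi$, $(\Cl)^*\phi$, $(\Dl)^*\psi$ tend to $0$: using the formulas recorded above and $\Ql_\pm\to\Proj^\pm$ strongly (Lemma \ref{propql}\eqref{laz}), $(\Cl)^*k$ and $(\Dl)^*k$ converge to $\sum_\pm\int_0^P\!\int\mu^\pm_e\Proj^\pm(\hat{v}_1)\,k\,dv\,dx$ and $\sum_\pm\int_0^P\!\int\hat{v}_2\mu^\pm_e\Proj^\pm(\hat{v}_1)\,k\,dv\,dx$, both of which vanish because $\mu^\pm_e$ and $\hat{v}_2\mu^\pm_e$ are even in $v_1$ while $\Proj^\pm(\hat{v}_1)$ is odd in $v_1$ by Lemma \ref{proj-parity}; and $(\Bl)^*k\to\sum_\pm\int\mu^\pm_e\hat{v}_2\Proj^\pm k\,dv$, which vanishes by the same species-interchange cancellation (via $p\mapsto-p$ and $E_1^0\equiv0$) that is used in the proof of Lemma \ref{properties2}\eqref{bz} to show $\Bl\to0$ strongly. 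Summing the $\lpt$-norms of the three rows gives $\|\Ml u-\Mz u\|\to0$.

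For part (3), work entrywise once more. By Lemma \ref{properties1}(3), $\|\Aonel-\mathcal{A}_1^\sigma\|$ and $\|\Atwol-\mathcal{A}_2^\sigma\|$ are $O(|\la-\sigma|)$ as operators $\hptz\to\lpt$ and $\hpt\to\lpt$; by Lemma \ref{properties2}(3), $\|\Bl-\Bs\|$, and similarly for $\Cl$ and $\Dl$, are $O(|\la-\sigma|)$, and passing to adjoints preserves these operator norms; $|l^\la-l^\sigma|=O(|\la-\sigma|)$ follows from $\|\Ql_\pm-\Qs_\pm\|=O(|\la-\sigma|)$ (Lemma \ref{propql}\eqref{lanotz}) exactly as in the proof of Lemma \ref{properties2}(3), and $|\la^2-\sigma^2|=O(|\la-\sigma|)$ is trivial, so the scalar entry $-P(\la^2-l^\la)$ is locally Lipschitz in $\la$. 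A $3\times3$ block operator whose finitely many entries all have operator norm $O(|\la-\sigma|)$ has operator norm $O(|\la-\sigma|)$ between the indicated product spaces (using $\|\cdot\|_{\lpt}\le\|\cdot\|_{\hpt}$ on the off-diagonal columns), so $\|\Ml-\Ms\|\to0$ as $\la\to\sigma$.

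The only step that is not pure bookkeeping is the strong convergence $(\Bl)^*\to0$ in part (2): unlike $\Bl$ itself, its adjoint is not covered verbatim by Lemma \ref{properties2}, so one has to reprise the parity/symmetry cancellation used there, now with the two species exchanged through $p\mapsto-p$ and with $E_1^0\equiv0$ ensuring $g\in\ker D^-\iff g(x,-v)\in\ker D^+$. Everything else is assembling the block estimates and invoking Kato--Rellich.
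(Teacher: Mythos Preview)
Your proof is correct and follows essentially the same approach as the paper's: reduce each assertion to the entrywise properties already established in Lemmas \ref{properties1}, \ref{properties2} and \ref{propll}. Your treatment is in fact more careful than the paper's own proof, which simply asserts that ``each of the terms on the right hand side tends to $0$'' without separately addressing the adjoint entries $(\Bl)^*,(\Cl)^*,(\Dl)^*$; you correctly identify that these are not covered verbatim by Lemma \ref{properties2}\eqref{bz} and supply the missing parity/species-interchange argument.
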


\begin{proof}
\begin{enumerate}
\item
This is true due to the structure of $\Ml$ and the selfadjointness of $\Aonel$ and $\Atwol$ for $\la\geq0$.

\item
We note that

$$
\left(\Ml-\Mz\right)u
=
\matthree{-\Aonel+\Aonez}{\Bl}{\Cl}{\left(\Bl\right)^*}{\Atwol-\Atwoz}{-\Dl}{\left(\Cl\right)^*}{-\left(\Dl\right)^*}{-P\left(\la^2-l^\la\right)-Pl^0}
\left(\begin{array}{c}\phi\\\psi\\ b\end{array}\right).
$$

Thus

\begin{eqnarray*}
\|\left(\Ml-\Mz\right)u\|_{\lpt\times\lpt\times\lpt}
&\leq&
\|\left(-\Aonel+\Aonez\right)\phi\|+\|\Bl\psi\|+\|\Cl b\|\\
&&+\|\left(\Bl\right)^*\phi\|+\|\left(\Atwol-\Atwoz\right)\psi\|+\|\Dl b\|\\
&&+\|\left(\Cl\right)^*\phi\|+\|\left(\Dl\right)^*\psi\|+\|\left(-P\left(\la^2-l^\la\right)-Pl^0\right)b\|.
\end{eqnarray*}
where all norms on the right hand side are in $\lpt$.
Since each of the terms on the right hand side tends to $0$ as $\la\to0$, we are done.
\item
We want to show that $\|\Ml-\Ms\|\to0$ as $\la\to\sigma$. This, again, is true by virtue of the fact that this is true for each of the entries of $\Ml$ separately.
\end{enumerate}
\end{proof}

\textit{Acknowledgements:} This paper is part of the author's Ph.D. thesis, supervised by Professor Walter Strauss. It is a great pleasure to thank Professor Strauss for his patient and kind guidance, his generous advice and the numerous hours he spent going over all details during the writing of this work.

\end{document}